\newcommand{\R}{\mathbb{R}}
\newcommand{\Z}{{\mathbb Z}}
\newcommand{\support}[1]{\ensuremath{A\left(#1\right)}}
\newcommand{\newton}[1]{\ensuremath{\operatorname{N}(#1)}}
\newcommand{\Vector}[1]{\ensuremath{#1}}
\newcommand{\coef}[2]{\ensuremath{\operatorname{coeff}(#1,#2)}}
\newcommand{\New}{\newton}
\newcommand*{\medcap}{\mathbin{\scalebox{1.5}{\ensuremath{\cap}}}}
\newcommand*{\medcup}{\mathbin{\scalebox{1.5}{\ensuremath{\cup}}}}
\DeclareMathOperator*{\sign}{sign}
\DeclareMathOperator{\conv}{conv}
\renewcommand{\k}{{\kappa}}
\definecolor{NiceBlue}{rgb}{0.2,0.2,0.75}
\newif\ifstartedinmathmode
\newcommand{\struc}[1]{{\relax\ifmmode\startedinmathmodetrue\else\startedinmathmodefalse\fi\color{NiceBlue}{\ifstartedinmathmode #1 \else\textit{#1}\fi}}}
\newtheorem{theorem}{Theorem}
\numberwithin{theorem}{section}
\newtheorem{lemma}[theorem]{Lemma}
\newtheorem{proposition}[theorem]{Proposition}
\newtheorem{corollary}[theorem]{Corollary}
\theoremstyle{definition}
\newtheorem{definition}[theorem]{Definition}
\newtheorem{example}[theorem]{Example}
\newtheorem{remark}[theorem]{Remark}
\begin{document}

	\title{Parameter region for multistationarity in $n-$site phosphorylation networks}
	
	\author[E. Feliu, N. Kaihnsa, T. de Wolff, O. Y\"ur\"uk]{Elisenda Feliu$^{1}$, Nidhi Kaihnsa$^{2}$, Timo de Wolff$^3$, O\u{g}uzhan Y\"ur\"uk$^4$}

	\subjclass[2010]{92Bxx, 14Pxx, 37N25, 52B20, 90C26}
	
	\keywords{Phosphorylation network, Multistationarity, Parameter Estimation, Chemical reaction network, Real algebraic geometry, Circuit polynomial, Connectedness}
	
	
	\footnotetext[1]{Department of Mathematical Sciences, University of Copenhagen. efeliu@math.ku.dk}
	
	\footnotetext[2]{Division of Applied Mathematics, Brown University. nidhi\_kaihnsa@brown.edu}

	\footnotetext[3]{Institute of Analysis and Algebra, TU Braunschweig. t.de-wolff@tu-braunschweig.de}
	
	\footnotetext[4]{Chair of Discrete Mathematics/Geometry, Technische Universit\"at Berlin. yuruk@math-tu.berlin.de}
	
	\tikzset{every node/.style={auto}}
	\tikzset{every state/.style={rectangle, minimum size=0pt, draw=none, font=\normalsize}}
	\tikzset{bend angle=15}
	
	\maketitle
	
\begin{abstract}
Multisite phosphorylation is a signaling mechanism well known to give rise to multiple steady states, a property termed multistationarity. When phosphorylation occurs in a sequential and distributive manner, we obtain a family of networks indexed by the number of phosphorylation sites $n$. This work addresses the problem of understanding the parameter region where this family of networks displays multistationarity, by focusing on the projection of this region onto the set of kinetic parameters. The problem is phrased in the context of real algebraic geometry and reduced to studying whether a polynomial, defined as the determinant of a parametric matrix of size three, attains negative values over the positive orthant. The coefficients of the polynomial are functions of the kinetic parameters.
For any $n$,  we provide sufficient conditions for the polynomial to be positive and hence, preclude multistationarity,  and also  sufficient conditions for it to attain negative values and hence, enable multistationarity. These conditions are derived by exploiting the structure of the polynomial, its Newton polytope, and employing circuit polynomials. A relevant consequence of our results is that the set of kinetic parameters that enable or preclude multistationarity are both connected for all $n$.
\end{abstract}

\section{\bf Introduction}

Given a dynamical system, the notion of {\em multistationarity} {refers to} the existence of multiple steady states or equilibrium points. Often in biological processes, multistationarity has been associated with and found relevant for cellular decision making processes \cite{laurent1999,ozbudak2004,Xiong:2003jt}. This has led to an extensive interest in developing methods to identify whether parametric models	arising from reaction networks admit multistationarity, and to determine and understand the parameter region where this occurs, e.g.	\cite{Feinbergss,feliu_newinj,wiuf-feliu,PerezMillan,conradi-PNAS,craciun2008,control,crnttoolbox,FKWY,dickenstein:regions,Conradi_Iosif_Kahle,bates_gunawardena}.
	
In this work, we focus on understanding the rate parameters that allow or restrict the multistationarity in a phosphorylation and dephosphorylation network where a substrate has $n$ phosphorylation sites. 
For two sites, that is, for $n=2$, {we obtain the so-called} dual phosphorylation cycle 
\begin{align}\label{eq:network}
	\begin{split}
	E+S_0 \ce{<=>[\k_1][\k_2]} ES_0 \ce{->[\k_3]} E+S_1   \ce{<=>[\k_7][\k_8]} ES_1 \ce{->[\k_9]} E+S_2 \\
	F+S_2   \ce{<=>[\k_{10}][\k_{11}]} FS_2 \ce{->[\k_{12}]} F+S_1  \ce{<=>[\k_4][\k_5]} FS_1 \ce{->[\k_6]} F+S_0.
	\end{split}
\end{align}
In this system, $S$, $E$, and $F$ denote substrate, kinase, and phosphatase, respectively. Kinase and phosphatase catalyze the phosphorylation and dephosphorylation of the substrate. The subscripts 0, 1, and 2 of the phosphoforms $S_0,S_1$, and $S_2$ denote the number of phosphorylated sites of the substrate $S$. 
		
Over the years, under the assumption of mass-action kinetics, the process of dual phosphorylation has been studied rigorously by means of polynomial ordinary differential equations (ODEs) \cite{dickenstein:regions,FeliuPlos, conradi-mincheva, FKWY}. Besides its relevance in biology, this network has become testing grounds for developing various mathematical methods focused towards understanding multistationarity and dynamics of reaction networks in general. In \cite{FKWY}, the authors studied the region of reaction rate constants (the $\k's$ in \eqref{eq:network}) that yield multistationarity, giving a description of the boundary between the region of  parameters that enable multistationarity and the region of parameters that preclude multistationarity. Moreover, it was shown that both regions are path connected. 
	
\smallskip
In this article we explore the general $n$-phosphorylation network, {\em i.e.},  the substrate has $n$ sites available for phosphorylation, and the network takes the following form:
\begin{small}
	\begin{equation}\label{eq:networknew}
		\begin{aligned}
		E+ S_0   \ce{<=>[\k_1][\k_2]} & ES_0  \ce{->[\k_3]} E+ S_1  \cdots \ce{->}E+  S_i   \ce{<=>[\k_{6i+1}][\k_{6i+2}]} ES_i \ce{->[\k_{6i+3}]} E+ S_{i+1}  \cdots  \hspace{1cm} \\ & \hspace{5cm} \cdots \ce{->}E+ S_{n-1}  \ce{<=>[\k_{6n-5}][\k_{6n-4}]} ES_{n-1} \ce{->[\k_{6n-3}]} E+  S_n \\
		F+S_n  \ce{<=>[\k_{6n-2}][\k_{6n-1}]} & FS_n \ce{->[\k_{6n}]} F+S_{n-1} \cdots \ce{->}F+S_{i+1}  \ce{<=>[\k_{6i+4}][\k_{6i+5}]} FS_{i+1} \ce{->[\k_{6i+6}]} F+S_{i}  \cdots  \hspace{1cm} \\&  \hspace{5cm}\cdots \ce{->}F+ S_{1}  \ce{<=>[\k_{4}][\k_{5}]} FS_{1} \ce{->[\k_{6}]} F+S_0.
		\end{aligned}
	\end{equation}
\end{small}%
Multisite phosphorylation, or more general, multisite posttranslational modification, is a ubiquitious process in cell signaling. It is believed that  approximately $30\%$ of all proteins in humans undergo phosphorylation \cite{cohen}, and there are proteins having more than $150$ phosphorylation sites \cite{phosida}.  The order in which the sites are phosphorylated, or how many encounters with the enzyme are required for the phosphorylation of the multiple sites, give rise to different mechanisms \cite{SH09}. Network \eqref{eq:networknew} is a representative model where phosphorylation and dephosphorylation occur sequentially, that is, in a given order, and each encounter leads to the modification of one site, that is, it is distributive. For this mechanism it is known that the network has finitely many positive steady states (in a stoichiometric compatibility class) for any choice of reaction rate constants $\k$, specifically between $1$ and   $2n-1$ \cite{Wang:2008dc}, and that the possible number increases with $n$ \cite{Wang:2008dc,gunawardena:unlimited}. It is conjectured that there exist parameter values for which the system has $2n-1$ positive steady states, but this has only been established for $n\leq 4$ \cite{FHC14}. For general $n$, it has   been proven that there exist parameter values for which the system admits $n+1$ ($n$ even) or $n$ ($n$ odd) positive steady states with half plus one of them asymptotically stable \cite{Wang:2008dc,FRW:stability}. These properties illustrate that the dynamics of the system become qualitatively richer as $n$ increases, but while much is known about the case $n=2$,  very little has been established about the dependence of the dynamics on the choice of parameters for general $n$. 

With this work we address the problem of determining  for which reaction rate constants  $\k_i$, $i \in \{1,\ldots, 6n\}$, the $n$-phos\-pho\-ry\-lation network can have multiple steady states, that is, there exist total amounts of kinase, phosphatase and substrate for which the system has at least two positive steady states. 
In other words, we study the projection of the parameter region of multistationarity onto the subset of parameters consisting of reaction rate constants. 
In \cite{dickenstein:regions} the authors study the projection of this region onto a different subset of parameters, including the three total amounts and some of the reaction rate constants. Their methods are similar in spirit to ours: they rely on connecting properties of the Newton polytope associated to a parametric system of three equations in three variables and certain sign combinations, to obtain a lower bound on the maximal number of positive solutions the system can attain; see also \cite{bihan:santos,giaroli:2n}.

Addressing this question for a fixed $n$ requires challenging computations, since the number of parameters and variables become very large as $n$ increases. However, for any $n$, the differential equation system associated with the network admits exactly three linear first integrals and the set of steady states admits a nice parametrization. These key facts allow us to reduce the study of multistationarity to studying whether a polynomial in three variables attains negative values over the positive orthant  (Proposition~\ref{prop:multi}). This question can in turn be addressed by exploring the Newton polytope associated with the polynomial (Theorem~\ref{thm:NPverts}), and in the end, the multistationarity problem can be reduced to studying the signs that a bivariate polynomial attains (Corollary~\ref{cor:crit}). 
	
With this in place, we proceed similarly to the study of the case $n=2$ in \cite{FKWY}, and relate the signs the polynomial attains to the signs of the coefficients, which depend on the $\k$'s, and to the point configuration of the exponents of the polynomial. 
Imposing that at least one of the coefficients corresponding to vertices of the Newton polytope is negative, we obtain   sufficient conditions for multistationarity (Theorem~\ref{thm:suf}). However, these are not necessary as Theorem~\ref{thm:interiornegative} shows: multistationarity can arise {even} when the conditions in Theorem~\ref{thm:suf} are not satisfied and only points in the interior of the Newton polytope have negative coefficients.
	
To preclude multistationarity, we note that a polynomial can be nonnegative even if some points in the interior of its Newton polytope have negative coefficients. We use SONC (sums of nonnegative circuit) polynomials as a symbolic nonnegativity certificate in order to describe a non-empty region in the parameter space that ensures the monostationarity of the system, {and where the polynomial has some of the coefficients negative} (Theorem~\ref{Proposition:CircuitNonEmptyIntersection}).  

{Our results find subregions of the region of multistationarity and of the region of monostationarity, and indicate that the set of rate parameters that enable multistationarity is not simply found by looking at the sign of the coefficients of the polynomial. For the parameters that fall outside these subregions, our tools cannot characterize what happens, and this asks for the development of new techniques. }

{Even if a full description of the projection of the parameter region of multistationarity of the $n$-phosphorylation network onto the set of reaction rate constants is out of reach, we can still conclude that it is path-connected.  The path-connected components of the parameter region of multistationarity can be thought to represent different biological mechanisms that give rise to multistationarity, as it has recently been brought up in \cite{bates_gunawardena,telek:connectivity}. For example, if the region of multistationarity is connected, then the set of such parameters cannot be classified into two groups corresponding to different mechanisms. This has been proven to be the case for the dual phosphorylation cycle in \cite{telek:connectivity}, but it remains open for $n>2$. 
}

{We study connectivity in Section~\ref{section:connected}, and show in Theorem~\ref{thm:multiconnected} and Corollary~\ref{cor:monoconnected} that the projection of the full parameter region of multistationarity onto the set of reaction rate constants is path connected, and the same holds for the region of monostationarity. To do that, we use the results in the previous sections to identify subregions that are path connected, and then show that any other point is joined to a point in the subregion via a continuous path. 
It is our hope that knowing that these projections are path connected might give a route to show that the full regions are also path connected.   }

Throughout this work, we often view the relevant polynomial, whose coefficients depend on the $\k$'s, as a polynomial in some of the parameters as well, and then employ standard techniques involving the Newton polytope. This is for example the case in showing path connectivity. These ideas should apply to other systems for which the study of multistationarity is reduced to the study of the signs a polynomial attains (those in the setting of \cite{FeliuPlos}). Hence, the methods brought forward in this manuscript may have consequences  beyond the understanding of the system in play.

\section{\textbf{Preliminaries}}\label{section:prelim}	
	
\subsection{\bf The parametric system}

We consider the reaction network \eqref{eq:networknew} and describe the ODE system   that governs evolution of  the concentration of the species in time \cite{FRW:stability,Wang:2008dc,dickenstein:regions}. We then lay the groundwork to derive a polynomial for this system whose positivity will determine the monostationarity of network \eqref{eq:networknew}. 
{We note that network \eqref{eq:networknew}  belongs to a general class of reaction networks called MESSI systems, see \cite{Dickenstein-MESSI}. MESSI systems admit specific decompositions of the sets of species and reactions, and these can be exploited to guarantee that relevant properties, such as absence of boundary steady states or persistence, hold. We will use this fact later on. }

The concentrations of the species in the network are denoted as follows:
	\begin{align*} 
	e=[E],\quad f=[F], \quad  s_i=[S_i]  &\qquad\textrm{for }i=0,\ldots,n, \\ 
	u_{i}=[FS_{i+1}], \quad y_i=[ES_i]  &\qquad\textrm{for } i = 0, \ldots, n-1.
	\end{align*}
Under the assumption of mass-action kinetics, the ODE system in $\R^{3n+3}_{\geq 0}$ is as follows: 
\begin{equation}
		\label{eq:ode}
		\begin{aligned}			 
		\tfrac{de}{dt} &=-\sum_{i=0}^{n-1} \k_{6i+1}\, s_i e +  \sum_{i=0}^{n-1} \k_{6i+1} (\k_{6i+2} + \k_{6i+3})\,  y_i, \\
		\tfrac{df}{dt} & =-\sum_{i=0}^{n-1} \k_{6i+4}\,  s_{i+1} f + \sum_{i=0}^{n-1} (\k_{6i+5} + \k_{6i+6})\, u_i ,\\	
		\tfrac{ds_i}{dt} &= - \k_{6i+1}\, s_i e \k_{6i-3}\,  y_{i-1} + \k_{6i+2}\,  y_i  - \k_{6i-2}\,  s_i f + \k_{6i+6}\,  u_i+ \k_{6i-1}\,  u_{i-1},&& \text{for }i=0,\dots,n, \\ 
		\tfrac{dy_i}{dt} & = \k_{6i+1}\, s_i e - (\k_{6i+2} + \k_{6i+3})\,  y_i, &&\text{for }i = 0,\dots,n-1,   \\
		\tfrac{du_i}{dt} &= \k_{6i+4}\,  s_{i+1} f - (\k_{6i+5} + \k_{6i+6})\, u_i,&& \text{for }i=0,\dots,n-1, \\
		\end{aligned}
		\end{equation}
with the convention that $\k_j=0$ if $j>6n$ or $j<0$ (this becomes relevant only for $\tfrac{ds_0}{dt}$ and $\tfrac{ds_n}{dt}$). This system of ODEs admits three linear first integrals, such that the trajectories are confined in the following level sets for some $E_{\rm tot}, F_{\rm tot}, S_{\rm tot}\geq 0$: 
\begin{equation}\label{eq:cons_laws}
	e+ \sum_{i=0}^{n-1}y_i = E_{\rm tot},\qquad f+\sum_{i=0}^{n-1}u_i=F_{\rm tot},\qquad  s_0+\sum_{i=1}^{n}s_i+\sum_{i=0}^{n-1}y_i+\sum_{i=0}^{n-1}u_i= S_{\rm tot}.
\end{equation}
{It is straightforward to verify that these equations arise from  linear first integrals, and there are no additional independent linear first integrals, as the rank of the coefficient matrix of the polynomials on the right-hand side of \eqref{eq:ode} is three for all $\k$'s; alternatively, see e.g. \cite[Theorem 3.2]{Dickenstein-MESSI}.}
Each of the equations in \eqref{eq:cons_laws} is called a \emph{conservation law} and $E_{\rm tot}, F_{\rm tot}$ and $S_{\rm tot}$ are the \emph{total amounts} of $E,F,$ and $S$, respectively. These are, therefore, taken nonnegative. The intersection of a level set with the nonnegative orthant is called a \emph{stoichiometric compatibility class}. In particular,  the ODE system has $3n+3$ variables and for a given initial condition, the trajectory lies in a $3n$ dimensional subspace determined by the three linear constraints in \eqref{eq:cons_laws}.

The steady states of the system are found by setting the left-hand side of \eqref{eq:ode} to zero. Using the notation in \cite{dickenstein:regions}, we  consider the inverses of Michaelis-Menten constants of the (de)phosphorylation events
\begin{equation}\label{eq:KL}
	K_i=\frac{\k_{6i+1}}{\k_{6i+2}+\k_{6i+3}}, \qquad  L_i=\frac{\k_{6i+4}}{\k_{6i+5}+\k_{6i+6}},\quad \quad i=0,\ldots, n-1, 
\end{equation}
and define 
\begin{equation}\label{eq:T}
	T_i=\prod_{j=0}^{i}\frac{\k_{6j+3}K_j}{\k_{6j+6}L_j} =T_{i-1}\, \frac{\k_{6i+3}K_i}{\k_{6i+6}L_i} , \quad i=0,\ldots, n-1, \qquad T_{-1}=1.
\end{equation}
{We note that here, the definition of $K_i$  is the inverse of that given in \cite{FKWY} for $n=2$, but it turns out to be more convenient to use the notation of \cite{dickenstein:regions} for general $n$.} For later use, we define  the following composition of surjective maps $\eta \colon \R^{6n}_{>0}    \rightarrow \R^{4n}_{>0} \rightarrow  \R^{3n}_{>0}$
\begin{equation}\begin{aligned}\label{eq:assembly}
\kappa = (\k_1,\dots,\k_{6n}) &\mapsto  (\k_3,\k_6,\ldots,\k_{6n}, K_0,\ldots,K_{n-1},L_0,\ldots,L_{n-1})  \\ & \mapsto 
 (T_0,\ldots,T_{n-1}, K_0,\ldots,K_{n-1},L_0,\ldots,L_{n-1}). 
\end{aligned}
\end{equation}

As a consequence of the three conservation laws, the equations corresponding to $e$, $f$, and $s_0$ are redundant as they are linearly dependent on the rest. The steady state equations for $s_{i+1},y_i,u_i$ with $i=0,\dots,n-1$ are linear in these variables, and have a unique solution, which is positive provided $e,f,s_0$ are positive:
\begin{align*}	 
	s_{i+1}  & = T_{i} e^{i+1} f^{-(i+1)} s_0 & 
	y_i &  = K_i T_{i-1} e^{i+1} f^{-i} s_0 \label{eq:solsystem} &
	u_i &  =L_i T_{i} e^{i+1} f^{-i} s_0. 
\end{align*}
This gives rise to a parametrization of the set of positive steady states in the variables $e,f,s_0$,
\begin{eqnarray}\label{eq:param}
	\varphi_\k \colon \R^3_{>0} & \longrightarrow & \R^{3n+3}_{>0},
\end{eqnarray}
that is, the image of $\varphi_\k$ is precisely the set of positive steady states. 
For later use, we consider {the polynomial function
\begin{equation}\label{eq:psi}
	\psi_\k\colon \R^{3n+3} \longrightarrow \R^{3n+3},
\end{equation}
whose first three entries are $(e+ \sum_{i=0}^{n-1}y_i , f+\sum_{i=0}^{n-1}u_i,  s_0+\sum_{i=1}^{n}s_i+\sum_{i=0}^{n-1}y_i+\sum_{i=0}^{n-1}u_i)$, that is, the left-hand side of the three conservation laws listed in \eqref{eq:cons_laws}, and the entries $4$ to $3n+3$ are the right-hand side of the equations for $\tfrac{ds_i}{dt}$ for $i=1,\dots,n$, $\tfrac{dy_i}{dt}$ for $i=0,\dots,n-1$, and $\tfrac{du_i}{dt}$ for $i=0,\dots,n-1$ in \eqref{eq:ode}, in this order}.

In a fixed stoichiometric compatibility class with positive total amounts, this system has finitely many positive steady states for any choice of reaction rate constants $\k$, specifically between $1$ and   $2n-1$ \cite{Wang:2008dc}, see the Introduction. 
	
For a given vector of reaction rate constants $\k$, if there exist positive $E_{tot}, F_{tot},$ and $S_{tot}$ such that some stoichiometric compatibility class has at least two positive steady states, then we say that $\k$  {\em enables} multistationarity. If this is not the case, then $\k$ is said to {\em preclude} multistationarity. For $n=2$, the set of reaction rate constants that enable multistationarity was explored in detail in \cite{FKWY}, building on preliminary results by Conradi and Mincheva~\cite{conradi-mincheva}.	In this work we go beyond the case $n=2$ and determine subsets of reaction rate constants that enable or preclude multistationarity for  general $n$, that is, for the network \eqref{eq:networknew}.

\subsection{Nonnegative Circuit Polynomials}
\label{subsection:PreliminariesSONC}
		
Let $f$ be a polynomial function on $\R^m$. Determining if $f(x)$ is nonnegative over $\R^m$ is a classical problem in real algebraic geometry that appears  naturally  in different contexts, and prominently in polynomial optimization. It has been studied since the 19th century and is the subject of Hilbert's 17th problem \cite{Hilbert:Seminal}. The critical result for the present work, Theorem~\ref{thm:nonneg} below, builds on  pioneering by  Reznick   in 1989 \cite{Reznick:AGI}  where the theorem was proven for a special case. Another special case was shown by Fidalgo and Kovacec in \cite{Fidalgo:Kovacec}. In its full generality, Theorem~\ref{thm:nonneg} was first shown in 2012 by Pantea, Craciun and K\"oppl in  \cite{pantea-jac}. That approach is particularly interesting, since the result is motivated by the study of reaction networks rather than nonnegativity of real polynomials. In this article, we use Theorem~\ref{thm:nonneg} in the terminology  introduced by Iliman and de Wolff in  2016 \cite{Iliman:deWolff:Circuits}. Specifically, we certify nonnegativity by decomposing $f(x)$ as a \emph{sum of nonnegative circuit polynomials} (SONCs), and we use that nonnegative circuit polynomials are characterized by means of Theorem~\ref{thm:nonneg}. The theory of SONCs has been subsequently developed further in \cite{Dressler:Iliman:deWolff:Positivstellensatz,Forsgaard:deWolff:BoundarySONCCone,Iliman:deWolff:GP}.

For a multivariate polynomial $f(x)=\Sigma_{\alpha}c_{\alpha}x_1^{\alpha_1} \cdots x_m^{\alpha_m}$ with $\alpha=(\alpha_1, \ldots,\alpha_m)\in {\Z_{\geq 0}^m}$, the {\em Newton polytope}, denoted by $\newton{f}$, is the convex hull of the exponent vectors of the monomials with non-zero coefficients, i.e. $\newton{f}:=\conv(\{\alpha\in {\Z_{\geq 0}^m} : c_\alpha \neq 0\})$. Often we will also use $\coef{f}{\alpha}$ to denote the {\em coefficient of $x^{\alpha}$ in the polynomial $f(x)$}. We next define circuit polynomials.

\begin{definition}\label{Definition:CircuitPolynomial}
A polynomial $f \in \R[x_1,\dots,x_m]$ is called a \emph{circuit polynomial} if 
\begin{align*}
	f(x) =  c_{\beta} x^{{\beta}} + \sum_{j=0}^r c_{\alpha(j)} x^{\alpha(j)}, 
\end{align*}
where $\beta, \alpha(j) \in {\Z_{\geq 0}^m}$, $c_{\alpha(j)} \in \R_{> 0}$ for $j=0,\dots,r$, $c_{\beta} \in \R$, and $\New{f}$ is a simplex with vertices $\alpha(0), \ldots,\alpha(r)$ such that the exponent $\beta$ is in the strict interior of $\New{f}$. That is, $\beta=\sum_{j=0}^r\lambda_j\alpha(j)$ with $\lambda_j > 0$ and $\sum_{j=0}^r\lambda_j=1.$ We define the \emph{circuit number} as
\begin{equation*}\label{Equ:DefCircuitNumber} 
	\Theta_f \ := \ \prod_{j = 0}^r \left(\frac{c_{\alpha(j)}}{\lambda_j}\right)^{\lambda_j}. 
\end{equation*}
\end{definition}	
	
Note that the circuit number $\Theta_f$ can be computed by solving a system of linear equations. 
In this article, we will consider circuit polynomials restricted to the nonnegative orthant, as the variables will be concentrations  and reaction rate constants. We are interested in polynomials that take nonnegative (resp. positive) values over the positive orthant, which we call respectively \emph{nonnegative} (resp. \emph{positive}) polynomials. Thus, as opposed to the  definition of circuit polynomials given in \cite{Iliman:deWolff:Circuits} where the focus was on nonnegativity over the whole real space, we do not restrict $\alpha(j)$ in Definition \ref{Definition:CircuitPolynomial} to have even entries. We can do this because in the positive orthant, we can redefine the polynomial by replacing every variable with its square. 
Then, the theorem below is a consequence of \cite[Thm. 3.6. and Prop. 3]{pantea-jac} and a direct specialization of Theorem 3.8 in \cite{Iliman:deWolff:Circuits}.

\begin{theorem}\label{thm:nonneg}
With the notation in Definition~\ref{Definition:CircuitPolynomial}, a circuit polynomial $f$ is nonnegative on $\R^m_{\geq 0}$ if and only if $c_{\beta} \geq -\Theta_f.$
\label{Theorem:CircuitPolynomialNonnegativity}
\end{theorem}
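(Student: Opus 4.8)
The plan is to reduce both implications to the weighted arithmetic--geometric mean inequality applied to the $r+1$ positive terms of $f$. Writing $\lambda_0,\dots,\lambda_r>0$ with $\sum_{j}\lambda_j=1$ for the barycentric coordinates of $\beta$, the weighted AM-GM inequality gives, for every $x\in\R^m_{>0}$,
\begin{equation*}
	\sum_{j=0}^r c_{\alpha(j)} x^{\alpha(j)} = \sum_{j=0}^r \lambda_j\left(\frac{c_{\alpha(j)}}{\lambda_j}\,x^{\alpha(j)}\right) \geq \prod_{j=0}^r \left(\frac{c_{\alpha(j)}}{\lambda_j}\,x^{\alpha(j)}\right)^{\lambda_j}.
\end{equation*}
Since $\sum_j \lambda_j\alpha(j)=\beta$ and $\sum_j\lambda_j=1$, the product on the right factors as $\Theta_f\,x^{\beta}$, yielding $\sum_{j=0}^r c_{\alpha(j)} x^{\alpha(j)} \geq \Theta_f\, x^{\beta}$ on the positive orthant. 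This single inequality is the engine for both directions.

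For sufficiency I would add $c_{\beta}x^{\beta}$ to both sides to obtain $f(x)\geq (c_{\beta}+\Theta_f)\,x^{\beta}$ for all $x\in\R^m_{>0}$. If $c_{\beta}\geq-\Theta_f$ the right-hand side is nonnegative, so $f\geq 0$ on $\R^m_{>0}$, and nonnegativity extends to the closure $\R^m_{\geq 0}$ by continuity.

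For necessity I would argue by contraposition, using that AM-GM holds with equality exactly when the terms coincide. Concretely, I want a point $x^{*}\in\R^m_{>0}$ at which $\frac{c_{\alpha(0)}}{\lambda_0}(x^{*})^{\alpha(0)}=\cdots=\frac{c_{\alpha(r)}}{\lambda_r}(x^{*})^{\alpha(r)}$; there the displayed inequality is tight, so $f(x^{*})=(c_{\beta}+\Theta_f)\,(x^{*})^{\beta}$, which is negative as soon as $c_{\beta}<-\Theta_f$. To produce $x^{*}$ I would pass to logarithms, setting $z_i=\log x_i^{*}$, which turns the equality conditions into the linear system $\langle \alpha(j)-\alpha(0),\,z\rangle = \log(c_{\alpha(0)}/\lambda_0)-\log(c_{\alpha(j)}/\lambda_j)$ for $j=1,\dots,r$. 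The main point, and the only step genuinely invoking the hypothesis, is solvability of this system: because $\New{f}$ is a simplex with vertices $\alpha(0),\dots,\alpha(r)$, the edge vectors $\alpha(1)-\alpha(0),\dots,\alpha(r)-\alpha(0)$ are linearly independent, so the coefficient matrix has full row rank $r\leq m$ and a solution $z\in\R^m$ exists. Exponentiating gives $x^{*}\in\R^m_{>0}$, producing the negative value and completing the proof.
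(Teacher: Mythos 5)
Your proof is correct. There is, however, no in-paper proof to compare it against: the paper states Theorem~\ref{thm:nonneg} as a consequence of results of Pantea--Koeppl--Craciun and as a direct specialization of Theorem~3.8 of Iliman--de Wolff, and simply cites those works. What you have written is essentially the standard argument underlying those citations, correctly adapted to the setting the paper actually uses (nonnegativity on $\R^m_{\geq 0}$ rather than on all of $\R^m$): weighted AM--GM with weights $\lambda_j$ gives $\sum_j c_{\alpha(j)}x^{\alpha(j)}\geq \Theta_f\, x^{\beta}$ on the open orthant, which yields sufficiency after adding $c_\beta x^\beta$ and passing to the closed orthant by continuity; for necessity, affine independence of the simplex vertices makes the edge vectors $\alpha(j)-\alpha(0)$ linearly independent, so your log-linear system for the AM--GM equality locus is solvable, producing a point $x^*\in\R^m_{>0}$ with $f(x^*)=(c_\beta+\Theta_f)(x^*)^\beta<0$ whenever $c_\beta<-\Theta_f$. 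Two features of your route are worth noting. First, it never requires the vertex exponents $\alpha(j)$ to be even, which is precisely why the paper can drop that requirement from the definition when restricting to the positive orthant (the paper instead justifies this via the substitution $x_i\mapsto x_i^2$); your argument makes that justification unnecessary. Second, restricting to $\R^m_{>0}$ is exactly what makes the equality point always attainable by exponentiating a solution of the linear system, whereas the full-space version of the theorem needs additional sign considerations. So your proposal is a complete, self-contained, elementary proof of a statement the paper only imports from the literature.
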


The \emph{Motzkin polynomial}, $x^4y^2 + x^2y^4 - 3x^2y^2 + 1 $, is a classical example of a nonnegative polynomial which is not a sum of squares \cite{Motzkin:AMGMIneq}. This is a circuit polynomial with $\beta = (2,2)$,  and $\lambda_j=1/3$ for all $j$ (see  e.g., \cite{Iliman:deWolff:Circuits}). The circuit number is $3$ and hence, the polynomial is nonnegative by Theorem~\ref{thm:nonneg}.

\smallskip

We note that the scenario studied by Reznick in \cite{Reznick:AGI} was the case where $c_{\alpha(j)} = \lambda_j$ for all $j \in \{1,\ldots,n\}$.
In \cite{pantea-jac}, the authors do not only consider the case of simplex Newton polytopes/circuits, but discuss also the case of non-unique barycentric coordinates and give nonnegativity certificates. These functions were later  called AGE (Arithmetic-Geometric Exponential) functions by Chandrasekaran and Shah \cite{Chandrasekaran:Shah:RelativeEntropy} in the context of signomial programming, and a theorem similar to  Theorem \ref{thm:nonneg} was given  in the framework called SAGE.
While it was not obvious that SONC and SAGE describe the same cone of functions, we know nowadays that this is the case. We summarize this fact  in the following remark for later reference.
\color{black}

\begin{remark}
	\label{remark:SAGE}
Let $f$ be a polynomial whose support contains only a single exponent that is not a vertex of $\newton{f}$, and let the coefficients of $f$ at the vertices of $\newton{f}$ be positive. Then $f$ is nonnegative if and only if it is SONC, see \cite[Theorem 3.10]{wang2018nonnegative}, also \cite[Theorem 11]{Chandrasekaran:Murray:Wiermann}. Such polynomials are examples of AGE as defined in \cite{Chandrasekaran:Shah:RelativeEntropy}, and they yield an equivalent description of SONC polynomials.
\end{remark}

We also point out that computationally a SONC/SAGE certificate can be found using relative entropy programs (REP).
This is a convex optimization program, where the target function and the constraints are linear or a sum of entropy functions.
In general, these problems can be solved effectively, for example using standard solvers like MOSEK; for further details on REPs see e.g. \cite{Chandrasekaran:Shah:RelativeEntropyApplications,Chandrasekaran:Murray:Wiermann:REP}.
In the context of reaction networks the situation is more involved though, as the polynomials, for which we aim to certify nonnegativity, come with \textit{symbolic} coefficients,  and this  does not allow  to simply compute a SONC/SAGE certificate via solving an REP in the usual way; see also Remark~\ref{Remark:REPandSAGE}.

%
	
	\section{\bf A polynomial for multistationarity and the Newton polytope}
	
	\label{section:NP}
We follow the approach in \cite{FKWY} and study the space of reaction rate constants that enable multistationarity after first reducing the problem to understanding the positivity of a multivariate polynomial for given reaction rates. This section is devoted to computing the general expression of the   polynomial and exploring the structure of the associated Newton polytope to determine whether it is positive on the positive orthant. To this end, we apply Theorem 1 in \cite{FeliuPlos}, and obtain the following proposition. Here $ J_{\psi_\k}$ denotes the Jacobian matrix of the polynomial function defined in \eqref{eq:psi} in $3n+3$ variables.
	
	\begin{proposition}\label{prop:onepolynomial}
		Let $\k\in \R^{6n}_{>0}$, $\psi_\k$ be the polynomial function   in \eqref{eq:psi} and $\varphi_\k$ the parametrization of the set of positive steady states from \eqref{eq:param}.  Define the following rational function
		\[q_\k(e,f,s_0) := (-1)^{3n}  \det J_{\psi_\k} (\varphi_\k(e,f,s_0)). \]
		The following holds:
		\begin{enumerate}[(i)]
			\item If $q_\k(e,f,s_0)>0$ for all $(e,f,s_0)\in \R^3_{>0}$, then $\k$ precludes multistationarity.
			\item If $q_\k(e^*,f^*,s_0^*)<0$ for some $(e^*,f^*,s_0^*)\in \R^3_{>0}$, then $\k$ enables multistationarity. In this case the stoichiometric compatibility class containing $\varphi_\k(e^*,f^*,s_0^*)$ contains at least two positive steady states. 
		\end{enumerate}
	\end{proposition}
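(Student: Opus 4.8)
The plan is to derive this statement as a direct application of the determinant criterion for multistationarity, Theorem~1 in \cite{FeliuPlos}. The first step is to fix the dictionary between steady states and fibers of $\psi_\k$. By the construction of $\psi_\k$ in \eqref{eq:psi}, a point $x\in\R^{3n+3}_{>0}$ is a positive steady state lying in the stoichiometric compatibility class with total amounts $(E_{\rm tot},F_{\rm tot},S_{\rm tot})$ if and only if
\[
\psi_\k(x)=(E_{\rm tot},F_{\rm tot},S_{\rm tot},0,\dots,0),
\]
since the first three coordinates of $\psi_\k$ are the left-hand sides of the conservation laws \eqref{eq:cons_laws}, while the remaining $3n$ coordinates, together with those three laws, are equivalent to the full steady state system (the equations for $e,f,s_0$ being redundant). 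Hence $\k$ enables multistationarity precisely when some fiber of $\psi_\k$ over a value with positive first three coordinates contains at least two positive points.

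Next I would verify the hypotheses under which Theorem~1 in \cite{FeliuPlos} applies. The map $\varphi_\k$ of \eqref{eq:param} is a positive parametrization of the steady state variety, its image being exactly the set of positive steady states. The stoichiometric matrix of \eqref{eq:networknew} has rank $s=3n$, as \eqref{eq:cons_laws} supplies exactly three independent conservation laws among the $3n+3$ species; this rank produces the normalizing sign $(-1)^{3n}$ in the definition of $q_\k$. Two further conditions make the underlying degree argument valid and class-independent: each compatibility class with positive totals must be compact, and there must be no steady states on the boundary of $\R^{3n+3}_{\geq 0}$. Compactness holds because the three laws in \eqref{eq:cons_laws} are conservative, bounding every concentration $e,f,s_i,y_i,u_i$ by $E_{\rm tot},F_{\rm tot},S_{\rm tot}$; the absence of boundary steady states is the feature guaranteed by the MESSI structure recalled above. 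I expect this verification to carry the real content of the proof; once in place, the machinery of \cite{FeliuPlos} applies verbatim.

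With the hypotheses checked, parts (i) and (ii) follow. The quantity $q_\k=(-1)^{3n}\det J_{\psi_\k}\circ\varphi_\k$ records, up to the fixed sign $(-1)^{3n}$, the sign of the Jacobian determinant of $\psi_\k$ along the steady state variety. If $q_\k(e,f,s_0)>0$ for all $(e,f,s_0)\in\R^3_{>0}$, then $\det J_{\psi_\k}$ has constant sign $(-1)^{3n}=(-1)^s$ at every positive steady state, and Theorem~1 in \cite{FeliuPlos} forces $\psi_\k$ to be injective on each compatibility class, so $\k$ precludes multistationarity, giving (i). If instead $q_\k(e^*,f^*,s_0^*)<0$ at some point, the sign condition fails and the same theorem yields that $\k$ enables multistationarity, proving (ii).

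For the sharper assertion in (ii), that the class containing $x^*=\varphi_\k(e^*,f^*,s_0^*)$ already carries two steady states, I would spell out the degree argument underlying \cite{FeliuPlos}. Writing $c^*=\psi_\k(x^*)$, compactness together with the absence of boundary steady states makes the Brouwer degree of $\psi_\k$ at $c^*$ well defined, and by homotopy invariance it is independent of the class and equals $(-1)^s=(-1)^{3n}$ (consistent with the monostationary case, where a single regular steady state realizes this degree). Since $q_\k(x^*)\neq 0$, the point $x^*$ is a regular preimage, with local degree $\sign\det J_{\psi_\k}(x^*)=-(-1)^{3n}$ because $q_\k(x^*)<0$. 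If $x^*$ were the only positive steady state in its class, the total degree would equal this local degree, contradicting the value $(-1)^{3n}$. Hence the class of $x^*$ contains at least two positive steady states.
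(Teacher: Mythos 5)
Your proposal is correct and follows essentially the same route as the paper: both reduce the statement to Theorem~1 of \cite{FeliuPlos} and verify its two hypotheses---dissipativity, which holds because the network is conservative via \eqref{eq:cons_laws}, and absence of boundary steady states, which follows from the MESSI result of \cite{Dickenstein-MESSI}. Your final degree-theoretic paragraph merely unpacks the conclusion of the cited theorem, which already asserts that the compatibility class of the sign-violating steady state contains at least two positive steady states, so it is redundant but harmless.
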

	\begin{proof}
		The statement is exactly the conclusion of Theorem 1 in \cite{FeliuPlos}, as the dimension of the stoichiometric compatibility classes is $3n$. So we need to check that the two hypotheses to apply  Theorem 1 in \cite{FeliuPlos} hold. First, the   reaction network \eqref{eq:network} is dissipative, as it is conservative (each concentration appears in at least one conservation law that has all coefficients nonnegative).  Second, we have to verify that {the ODE system \eqref{eq:ode} has }no boundary steady states (steady states with some entry equal to zero) when total amounts are positive. This follows from Theorem 3.15 in \cite{Dickenstein-MESSI}{, as network~\eqref{eq:network} is a MESSI system that satisfies the theorem's hypotheses.}
	\end{proof}

	Based on Proposition~\ref{prop:onepolynomial}, a strategy to determine the set of reaction rate constants that enable 
	multistationarity consists of studying the signs $q_\k$ attains on the positive orthant. To compute $q_\k$, for all $n$, 
	the first step is to reduce the computation of the determinant of a matrix of size $3n+3$, to that of the determinant of a $3\times 3$ matrix. 
	Let $\Phi$ be the linear map given by the left-hand side of \eqref{eq:cons_laws}.  Then the composition $\Phi\circ \varphi_\k$ is a function from $\R^3_{>0}$ to $\R^3_{>0}$.
	
	\begin{proposition}\label{prop:reduce}
		For any $\k\in \R^{6n}_{>0}$ and $(e,f,s_0)\in \R^3_{>0}$, the sign of $q_\k(e,f,s_0)$ agrees with the sign of
		\[   \det J_{\Phi\circ \varphi_\k}(e,f,s_0). \] 
	\end{proposition}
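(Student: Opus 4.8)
The plan is to show that the Jacobian of $\psi_\k$, evaluated at the steady-state parametrization $\varphi_\k$, factors into a block structure whose determinant is, up to sign and a positive scalar, the determinant of the $3\times 3$ Jacobian $J_{\Phi\circ\varphi_\k}$. The key observation is that the function $\psi_\k$ decomposes into two groups of coordinates: the first three are the conservation-law expressions $\Phi$, and the remaining $3n$ are the steady-state equations for $s_{i+1}, y_i, u_i$. Since $\varphi_\k$ maps into the steady-state variety, the lower $3n$ coordinates of $\psi_\k\circ\varphi_\k$ vanish identically, but what matters for the determinant is the Jacobian of $\psi_\k$ at the point $\varphi_\k(e,f,s_0)$, not the Jacobian of the composite.

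First I would order the $3n+3$ variables so that $(e,f,s_0)$ come first and the dependent variables $s_1,\dots,s_n,y_0,\dots,y_{n-1},u_0,\dots,u_{n-1}$ come last. With this ordering, $J_{\psi_\k}$ becomes a block matrix
\begin{equation*}
	J_{\psi_\k} = \begin{pmatrix} A & B \\ C & D \end{pmatrix},
\end{equation*}
where the top block rows correspond to the three conservation laws and the bottom $3n$ rows to the steady-state equations $\tfrac{ds_i}{dt},\tfrac{dy_i}{dt},\tfrac{du_i}{dt}$. The crucial structural fact is that each of these $3n$ steady-state equations is \emph{linear} in the dependent variable it defines, so the lower-right block $D$ is (after possibly reordering rows and columns) lower- or upper-triangular with nonzero diagonal entries given by the negative rate constants $-\k_{6i+2}-\k_{6i+3}$, $-\k_{6i+5}-\k_{6i+6}$, and the coefficient of $s_i$. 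Hence $\det D \neq 0$ and has a definite sign for all $\k > 0$.

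Next I would use the Schur complement identity $\det J_{\psi_\k} = \det D \cdot \det(A - B D^{-1} C)$. The point is to identify the $3\times 3$ matrix $A - B D^{-1} C$ with $J_{\Phi\circ\varphi_\k}$ up to the invertible triangular change of variables induced by differentiating $\varphi_\k$. Concretely, $\varphi_\k$ expresses each dependent variable as a monomial in $(e,f,s_0)$, and implicitly differentiating the steady-state equations along the solution branch reproduces exactly the substitution $-D^{-1}C$ for the derivatives of the dependent variables with respect to $(e,f,s_0)$. By the chain rule, $J_{\Phi\circ\varphi_\k} = A\, J_{\mathrm{id}} + B\, (\partial(\text{dep.})/\partial(e,f,s_0))$, and the dependent-variable Jacobian is precisely $-D^{-1}C$ evaluated at $\varphi_\k$. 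Therefore $\det J_{\psi_\k}(\varphi_\k) = \det D \cdot \det J_{\Phi\circ\varphi_\k}$. Since $q_\k = (-1)^{3n}\det J_{\psi_\k}(\varphi_\k)$, the sign of $q_\k$ equals the sign of $(-1)^{3n}\det D$ times the sign of $\det J_{\Phi\circ\varphi_\k}$.

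The main obstacle is the final sign bookkeeping: one must verify that $(-1)^{3n}\det D > 0$ for all positive $\k$, so that the two signs genuinely coincide rather than being opposite. This requires computing $\det D$ explicitly as a product of its $3n$ triangular diagonal entries and checking that the number of negative factors, combined with the $(-1)^{3n}$ prefactor, yields an overall positive constant. I expect each diagonal entry to be negative (coefficients of the form $-\k_{\bullet}-\k_{\bullet}$ and the $s_i$-coefficient), giving $\det D$ the sign $(-1)^{3n}$, so that $(-1)^{3n}\det D > 0$ as desired; but pinning down the exact triangular structure of $D$ under a valid ordering of rows and columns, and confirming the signs of all diagonal entries, is the step that demands care.
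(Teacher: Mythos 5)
Your reduction to the sign of the lower-right block is sound and is in fact the same route the paper takes: your Schur-complement identity $\det J_{\psi_\k} = \det D \cdot \det(A - BD^{-1}C)$, together with the observation that implicit differentiation of the vanishing steady-state equations gives $J_{\mathrm{dep}} = -D^{-1}C$ and hence $A - BD^{-1}C = J_{\Phi\circ\varphi_\k}$, is algebraically identical to the paper's chain-rule-plus-column-operations argument (the paper's $B''$ is your $D$, and its matrix $D$ is your $-D^{-1}C$). The gap is in the final, decisive step. Your claim that $D$ is permutation-equivalent to a triangular matrix with negative diagonal entries is false, and the sign of $\det D$ cannot be read off as a product of diagonal entries. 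The reversible binding reactions couple the dependent variables in cycles: the row for $\tfrac{ds_{i+1}}{dt}$ has nonzero entries in the columns of $s_{i+1}$, $y_i$, $y_{i+1}$, $u_i$, $u_{i+1}$, while the row for $\tfrac{du_i}{dt}$ has nonzero entries in the columns of $s_{i+1}$ and $u_i$. Already for $n=1$, with dependent variables $(s_1,y_0,u_0)$, one computes
\begin{equation*}
D=\begin{pmatrix} -\k_4 f & \k_3 & \k_5 \\ 0 & -(\k_2+\k_3) & 0 \\ \k_4 f & 0 & -(\k_5+\k_6)\end{pmatrix},
\qquad
\det D = -\k_4\k_6(\k_2+\k_3)\,f .
\end{equation*}
This determinant is not equal to any single signed product of entries of $D$ taken along a generalized diagonal (the two nonvanishing such products are $-\k_4 f(\k_2+\k_3)(\k_5+\k_6)$ and $+\k_4 f(\k_2+\k_3)\k_5$), so no permutation of rows and columns makes $D$ triangular. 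The determinant is genuinely an alternating sum with cancellations, and its sign-definiteness is a nontrivial fact.

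This is exactly where the paper invests its effort: it identifies $D$ (its $B''$) as the coefficient matrix of the linear elimination system and invokes the Matrix-Tree theorem on a labeled digraph (Proposition 1 of S\'aez--Feliu--Wiuf, \emph{Graphical criteria for positive solutions to linear systems}), which shows that $\det B''$ equals $(-1)^{3n}$ times the sum of the labels of all spanning trees rooted at a distinguished vertex; since at least one such tree exists and all labels are positive, the sign is $(-1)^{3n}$ and, in particular, $B''$ is invertible (which your Schur-complement step also needs, and which you currently justify only via the false triangularity claim). To repair your proof you would need to replace the triangularity argument with such a graphical or inductive sign argument; everything before that point stands.
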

	\begin{proof}
		By the chain rule, it holds
		\begin{equation}\label{eq:decomposeJac}
		J_{\psi_\k\circ \varphi_\k}(e,f,s_0)= J_{\psi_\k} (\varphi_\k(e,f,s_0)) \cdot J_{\varphi_\k}(e,f,s_0).
		\end{equation}
		We write  $J_{\psi_\k} (\varphi_\k(e,f,s_0))$ in block form as
		\[ J_{\psi_\k} (\varphi_\k(e,f,s_0)) = \begin{pmatrix} A' & A'' \\ B' & B'' \end{pmatrix}, \quad A'\in \R^{3\times 3},\quad B''\in \R^{3n\times 3n}. \]
		The first three components of $\psi_\k \circ \varphi_\k$ agree with $\Phi\circ \varphi_\k$ by construction, and the last $3n$ components are identically zero, as $\varphi_\k$ is precisely found by solving the system given after setting the last $3n$ entries of $\psi_\k$ to zero. Since $\varphi_\k$ is the identity in the first three entries, for some $D\in \R^{3n\times 3}$ we write \eqref{eq:decomposeJac} in block form and obtain the consequent relations:
\begin{align*}
\begin{pmatrix}
		J_{\Phi\circ \varphi_\k}(e,f,s_0) \\ 0 
		\end{pmatrix} &=  \begin{pmatrix} A' & A'' \\ B' & B'' \end{pmatrix} \begin{pmatrix}Id_{3\times 3} \\ D \end{pmatrix} \\[5pt]
J_{\Phi\circ \varphi_\k}(e,f,s_0)  &= A' + A''D,\qquad B'+B''D=0.
\end{align*}
		By adding to the first three columns of $ J_{\psi_\k} (\varphi_\k(e,f,s_0)) $  linear combinations of the remaining $3n$ columns with coefficients given by the three columns of $D$ respectively, we obtain
		\begin{align*}
		\det J_{\psi_\k} (\varphi_\k(e,f,s_0)) & = \det\begin{pmatrix} A' & A'' \\ B' & B'' \end{pmatrix}
		= \det \begin{pmatrix} A' + A''D & A'' \\ B' + B''D & B'' \end{pmatrix}=  \det \begin{pmatrix} J_{\Phi\circ \varphi_\k}(e,f,s_0)   & A''    \\ 0 & B'' \end{pmatrix} \\ & = \det J_{\Phi\circ \varphi_\k}(e,f,s_0)   \cdot \det B''.
		\end{align*}

		All that remains is to show that the sign of $\det B''$ is $(-1)^{3n}$. The matrix $B''$ is the Jacobian of the last $3n$ rows of $\psi_\k$ in the variables $s_i,y_i,u_i$. Note that it is precisely the coefficient matrix of the linear system that is solved to find $\varphi_\k$. 
		To find the determinant, we use the theory developed in \cite{saez:linearalgebra,Fel_elim}, see also \cite{TG-rational}, where this determinant is found using the Matrix-Tree theorem on a suitable digraph. Consider the following digraph:

		\vspace{0.5cm}
		\begin{footnotesize}
			\begin{tabular}{ccccccccc}
				& & & & & &
				\schemestart
				\subscheme{*}
				\arrow(*--y0){<-[$\kappa_2$]}[45]
				\chemfig{Y_0}
				\arrow(@*--u0){<-[$\kappa_6$]}[315]
				\chemfig{U_0}
				\arrow(@u0--s1){<=>[$\kappa_5$][$\kappa_{4}f$]}[45]
				\chemfig{S_1}
				\arrow(@s1--@y0){<-[$\kappa_3$]}
				\arrow(@s1--y1){<=>[$\kappa_{7}e$][$\kappa_8$]}[45]
				\chemfig{Y_1}
				\arrow(@s1--u1){<-[$\kappa_{12}$]}[315]
				\chemfig{U_1}
				\arrow(@u1--s2){<=>[$\kappa_{11}$][$\kappa_{10}f$]}[45]
				\arrow(@s2--@y1){<-[$\kappa_9$]}
				\chemfig{S_2}
				\schemestop
				&
				$\cdots$
				&
				\schemestart
				\chemfig{S_{n-1}}
				\arrow(sn1--yn1){<=>[$\kappa_{6n-5}e$][$\kappa_{6n-4}$]}[45]
				\chemfig{Y_{n-1}}
				\arrow(@sn1--un1){<-[$\kappa_{6n}$]}[315]
				\chemfig{U_{n-1}}
				\arrow(@un1--sn){<=>[$\kappa_{6n-1}$][$\kappa_{6n-2}f$]}[45]
				\chemfig{S_n.}
				\arrow(@sn--@yn1){<-[$\kappa_{6n-3}$]}
				\schemestop
			\end{tabular}
		\end{footnotesize}
		
		\vspace{0.5cm}

		Consider the directed spanning trees rooted at the vertex $*$ (that is, such that $*$ is the only vertex without outgoing edges), and for each such tree, define its label as the product of the labels of its edges. 
		By \cite[Proposition 1]{saez:linearalgebra}, the determinant of $B''$ is $(-1)^{3n}$ times the sum of the labels of all spanning trees rooted at $*$. As there is at least one such tree, and all labels are positive, this gives that the sign of $\det B''$ is $(-1)^{3n}$.  This concludes the proof.
	\end{proof}

	Proposition~\ref{prop:reduce} is the key in this work, as it will allow us to find explicitly a polynomial whose signs determine whether or not a vector of reaction rate constants enables multistationarity. The proposition relies on some algebraic manipulations and the multivariate chain rule, and this would apply to any other system falling in the setting of Theorem 1 in \cite{FeliuPlos} and for which a parametrization of the set of positive steady states exists. In the notation of the proof of Proposition~\ref{prop:reduce}, a critical aspect is that the sign of the determinant of the matrix $B''$ is constant and can be determined. This might seem very restrictive, but it occurs whenever the parametrization of the set of steady states arises from linear elimination of variables in the setting of \cite{saez:linearalgebra,Fel_elim}, see also \cite{TG-rational}. 
		\color{black}

	\smallskip
	In view of Propositions~\ref{prop:onepolynomial} and \ref{prop:reduce}, 
	multistationarity is established by considering the sign of  the determinant of the Jacobian of $\Phi\circ \varphi_\k$. 
	The entries of $\Phi\circ \varphi_\k$ are
	{\small \begin{align*} 
		e + \sum_{i=0}^{n-1} K_i T_{i-1} e^{i+1} f^{-i} s_0, &   \nonumber \\
		f + \sum_{i=0}^{n-1} L_i T_{i} e^{i+1} f^{-i} s_0, &  \label{eqn:constraints}\\
		s_0 + \sum_{i=1}^{n} T_{i-1} e^i f^{-i} s_0 + \sum_{i=0}^{n-1}(K_i T_{i-1} +L_i T_{i})  e^{i+1} f^{-i} s_0 &. \nonumber
		\end{align*} }%
	Hence, $\Phi\circ \varphi_\k$  depends on $\k$ through the assembled parameters $K_i,L_i,T_i$. 
	As we are only interested in  the determinant of the Jacobian of $\Phi\circ \varphi_\k$, we subtract the first and second equations from the third, and in this way the last equation is replaced by
	\[   -e -f+ s_0 + \sum_{i=0}^{n-1} T_{i} e^{i+1} f^{-(i+1)} s_0. \] 
	
	For convenience, the parameters and variables are relabeled in the following way:
	\begin{equation}\label{eq:abc}
	\begin{aligned}
a_i &= K_i T_{i-1}, & b_i&= L_i T_{i}, & c_i& =T_{i}, & \text{for }i=0,\dots,n-1, \\
x_1 &=e, & x_2& =\tfrac{e}{f}, &  x_3& =s_0.
\end{aligned}
	\end{equation}
With this notation, an easy computation now gives that the determinant of the Jacobian of $\Phi\circ \varphi_\k$ is  the determinant of the following matrix:
		\begin{equation} \label{matrix:Jac2}  {
			J:=	\begin{bmatrix}
			1+  \sum_{i=0}^{n-1} (i+1)\, a_i \, x_2^i x_3& - \sum_{i=0}^{n-1} i\, a_i \, x_2^{i+1}  x_3&\sum_{i=0}^{n-1}  a_i  \, x_1x_2^i \\[5pt] 
			\sum_{i=0}^{n-1} (i+1)\, b_i  \,x_2^i x_3 & 1 - \sum_{i=0}^{n-1} i \,b_i\,  x_2^{i+1} x_3&\sum_{i=0}^{n-1}  b_i  \, x_1 x_2^{i} \\[5pt] 
			-1+\sum_{i=0}^{n-1} (i+1) \, c_i \,  x_1^{-1} x_2^{i+1}  x_3 & -1 - \sum_{i=0}^{n-1} (i+1) \, c_i \, x_1^{-1} x_2^{i+2}  x_3  & 1+\sum_{i=0}^{n-1} c_i \, x_2^{i+1} 
			\end{bmatrix}.
		}
		\end{equation}
	The determinant of \eqref{matrix:Jac2} is a polynomial in $x_1, x_2,x_3$ with coefficients depending on $a_i,b_i,c_i$, which in turn depend on $T_i,K_i,L_i$, $i=0,\dots,n-1$. We, therefore, view $\det J$ as a polynomial in $x_1,x_2,x_3$ with coefficients depending on a parameter vector in $\R^{3n}_{>0}$, and define:
	\begin{equation}
\label{eq:p}
	p_\eta(x_1,x_2,x_3):= \det J,\qquad	\eta = (T_0,\dots,T_{n-1}, K_0,\ldots,K_{n-1},L_0,\ldots,L_{n-1}) \in \R^{3n}_{>0}.
	\end{equation}

	The  discussion above together with Propositions~\ref{prop:onepolynomial} and \ref{prop:reduce}, 
	shows that whether or not $\k$ enables multistationarity depends only on the entries of the vector $\eta(\kappa)$ and yield the following proposition. 
	
	\begin{proposition}\label{prop:multi}
		Let $\k\in \R^{6n}_{>0}$, consider  $\eta(\k)\in \R^{3n}_{>0}$ using \eqref{eq:assembly}, and the polynomial $p_{\eta(\k)}(x_1,x_2,x_3)$ in \eqref{eq:p}. 
It holds:
		\begin{enumerate}[(i)]
			\item If $p_{\eta(\k)}(x_1,x_2,x_3)>0$ for all $(x_1,x_2,x_3)\in \R^3_{>0}$, then $\k$ precludes multistationarity.
			\item If $p_{\eta(\k)}(x_1^*,x_2^*,x_3^*)<0$ for some $(x_1^*,x_2^*,x_3^*)\in \R^3_{>0}$, then $\k$ enables multistationarity. In this case the stoichiometric compatibility class containing $\varphi_\k\big(x_1^*,\frac{x_1^*}{x_2^*},x_3^*\big)$ contains at least two positive steady states. 
		\end{enumerate}
		
	\end{proposition}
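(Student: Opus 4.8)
The plan is to assemble the statement directly from Propositions~\ref{prop:onepolynomial} and \ref{prop:reduce} together with the explicit row reduction and change of variables carried out in the discussion preceding \eqref{eq:p}. First I would recall that, by Proposition~\ref{prop:onepolynomial}, whether $\k$ enables or precludes multistationarity is completely determined by the sign that $q_\k$ attains on $\R^3_{>0}$, and that by Proposition~\ref{prop:reduce} the sign of $q_\k(e,f,s_0)$ coincides pointwise with the sign of $\det J_{\Phi\circ\varphi_\k}(e,f,s_0)$. Thus it suffices to relate the sign of $\det J_{\Phi\circ\varphi_\k}$ to the sign of $p_{\eta(\k)}$.

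Next I would justify the identity $\sign\big(\det J_{\Phi\circ\varphi_\k}(e,f,s_0)\big) = \sign\big(p_{\eta(\k)}(x_1,x_2,x_3)\big)$ under the substitution $(x_1,x_2,x_3)=(e,\,e/f,\,s_0)$. This rests on two elementary facts. First, subtracting the first two coordinate functions of $\Phi\circ\varphi_\k$ from the third amounts to a row operation on the Jacobian and hence leaves its determinant unchanged; after this operation the third coordinate becomes $-e-f+s_0+\sum_{i=0}^{n-1} T_i e^{i+1} f^{-(i+1)} s_0$. Second, writing $a_i=K_iT_{i-1}$, $b_i=L_iT_i$, $c_i=T_i$ as in \eqref{eq:abc} and substituting $x_1=e$, $x_2=e/f$, $x_3=s_0$ turns the resulting Jacobian into exactly the matrix $J$ of \eqref{matrix:Jac2}, so that $\det J = p_{\eta(\k)}$ by \eqref{eq:p}. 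I would confirm the second point by computing one representative partial derivative and checking that it reproduces the corresponding entry of $J$; the remaining entries follow identically.

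The one point requiring genuine care is the bijectivity of the relabeling on the positive orthant, which is what lets a statement quantified over all $(x_1,x_2,x_3)\in\R^3_{>0}$ be translated into one quantified over all $(e,f,s_0)\in\R^3_{>0}$. The map $(e,f,s_0)\mapsto(e,\,e/f,\,s_0)$ is a bijection from $\R^3_{>0}$ onto $\R^3_{>0}$, with inverse $(x_1,x_2,x_3)\mapsto(x_1,\,x_1/x_2,\,x_3)$; in particular every $(x_1,x_2,x_3)\in\R^3_{>0}$ is attained by a unique $(e,f,s_0)\in\R^3_{>0}$. Consequently $p_{\eta(\k)}$ is positive on all of $\R^3_{>0}$ if and only if $q_\k$ is positive on all of $\R^3_{>0}$, and $p_{\eta(\k)}(x_1^*,x_2^*,x_3^*)<0$ holds if and only if $q_\k$ is negative at the corresponding point $\big(x_1^*,\,x_1^*/x_2^*,\,x_3^*\big)$.

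Finally I would conclude both parts by invoking Proposition~\ref{prop:onepolynomial}. For (i), positivity of $p_{\eta(\k)}$ on $\R^3_{>0}$ yields positivity of $q_\k$ there, so Proposition~\ref{prop:onepolynomial}(i) gives that $\k$ precludes multistationarity. For (ii), a negative value of $p_{\eta(\k)}$ at $(x_1^*,x_2^*,x_3^*)$ produces a negative value of $q_\k$ at $\big(x_1^*,\,x_1^*/x_2^*,\,x_3^*\big)$, so Proposition~\ref{prop:onepolynomial}(ii) gives multistationarity and identifies $\varphi_\k\big(x_1^*,\,x_1^*/x_2^*,\,x_3^*\big)$ as lying in a stoichiometric compatibility class with at least two positive steady states, matching the claimed point. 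Since no deeper argument is needed, I do not anticipate a substantive obstacle; the only real risk is a bookkeeping error in the change of variables, which is why I would pin down the inverse map explicitly.
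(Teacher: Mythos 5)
Your proposal is correct and follows essentially the same route as the paper, which states Proposition~\ref{prop:multi} as an immediate consequence of Propositions~\ref{prop:onepolynomial} and~\ref{prop:reduce} together with the preceding discussion (the determinant-preserving row operation and the relabeling \eqref{eq:abc} yielding the matrix $J$ in \eqref{matrix:Jac2}). Your only addition is the explicit verification that $(e,f,s_0)\mapsto(e,\,e/f,\,s_0)$ is a bijection of $\R^3_{>0}$ onto itself, a point the paper leaves implicit but which is exactly the bookkeeping needed to translate the quantifiers and to identify the point $\varphi_\k\big(x_1^*,\tfrac{x_1^*}{x_2^*},x_3^*\big)$ in part (ii).
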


	In Appendix~\ref{sec:appendix} we explicitly compute $p_{\eta}$ as the determinant of the matrix $J$ in \eqref{matrix:Jac2}. We now explore the structure of the polynomial $p_\eta(x_1,x_2,x_3)$ using techniques from real algebraic geometry, in particular, via studying its Newton polytope. The polynomial is quadratic in $x_3$ and can be written as:
\begin{equation}\label{eq:p_decomp}
p_\eta(x_1,x_2,x_3) = A_2(x_2) x_3^2+ \big(A_{10}(x_2) + A_{11}(x_2) x_1\big)\, x_3+ \big(A_{00}(x_2) + A_{01}(x_2)x_1\big), 
\end{equation}
	with
		\begin{small}
		\begin{align*}
		A_2 (x_2) &= \left(1+\sum_{i=0}^{n-1} c_i \, x_2^{i+1}  \right) \left(\sum_{\ell=1}^{2n-3} \sum_{i+j=\ell} (i-j)\, a_i \, b_j \, x_2^{\ell +1} \right),  \\  A_{11}(x_2) & = (1+x_2) \left( \sum_{\ell = 1 }^{2n-3} \sum_{i+j=\ell}  (i- j) a_i \,b_j\,   x_2^{\ell}    \right), \nonumber\\
		A_{10}(x_2) &=\sum_{\ell=0}^{n-1} (\ell+1)\, a_\ell \, x_2^\ell  -  \sum_{{\ell=1}}^{n-1} \ell \,b_\ell \,  x_2^{\ell +1} 
		+   \sum_{\ell = 0 }^{2n-2} \sum_{i+j=\ell}   (j+1-i) \,b_i\, c_j \,   x_2^{\ell +2} +   \sum_{\ell = 1 }^{2n-3} \sum_{i+j=\ell}   (i-j) \,a_i\, c_j \,   x_2^{\ell+1}, \nonumber\\
		A_{01}(x_2) &=   \sum_{i=0}^{n-1} (a_i+ b_i )  x_2^{i}, \qquad \text{ and } \qquad A_{00}(x_2) = 1+\sum_{i=0}^{n-1} c_i \, x_2^{i+1}.\nonumber
		\end{align*}
	\end{small}
	
		\begin{lemma}\label{prop:edgespositive}
For the polynomial $p_\eta$ written as in \eqref{eq:p_decomp}, the polynomials $A_{01}(x_2)$ and $A_{00}(x_2)$  have only positive coefficients, while $A_2(x_2), A_{10}(x_2),$ and $A_{11}(x_2)$ have both positive and negative coefficients. 

Moreover,	$A_2(x_2)$ is nonnegative if and only if $A_{11}(x_2)$ is nonnegative.
	\end{lemma}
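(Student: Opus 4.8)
The plan is to read everything off the explicit formulas for the $A$'s, after first recording the two factorizations that tie $A_2$ and $A_{11}$ together. Writing $S(x_2) := \sum_{\ell=1}^{2n-3}\sum_{i+j=\ell}(i-j)\,a_i b_j\, x_2^{\ell}$, the displayed expressions give at once
\[ A_{11}(x_2) = (1+x_2)\,S(x_2), \qquad A_2(x_2) = A_{00}(x_2)\cdot x_2\cdot S(x_2). \]
By \eqref{eq:abc} together with \eqref{eq:KL} and \eqref{eq:T}, each $a_i,b_i,c_i$ is a product of positive quantities, hence strictly positive; consequently $1+x_2$, $x_2$, and $A_{00}(x_2)=1+\sum_i c_i x_2^{i+1}$ are polynomials all of whose coefficients are positive. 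The first two assertions, about $A_{00}$ and $A_{01}=\sum_i (a_i+b_i)x_2^i$, are then immediate by inspection.

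For the three remaining polynomials I would argue in the ring $\R[x_2, a_\bullet, b_\bullet, c_\bullet]$, i.e. treat the entries of $\eta$ as indeterminates, and exhibit in each of $A_2, A_{10}, A_{11}$ one monomial occurring with a positive numerical coefficient and one with a negative one, checking each time that no other term can cancel it. This symbolic reading is the honest one: for the non-generic choice $a_i=b_i$ for all $i$ (which is exactly $\k_{6i+3}=\k_{6i+6}$) one has $S\equiv 0$, so $A_2\equiv A_{11}\equiv 0$ and no numerical coefficient is negative. In $S$ the monomial $(i-j)a_i b_j x_2^{i+j}$ carries the sign of $i-j$, and since $n\ge 2$ both $(i,j)=(1,0)$ and $(i,j)=(0,1)$ occur, giving $+a_1 b_0 x_2$ and $-a_0 b_1 x_2$; as the products $a_i b_j$ are pairwise distinct monomials, these do not cancel, so $S$ has coefficients of both signs. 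Multiplying $S$ by $1+x_2$ (for $A_{11}$) or by $x_2 A_{00}(x_2)$ (for $A_2$) only multiplies each such monomial by further monomials in $x_2$ and the $c_k$, which keeps the summands pairwise distinct and hence preserves both signs.

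For $A_{10}$ I would isolate the pure monomials that receive a contribution from a single one of its four sums. The four sums contribute, respectively, monomials pure in the $a$'s, pure in the $b$'s, of type $b_i c_j$, and of type $a_i c_j$; these four families are disjoint, so there is no interaction between the sums. Hence the coefficient of $a_0$ (the $\ell=0$ term of the first sum) is exactly $+1$, while the coefficient of $b_1 x_2^2$ (the $\ell=1$ term of the second sum, which exists since $n\ge 2$) is exactly $-1$, establishing that $A_{10}$ has coefficients of both signs.

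Finally, the ``moreover'' is the cleanest part, and it is where the two factorizations pay off: for every $x_2>0$ we have $A_{00}(x_2)>0$, $x_2>0$, and $1+x_2>0$, so $\operatorname{sign}A_2(x_2)=\operatorname{sign}S(x_2)=\operatorname{sign}A_{11}(x_2)$ pointwise on the positive orthant. Therefore $A_2$ is nonnegative on $\R_{>0}$ if and only if $S$ is, if and only if $A_{11}$ is. I expect the only genuinely delicate point to be the bookkeeping that rules out cancellation among the sign-carrying monomials; once the problem is phrased in the polynomial ring in $x_2$ and the parameters, this reduces to the observation that the relevant parameter-monomials are pairwise distinct.
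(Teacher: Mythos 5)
Your proof is correct and takes essentially the same route as the paper's (very terse) proof: the first claim is verified by inspection of the displayed formulas — your bookkeeping via pairwise distinct parameter-monomials makes the no-cancellation point rigorous — and the ``moreover'' follows from the shared factor $S(x_2)=\sum_{\ell=1}^{2n-3}\sum_{i+j=\ell}(i-j)a_ib_jx_2^{\ell}$, since $A_2=A_{00}(x_2)\cdot x_2\cdot S(x_2)$ and $A_{11}=(1+x_2)S(x_2)$ with the extra factors positive on $\R_{>0}$. Your observation that the sign claim must be read symbolically (e.g.\ when all $2\times 2$ minors of $M_{\k}$ vanish one has $S\equiv 0$, hence $A_2\equiv A_{11}\equiv 0$, and for other parameter choices all collected coefficients of $A_{10}$ can be positive) is a genuine precision that the paper's one-line proof leaves implicit.
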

	\begin{proof}
		The first part follows easily from inspection of the coefficients of the polynomials. 
The second statement is immediate as 
		both $A_2$ and $A_{11}$ are positive multiples of {$\sum_{\ell = 1 }^{2n-3} \sum_{i+j=\ell}  (i- j)\, a_i \, b_j\,   x_2^{\ell}$}.
			\end{proof}
	
	Note that 
		\begin{equation}\label{eq:express}
	\sum_{i+j=\ell} (i-j)\, a_i \, b_j \, x_2^{\ell+1}= \sum\limits_{\substack{i>j\\i+j=\ell}} (i-j)(a_i b_j-a_jb_i)x_2^{\ell+1}. 
	\end{equation}
	Furthermore, recalling the definition of 	$a_i,b_i$, and $c_i$ in \eqref{eq:abc}, and of $T_i$ in \eqref{eq:T}, 	 we obtain that 
	\begin{equation} \label{eq:abab}
	\begin{aligned}
	a_ib_j-a_jb_i &=K_i T_{i-1} L_j T_{j}-K_j T_{j-1} L_i T_{i}	=T_{i-1}T_{j-1}\big(K_iL_j \tfrac{\k_{6j+3}K_j}{\k_{6j+6}L_j}-K_jL_i\tfrac{\k_{6i+3}K_i}{\k_{6i+6}L_i}\big)\\
	&=T_{i-1}T_{j-1}K_iK_j\big(\tfrac{\k_{6j+3}}{\k_{6j+6}}-\tfrac{\k_{6i+3}}{\k_{6i+6}}\big). 
	\end{aligned}
	\end{equation}
Therefore, the signs of  $a_i b_j-a_jb_i$ depend on the minors of the following matrix:
	\begin{equation}\label{kappamatrix} M_{\k}= { \begin{bmatrix}
		\k_3 & \k_9 & \ldots & \k_{6n-3} \\
		\k_6 & \k_{12} & \ldots & \k_{6n}
		\end{bmatrix}.
	}
	\end{equation}

	\medskip
	\paragraph{\bf Newton Polytope of $p_\eta(x_1,x_2,x_3)$.}
	To analyze the possible signs that $p_\eta(x_1,x_2,x_3)$ attains over the positive orthant, we begin by exploring the properties of its Newton polytope. 
	The rest of this section focuses on the shape and structure of $\newton{p_\eta}.$  	
	We start by noting that 
	$p_\eta$ is linear in $x_1$, and hence can be expressed as 
	\[ p_\eta(x_1,x_2,x_3)= P_0 (x_2,x_3) +x_1 P_1(x_2,x_3) \]
	with 
\begin{equation}\label{eq:p0p1}
P_0(x_2,x_3)=A_2(x_2) x_3^2+A_{10}(x_2) x_3 +A_{00}(x_2)\quad \textrm{and}\quad P_1(x_2,x_3)=A_{11}(x_2)x_3+A_{01}(x_2).
\end{equation}
	Therefore, the two hyperplanes $H_0:= \{x\in \R^3 : x_1=0\}$ and $H_1:= \{x\in \R^3 : x_1=1\}$ contain all exponent vectors, and further
		\begin{equation}\label{eq:decomp}
	\newton{p_\eta} \medcap H_0 = \{0\}\times \newton{P_0},\qquad \newton{p_\eta} \medcap H_1 = \{1\}\times \newton{P_1}.
	\end{equation}

	\begin{theorem}\label{thm:NPverts}
Let $n\geq 1$, and $\eta\in \R^{3n}_{>0}$ such that $a_1b_0-a_0b_1 \neq 0$ and $a_{n-1} b_{n-2} - a_{n-2} b_{n-1} \neq 0$. 
Then 		
			the set of vertices of the Newton polytope $\newton{p_\eta}$ of   $p_\eta$ in \eqref{eq:p_decomp} consists of the following $10$ points:
		\[ \Big\{(0,0,0),(0,n,0),(0,0,1),(0,2n,1),(0,2,2),
		(0,3n-2,2),(1,0,0),(1,n-1,0),(1,1,1),(1,2n-2,1)\Big\}.\]%
Furthermore, $\newton{P_0}=\newton{p_\eta}\medcap H_0$ is  a hexagon with set of vertices 
		\[ \big\{(0,0),(n,0),(0,1),(2n,1),(2,2),(3n-2,2)\big\}.\] 
		\end{theorem}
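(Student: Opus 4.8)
The plan is to reduce the three-dimensional computation to two independent planar ones, exploiting that $p_\eta$ is linear in $x_1$. Since every exponent vector has first coordinate $0$ or $1$, the planes $H_0=\{x_1=0\}$ and $H_1=\{x_1=1\}$ are the two supporting hyperplanes of $\newton{p_\eta}$ in the $x_1$-direction. Hence $\newton{p_\eta}\medcap H_0=\{0\}\times\newton{P_0}$ and $\newton{p_\eta}\medcap H_1=\{1\}\times\newton{P_1}$ are faces by \eqref{eq:decomp}, so every vertex of either face is a vertex of $\newton{p_\eta}$; conversely every vertex of $\newton{p_\eta}$ has $x_1\in\{0,1\}$ and thus lies in one of these faces and is a vertex of it. Therefore the vertex set of $\newton{p_\eta}$ equals $\{0\}\times V(\newton{P_0})\cup\{1\}\times V(\newton{P_1})$, and it suffices to compute the vertex sets of the planar polytopes $\newton{P_0},\newton{P_1}$ in the $(x_2,x_3)$-plane using \eqref{eq:p0p1}.

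Next I would read off, level by level in $x_3$, the smallest and largest power of $x_2$ occurring with nonzero coefficient. By \eqref{eq:p0p1}, $P_0$ has three horizontal levels $x_3^0,x_3^1,x_3^2$ carrying $A_{00},A_{10},A_2$, while $P_1$ has two levels $x_3^0,x_3^1$ carrying $A_{01},A_{11}$. For $A_{00},A_{01},A_{10}$ the extreme coefficients are single products of the positive parameters $a_i,b_i,c_i$ with no cancellation at those degrees (e.g. the trailing and leading coefficients of $A_{10}$ are $a_0$ and $b_{n-1}c_{n-1}$), giving the $x_2$-ranges $[0,n]$, $[0,n-1]$, $[0,2n]$. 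The decisive levels are the top ones: by Lemma~\ref{prop:edgespositive}, $A_2$ and $A_{11}$ are positive multiples of $\sum_{\ell}\sum_{i+j=\ell}(i-j)\,a_ib_j\,x_2^{\ell}$, whose lowest and highest terms, via \eqref{eq:express}, have coefficients $a_1b_0-a_0b_1$ and $a_{n-1}b_{n-2}-a_{n-2}b_{n-1}$. These are exactly the quantities assumed nonzero, which guarantees the extreme monomials survive; multiplying by the positive factors $1+\sum c_ix_2^{i+1}$ and $1+x_2$ yields the $x_2$-ranges $[2,3n-2]$ for $A_2$ and $[1,2n-2]$ for $A_{11}$.

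With the leftmost and rightmost exponents on each level in hand, computing the two planar convex hulls is routine. For $\newton{P_0}$ the six candidate endpoints are $(0,0),(n,0)$, $(0,1),(2n,1)$, $(2,2),(3n-2,2)$, and I would confirm that each middle-level endpoint lies strictly outside the segment joining the corresponding bottom- and top-level endpoints: the segment from $(0,0)$ to $(2,2)$ meets $x_3=1$ at $x_2=1>0$, and the segment from $(n,0)$ to $(3n-2,2)$ meets it at $x_2=2n-1<2n$, so $(0,1)$ and $(2n,1)$ protrude on the left and right. This gives the hexagon, establishing the second assertion. For $\newton{P_1}$ the four endpoints $(0,0),(n-1,0),(1,1),(2n-2,1)$ form a trapezoid with horizontal parallel sides, so all four are vertices. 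Translating these $6+4$ points into $H_0$ and $H_1$ produces the $10$ listed vertices.

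The hard part is the bookkeeping in the second paragraph: one must track the exact minimal and maximal $x_2$-degrees of $A_2,A_{10},A_{11}$, each a sum of several index-convolutions, and rule out accidental cancellation at the outermost degrees. The two nonvanishing hypotheses are needed precisely at the extreme terms of the antisymmetric sum $\sum(i-j)\,a_ib_j\,x_2^{\ell}$, while positivity of the $a_i,b_i,c_i$ handles all remaining extreme coefficients. (For $n=1$ both hypotheses are vacuously false, so the statement is empty and needs no separate argument.)
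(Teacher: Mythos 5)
Your proposal is correct and follows essentially the same route as the paper's proof: restrict to the two facets $\newton{p_\eta}\medcap H_0$ and $\newton{p_\eta}\medcap H_1$, slice by $x_3$-levels to reduce to the extreme $x_2$-exponents of $A_{00},A_{10},A_2$ and $A_{01},A_{11}$, invoke the two nonvanishing hypotheses exactly where the antisymmetric sum $\sum_{i+j=\ell}(i-j)a_ib_j$ could cancel, and check that $(0,1)$ and $(2n,1)$ lie outside the hull of the four extreme-level points. Your version is in fact slightly more explicit than the paper's (naming the extreme coefficients $a_0$, $b_{n-1}c_{n-1}$, computing the intersection abscissas $1$ and $2n-1$, and noting the $n=1$ vacuity), but the underlying argument is identical.
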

	
	\begin{proof}
The hypothesis implies that the polynomials $A_2, A_{11}, A_{10}, A_{01}, A_{00}$ in $x_2$ have degree $3n-2, 2n-2, 2n,  n-1, n$ respectively, and lowest exponent $1, 1,0,0,0$ respectively. 
	
				Considering that the vertices of
		$\newton{p_\eta}$ are contained in the union of hyperplanes $H_0 \medcup H_1$,   the set of vertices of $\newton{p_\eta}$
		is the union of the set of vertices of $\newton{p_\eta}\medcap H_0$ and  of $\newton{p_\eta}\medcap H_1$. 
		By \eqref{eq:decomp}, it is enough to find the vertices of 
		$\newton{P_0}$ and $\newton{P_1}$, which are  planar polytopes.
			
As $P_0$ is quadratic in $x_3$, see \eqref{eq:p0p1}, the vertices of $\newton{P_0}$ are on the lines $L_i:=\{(x_2,x_3)\in \R^2 : x_3=i\}$, for $i=0,1,2$.
The polytopes $\newton{P_0}\medcap L_i$ for $i=0,1,2$ correspond to the Newton polytopes of the univariate polynomials $A_{00}, A_{10},$ and $A_{2}$ respectively. 
Their vertices are simply the highest and lowest exponents of these polynomials.
		Therefore, the set of vertices of $\newton{P_0}$ is contained in $S_1=\big\{(0,0),(n,0),(0,1),(2n,1),(2,2),(3n-2,2)\big\}.$ To show that this is exactly the set of vertices of $\newton{P_0}$, it is enough to verify that
		$(0,1),(2n,1)$ do not lie in the convex hull of the $4$ remaining points. 
		This is easily checked by considering the relative position of $(0,1)$ with respect to the line joining $(0,0)$ and $(2,2)$, and similarly, that of $(2n,1)$ with respect to the line joining $(n,0)$ and $ (3n-2,2).$ The Newton polytope $\newton{P_0}$ is therefore the hexagon with vertex set $S_1.$

		For $\newton{P_1}$, we note that $P_1$ in  \eqref{eq:p0p1} is linear in $x_3.$ Hence, as above, we consider first the vertices of $\newton{P_1}\medcap \{(x_2,x_3)\in \R^2 : x_3=i\}$ for $i=0,1$, which are $0,n-1$ and $1,2n-2$ respectively. It follows that the set of vertices of $\newton{P_1}$ is  $S_2=\big\{(0,0),(n-1,0),(1,1),(2n-2,1)\big\}.$
		By \eqref{eq:decomp}, we conclude that the set of vertices of $\newton{p_\eta}$ is as given in the statement.
	\end{proof}

\begin{remark}\label{rk:original}
In the original parameter vector $\k\in \R^{6n}_{>0}$, the condition in Theorem~\ref{thm:NPverts} corresponds 
 by \eqref{eq:abab} to $\k_3\k_{12} - \k_6\k_9\neq 0$ and $\k_{6n-9}\k_{6n} - \k_{6n-3} \k_{6n-6}\neq 0$.
\end{remark}

Given a polynomial $P\in \R[x_1,\ldots, x_m]$, and a proper face $F$ of   $\newton{P}$, we let $P_F$ denote the \emph{restriction} of $P$ to $F$, that is, $P_F$ is the sum of the terms $c_\alpha x_1^{\alpha_1} \cdots x_m^{\alpha_m}$ of $P$ satisfying $(\alpha_1,\dots,\alpha_m)\in F$. 
	In what follows we will use the following well-known result that gives a connection between the signs $P$ attains over $\R^m_{>0}$, and the signs $P_F$  attains.
	
	\begin{proposition}\label{prop:facerestriction}
		Let $F$ be a proper face of the Newton polytope $\newton{P}$ of a polynomial $P\in \R[x_1,\ldots, x_m]$ and let $P_F$ be the restriction of $P$ to $F$. Then for any $x\in \R^m_{> 0}$ such that $P_F(x)\neq 0$, 
			there exists $y\in \R^m_{> 0}$ such that
		\[ \sign(P(y))= \sign(P_F(x)).\]
		In particular, for $t>0$ large enough and $\omega\in \R^m$ in the open outer normal cone of $F$,  $y= (x_1 t^{\omega_1}, \dots, x_m t^{\omega_m})$ generates the desired point. 
	\end{proposition}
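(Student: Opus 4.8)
The plan is to realize $P_F(x)$ as the dominant contribution to $P$ along a suitable one-parameter family of points in the positive orthant, and then to conclude by a continuity (limit) argument. The key input is the defining property of the \emph{open} outer normal cone: if $\omega$ lies in the relative interior of the outer normal cone of the face $F$, then the linear functional $\alpha\mapsto\langle\alpha,\omega\rangle$ attains its maximum over $\newton{P}$ exactly on $F$. Concretely, setting $M:=\max_{\alpha\in\newton{P}}\langle\alpha,\omega\rangle$, every exponent vector $\alpha$ in the support of $P$ satisfies $\langle\alpha,\omega\rangle=M$ when $\alpha\in F$ and $\langle\alpha,\omega\rangle<M$ when $\alpha\notin F$.

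First I would fix $x\in\R^m_{>0}$ with $P_F(x)\neq 0$ and set $y=(x_1 t^{\omega_1},\dots,x_m t^{\omega_m})$ for $t>0$; since each $x_i>0$ and $t>0$, indeed $y\in\R^m_{>0}$. Substituting into $P$ and using $y^\alpha=x^\alpha\, t^{\langle\alpha,\omega\rangle}$, I would split the sum according to whether the exponent lies on $F$ and factor out $t^{M}$:
\[
t^{-M}P(y)=\sum_{\alpha\in F} c_\alpha x^\alpha+\sum_{\alpha\notin F} c_\alpha x^\alpha\, t^{\langle\alpha,\omega\rangle-M}=P_F(x)+\sum_{\alpha\notin F} c_\alpha x^\alpha\, t^{\langle\alpha,\omega\rangle-M},
\]
where both sums range over the support of $P$ and the first equals $P_F(x)$ by definition of the restriction. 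By the normal-cone property, every exponent appearing in the second sum has $\langle\alpha,\omega\rangle-M<0$, so each such term tends to $0$ as $t\to\infty$.

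Consequently $t^{-M}P(y)\to P_F(x)$ as $t\to\infty$. Since $P_F(x)\neq 0$, there exists $t$ large enough that $t^{-M}P(y)$ has the same nonzero sign as $P_F(x)$; as $t^{-M}>0$, the corresponding point $y$ then satisfies $\sign(P(y))=\sign(P_F(x))$, which is exactly the claim, and it is produced by the explicit family named in the statement. The argument is essentially routine, the result being a standard fact about Newton polytopes; the only point requiring genuine care is the bookkeeping of the normal cone — ensuring that $\omega$ lying in the \emph{open} cone forces the \emph{strict} inequality $\langle\alpha,\omega\rangle<M$ for every support exponent off $F$, since this strictness is precisely what makes the off-face terms lower order in $t$ rather than merely non-dominant.
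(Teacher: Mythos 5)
Your proof is correct and follows exactly the approach the paper intends: the paper does not spell out its own argument but defers to \cite[Prop.~2.3]{FKWY}, and the scaling family $y=(x_1t^{\omega_1},\dots,x_mt^{\omega_m})$ named in the statement's final sentence is precisely your construction. Your factorization $t^{-M}P(y)=P_F(x)+\sum_{\alpha\notin F}c_\alpha x^\alpha\, t^{\langle\alpha,\omega\rangle-M}$, combined with the strict inequality $\langle\alpha,\omega\rangle<M$ for support points off $F$ (which is exactly the paper's definition of the open outer normal cone), is the standard, complete argument with no gaps.
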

	
	A proof of this proposition can be found in \cite[{Prop. 2.3}]{FKWY}.  	Recall that the open outer normal cone of $F$ consists of the vectors $\omega$ such that $\omega \cdot x^\top > \omega \cdot y^\top$ for every $x\in F$ and $y\in \newton{P}\setminus F$ where $x^\top,y^\top$ denotes the transpose. 
		 We  can easily deduce that if the coefficient of the monomial supported at one of the vertices is negative, then there exists some $x\in \R^m_{> 0}$ such that the polynomial is negative.

	\begin{theorem}\label{thm:restricthexa}
		Fix $\eta\in \R^{3n}_{>0}$. 
		\begin{enumerate}[(i)]
			\item The polynomial $p_\eta$ attains a negative value over $\R^3_{>0}$ if and only if 
			$P_0$ attains negative values over $\R^2_{>0}$.
			\item Moreover, $p_{\eta}$ is either positive over the positive orthant or $p_{\eta}$ attains  negative values over $\R^3_{>0}$. In other words, $p_{\eta}$ has a zero in $\R_{>0}^3$ if and only if it takes  negative values.
		\end{enumerate}
	\end{theorem}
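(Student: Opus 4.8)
The plan is to exploit two structural features recorded earlier: the affine dependence $p_\eta = P_0 + x_1 P_1$ from \eqref{eq:p0p1}, and the sign information in Lemma~\ref{prop:edgespositive}, namely that $A_{00},A_{01}$ are positive on $\R_{>0}$ while $A_2$ and $A_{11}$ are positive multiples of one and the same polynomial, so that $\sign A_2(x_2)=\sign A_{11}(x_2)$ for every $x_2>0$. For the easy implication of (i), I would apply Proposition~\ref{prop:facerestriction} to the proper face $F=\newton{p_\eta}\medcap H_0$, whose restriction $(p_\eta)_F$ equals $P_0$ regarded as independent of $x_1$: if $P_0(x_2^*,x_3^*)<0$ for some $(x_2^*,x_3^*)\in\R^2_{>0}$, the proposition yields a point of $\R^3_{>0}$ at which $p_\eta$ is negative. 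Equivalently and more elementarily, $p_\eta(x_1,x_2^*,x_3^*)\to P_0(x_2^*,x_3^*)<0$ as $x_1\to 0^+$, so $p_\eta$ is negative for small positive $x_1$.

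For the converse of (i), suppose $p_\eta(x_1^*,x_2^*,x_3^*)<0$. If already $P_0(x_2^*,x_3^*)<0$ there is nothing to prove, so assume $P_0(x_2^*,x_3^*)\ge 0$. Since $x_1\mapsto p_\eta(x_1,x_2^*,x_3^*)$ is affine with value $<0$ at $x_1^*>0$ but constant term $P_0(x_2^*,x_3^*)\ge 0$, its slope $P_1(x_2^*,x_3^*)$ must be strictly negative. Because $A_{01}(x_2^*)>0$ and $x_3^*>0$, the relation $A_{11}(x_2^*)\,x_3^*+A_{01}(x_2^*)<0$ forces $A_{11}(x_2^*)<0$, and hence $A_2(x_2^*)<0$ by the sign-coupling of Lemma~\ref{prop:edgespositive}. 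Then $x_3\mapsto P_0(x_2^*,x_3)=A_2(x_2^*)x_3^2+A_{10}(x_2^*)x_3+A_{00}(x_2^*)$ is a downward-opening parabola, so $P_0(x_2^*,x_3)\to-\infty$ as $x_3\to\infty$, and in particular $P_0$ attains negative values on $\R^2_{>0}$. This establishes (i).

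For (ii) I would first note that $p_\eta$ always attains positive values: letting $x_2,x_3\to 0^+$ with $x_1$ fixed gives $p_\eta\to 1+x_1(a_0+b_0)>0$, since $A_2,A_{11}\to 0$, $A_{10}\to a_0$, $A_{00}\to 1$, and $A_{01}\to a_0+b_0$. Consequently, if $p_\eta$ attains a negative value, the intermediate value theorem on the connected set $\R^3_{>0}$ produces a zero. For the reverse implication, let $p_\eta(x^*)=0$ with $x^*=(x_1^*,x_2^*,x_3^*)$. If $P_1(x_2^*,x_3^*)\neq 0$, the affine map $x_1\mapsto p_\eta(x_1,x_2^*,x_3^*)$ has nonzero slope and a positive root $x_1^*$, hence is strictly negative for some $x_1>0$ on the appropriate side. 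If $P_1(x_2^*,x_3^*)=0$, then $P_0(x_2^*,x_3^*)=0$ and, exactly as in (i), $A_{11}(x_2^*)=-A_{01}(x_2^*)/x_3^*<0$ gives $A_2(x_2^*)<0$, so $P_0(x_2^*,\cdot)\to-\infty$ and $p_\eta$ attains negative values by part (i). The dichotomy follows: if $p_\eta$ is not positive everywhere it has a nonpositive value, hence by the above a negative value.

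I expect the main obstacle to be the converse direction of (i) (equivalently the ``zero implies negative'' half of (ii)). The crux is to see that a single sign constraint on $P_1$, which governs the part of $p_\eta$ linear in both $x_1$ and $x_3$, propagates through the shared-factor identity of Lemma~\ref{prop:edgespositive} to the leading coefficient $A_2$ of the $x_3$-quadratic part $P_0$, forcing a downward parabola; tracking the positivity of $A_{00},A_{01}$ and the exact pointwise coupling $\sign A_2=\sign A_{11}$ on $\R_{>0}$ is the delicate point. Everything else reduces to continuity, the affine dependence on $x_1$, and connectedness of $\R^3_{>0}$.
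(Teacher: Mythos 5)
Your proposal is correct and follows essentially the same route as the paper: the decomposition $p_\eta = P_0 + x_1 P_1$, the sign-coupling $A_{11}(x_2)<0 \Rightarrow A_2(x_2)<0$ from Lemma~\ref{prop:edgespositive}, and the downward-parabola argument for $P_0(x_2,\cdot)$ are exactly the paper's key steps, with Proposition~\ref{prop:facerestriction} (or your equivalent $x_1\to 0^+$ limit) handling the easy direction. Your treatment of (ii) is in fact slightly more explicit than the paper's, which leaves the intermediate-value argument and the boundary case $P_1=0$ implicit, but this is a matter of detail rather than a different method.
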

	\begin{proof}
The two statements in the theorem  follow from these two implications:
\begin{align*}
 p_\eta(x_1,x_2,x_3)\leq 0  \textrm{ for some }(x_1,x_2,x_3)\in \R^3_{>0}  & \quad \Rightarrow \quad  
P_0(x_2,x_3)<0   \textrm{ for some } (x_2,x_3)\in \R^2_{>0}  \\ & \quad \Rightarrow \quad  p_\eta(x_1,x_2,x_3)< 0  \textrm{ for some }(x_1,x_2,x_3)\in \R^3_{>0}.
\end{align*}

To show the second implication, note that  $\newton{p_\eta} \medcap H_0$   and $\newton{p_\eta} \medcap H_1$ are two facets of $\newton{p_\eta}$, and $p_\eta$ restricted to these facets is respectively $P_0$ and $x_1 P_1$.
Therefore, if $P_0(x_2,x_3)<0$ for some $(x_2,x_3)\in \R^2_{>0}$, then Proposition~\ref{prop:facerestriction} readily gives that $p_\eta$ also attains negative values. 

For the first implication,  $p_\eta(x_1,x_2,x_3)\leq 0$ implies that  $P_0(x_2,x_3)$ or $P_1(x_2,x_3)$ are negative. If $P_1(x_2,x_3)<0$, then by Lemma~\ref{prop:edgespositive} necessarily $A_{11}(x_2)<0$ and hence, by the same lemma, we also have $A_2(x_2)<0$.  As $A_2(x_2)$ is the leading coefficient of $P_0$, there exists $x_3^*>0$ such that $P_0(x_2,x_3^*)<0$.
This concludes the proof of the theorem. 
	\end{proof}
	
As the signs $p_\eta$ attain are completely determined by the signs of $P_0$ over $\R^2_{>0}$ by Theorem~\ref{thm:restricthexa}, we rewrite $P_0$ as $P_\eta$, to indicate dependence on the parameter vector $\eta\in \R^{3n}_{>0}$:
	\begin{align} \label{eqn:relevantpol}
P_\eta(x_2,x_3)  &= A_2(x_2)x_3^2+A_{10}(x_2) x_3+A_{00}(x_2).
	\end{align}

	Theorem~\ref{thm:restricthexa} tells us that the silent scenario in Proposition~\ref{prop:multi}, namely when $p_\eta$ is nonnegative over $\R^3_{>0}$, cannot occur. This gives the following characterization of multistationarity of the $n$-phosphorylation cycle. 
	
\begin{corollary}\label{cor:crit}
Let $\k\in \R^{6n}_{>0}$, consider  $\eta(\k)\in \R^{3n}_{>0}$ from \eqref{eq:assembly}, and   $P_{\eta(\k)}(x_2,x_3)$ as in \eqref{eqn:relevantpol}. 
Then $\k$  enables multistationarity if and only if $P_{\eta(\k)}(x_2^*,x_3^*)<0$ for some $(x_2^*,x_3^*)\in \R^2_{>0}$. 
\end{corollary}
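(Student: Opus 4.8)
The plan is to assemble the corollary directly from Proposition~\ref{prop:multi} and Theorem~\ref{thm:restricthexa}, using the fact that $P_{\eta}$ is by definition the same polynomial as $P_0$ (compare \eqref{eq:p0p1} and \eqref{eqn:relevantpol}). The two statements in Proposition~\ref{prop:multi} already give one-directional implications between the sign of $p_{\eta(\k)}$ on $\R^3_{>0}$ and (pre/en)abling multistationarity, while Theorem~\ref{thm:restricthexa} converts statements about $p_{\eta(\k)}$ into statements about the bivariate polynomial $P_{\eta(\k)}$. So the whole proof is a short logical chaining of results already established, with no new computation.

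First I would handle the ``if'' direction. Assume $P_{\eta(\k)}(x_2^*,x_3^*)<0$ for some $(x_2^*,x_3^*)\in\R^2_{>0}$. Since $P_{\eta(\k)}=P_0$, Theorem~\ref{thm:restricthexa}(i) gives that $p_{\eta(\k)}$ attains a negative value at some point of $\R^3_{>0}$, and Proposition~\ref{prop:multi}(ii) then yields that $\k$ enables multistationarity. For the ``only if'' direction I would argue the contrapositive: suppose $P_{\eta(\k)}$ does not attain negative values over $\R^2_{>0}$. By Theorem~\ref{thm:restricthexa}(i), $p_{\eta(\k)}$ does not attain negative values over $\R^3_{>0}$ either. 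Here is the crucial step: by Theorem~\ref{thm:restricthexa}(ii), the polynomial $p_{\eta(\k)}$ is either strictly positive on the whole positive orthant or it attains negative values; having ruled out the latter, we conclude $p_{\eta(\k)}>0$ on $\R^3_{>0}$, and Proposition~\ref{prop:multi}(i) gives that $\k$ precludes multistationarity.

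The step that carries all the weight is the dichotomy in Theorem~\ref{thm:restricthexa}(ii), the ``no silent scenario'' statement. Without it, Proposition~\ref{prop:multi} only supplies the two one-sided implications and leaves an unresolved gap precisely when $p_{\eta(\k)}$ is nonnegative but vanishes somewhere on $\R^3_{>0}$: in that situation neither hypothesis of Proposition~\ref{prop:multi} is met, so neither ``enables'' nor ``precludes'' would be forced. It is exactly the dichotomy (positive everywhere, or negative somewhere, with no intermediate nonnegative-with-zeros case) that promotes the pair of implications into a genuine equivalence and lets the proof close. I expect no real obstacle beyond correctly invoking this dichotomy, since everything else is a bookkeeping identification of $P_{\eta}$ with $P_0$ and a direct application of the cited results.
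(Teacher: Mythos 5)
Your proposal is correct and follows essentially the same route as the paper: the paper derives the corollary precisely by observing that Theorem~\ref{thm:restricthexa}(ii) rules out the ``silent scenario'' of Proposition~\ref{prop:multi} (namely $p_{\eta(\k)}$ nonnegative with a zero on $\R^3_{>0}$), and then using Theorem~\ref{thm:restricthexa}(i) to pass between $p_{\eta(\k)}$ and $P_{\eta(\k)}=P_0$. Your identification of the dichotomy as the step carrying all the weight matches the paper's own framing exactly.
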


	\section{ \bf Sufficient conditions for multistationarity} \label{section:sufficient}

	In this section we  start by providing conditions on the parameters to ensure multistationarity using Corollary~\ref{cor:crit}. Then, we describe subsets of the parameter space that preclude multistationarity. First, we make the observation that if the polynomial $A_2(x_2)$ in \eqref{eq:p_decomp} is negative, then $P_{\eta}$ in   \eqref{eqn:relevantpol} attains negative values. This follows immediately from $P_{\eta}$ being quadratic in $x_3$ with $A_2(x_2)$ as the leading coefficient. 
	Recall that the sign of the coefficients of $A_2(x_2)$  are determined by \eqref{eq:express}. 

	\begin{theorem}\label{thm:suf}
		If $\k\in \R^{6n}_{>0}$ is such that 
		\[ \k_3\k_{12} - \k_6\k_9<0\qquad \textrm{or}\qquad \k_{6n-9}\k_{6n} - \k_{6n-3} \k_{6n-6}<0,\]
		then $P_{\eta(\k)}$ attains negative values, and hence $\k$ enables multistationarity.
	\end{theorem}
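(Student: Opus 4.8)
The plan is to invoke the observation stated immediately before the theorem: since $P_{\eta}(x_2,x_3)=A_2(x_2)x_3^2+A_{10}(x_2)x_3+A_{00}(x_2)$ is quadratic in $x_3$ with leading coefficient $A_2(x_2)$, it suffices to produce a single $x_2^*>0$ with $A_2(x_2^*)<0$. Indeed, then $P_{\eta}(x_2^*,x_3)\to-\infty$ as $x_3\to\infty$, so $P_{\eta}$ attains negative values, and Corollary~\ref{cor:crit} delivers multistationarity. Thus the whole statement reduces to showing that each of the two sign conditions forces $A_2$ to dip below zero somewhere on $\R_{>0}$.

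First I would strip off the positive factor. By \eqref{eq:p_decomp} the first factor $1+\sum_{i=0}^{n-1}c_i x_2^{i+1}$ of $A_2$ is strictly positive on $\R_{>0}$, since each $c_i=T_i>0$. Hence $A_2$ attains a negative value if and only if the second factor $Q(x_2):=\sum_{\ell=1}^{2n-3}\sum_{i+j=\ell}(i-j)\,a_i b_j\,x_2^{\ell+1}$ does. Using the rewriting \eqref{eq:express}, I would read off the two extreme coefficients of $Q$: the $\ell=1$ summand forces $(i,j)=(1,0)$, giving the lowest coefficient $a_1b_0-a_0b_1$ of $x_2^2$, and the $\ell=2n-3$ summand forces $(i,j)=(n-1,n-2)$, giving the top coefficient $a_{n-1}b_{n-2}-a_{n-2}b_{n-1}$ of $x_2^{2n-2}$. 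The point worth emphasizing is that both of these extreme coefficients are single, uncancelled terms, so their signs are unambiguous; this is exactly why the hypotheses involve the first and last $2\times2$ minors of $M_{\k}$ and matches the non-degeneracy assumed in Theorem~\ref{thm:NPverts}.

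Finally I would translate these signs into the stated $\k$-conditions via \eqref{eq:abab}. Specializing that identity gives $a_1b_0-a_0b_1=T_0 K_0 K_1\big(\tfrac{\k_3}{\k_6}-\tfrac{\k_9}{\k_{12}}\big)$, which has the same sign as $\k_3\k_{12}-\k_6\k_9$, and $a_{n-1}b_{n-2}-a_{n-2}b_{n-1}=T_{n-2}T_{n-3}K_{n-2}K_{n-1}\big(\tfrac{\k_{6n-9}}{\k_{6n-6}}-\tfrac{\k_{6n-3}}{\k_{6n}}\big)$, which has the same sign as $\k_{6n-9}\k_{6n}-\k_{6n-3}\k_{6n-6}$. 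Consequently, if $\k_3\k_{12}-\k_6\k_9<0$ then the lowest coefficient of $Q$ is negative and $Q(x_2)<0$ for $x_2$ sufficiently small; if $\k_{6n-9}\k_{6n}-\k_{6n-3}\k_{6n-6}<0$ then the leading coefficient is negative and $Q(x_2)<0$ for $x_2$ sufficiently large. In either case $A_2$ takes a negative value, which by the first paragraph completes the argument. The computations are entirely routine, and the only mildly delicate point is the non-degeneracy of the two extreme coefficients, which is guaranteed by the uniqueness of the index pairs at $\ell=1$ and $\ell=2n-3$; so I anticipate no genuine obstacle.
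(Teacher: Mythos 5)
Your proof is correct, and it takes a somewhat different route than the paper's. The paper identifies the exponents $(2,2)$ and $(3n-2,2)$ as \emph{vertices} of $\newton{P_{\eta(\k)}}$ via Theorem~\ref{thm:NPverts}, observes (as you do) that their coefficients are positive multiples of $\k_3\k_{12}-\k_6\k_9$ and $\k_{6n-9}\k_{6n}-\k_{6n-3}\k_{6n-6}$ respectively, and then invokes the face-restriction result (Proposition~\ref{prop:facerestriction}) to conclude that a negative vertex coefficient forces $P_{\eta(\k)}$ to attain negative values. You instead bypass the Newton polytope machinery entirely: you factor out the positive polynomial $1+\sum_i c_i x_2^{i+1}$ from $A_2$, read off the extreme coefficients of the remaining factor $Q$, get $A_2(x_2^*)<0$ by elementary asymptotics as $x_2\to 0^+$ or $x_2\to\infty$, and then send $x_3\to\infty$ using the quadratic-in-$x_3$ structure --- which is exactly the observation the paper states in the paragraph preceding the theorem but does not actually use in its formal proof. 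Your computations (the uniqueness of the index pairs at $\ell=1$ and $\ell=2n-3$, and the specializations of \eqref{eq:abab}) match the paper's. What the paper's route buys is uniformity: the vertex/face framework is reused throughout (e.g.\ in Theorem~\ref{thm:interiornegative} and Section~\ref{section:circuitcovers}), and knowing these points are vertices matters later. What your route buys is self-containedness and a slightly cleaner treatment of a degenerate case: the hypotheses of Theorem~\ref{thm:NPverts} require \emph{both} extreme minors to be nonzero, which the present theorem's assumption (only one strict inequality) does not guarantee, whereas your asymptotic argument needs no non-degeneracy of the other extreme coefficient at all.
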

	\begin{proof} By  Theorem~\ref{thm:NPverts}, the set of vertices of $\newton{P_{\eta(\k)}}$ is $\big\{(0,0),(n,0),(0,1),(2n,1),(2,2),(3n-2,2)\big\},$ see Remark~\ref{rk:original}. 
		The coefficient of the lowest degree term of  $A_2(x_2)$ (i.e.  $x_2^2x_3^2$) is a positive multiple of $a_1b_0-b_0a_1$ by  \eqref{eq:express}, and of $\k_3\k_{12} - \k_6\k_9$ by  \eqref{eq:abab}. Similarly, the coefficient of the highest degree term of  $A_2(x_2)$ (i.e.  
		$x_2^{3n-2} x_3^2$) is a positive multiple of $ \k_{6n-9}\k_{6n} - \k_{6n-3} \k_{6n-6}$. 
If one of these coefficients is negative, then $P_{\eta(\k)}$ attains negative values by Proposition~\ref{prop:facerestriction}, and therefore, by Corollary~\ref{cor:crit}, $\k$ enables multistationarity. 
Note that the other coefficients of $P_{\eta(\k)}$ corresponding to vertices of $\newton{P_{\eta(\k)}}$ are strictly positive. 
	\end{proof}
	
	The expressions in Theorem~\ref{thm:suf} correspond to the first and last maximal minors of the matrix~$M_{\k}$ in~\eqref{kappamatrix}.

	\begin{proposition}
Let $\k\in \R^{6n}_{>0}$ be such that  the matrix $M_{\k}$ in \eqref{kappamatrix} has rank one. 
{Then $\k$ enables multistationarity if and only if  the polynomial $A_{10}(x_2)$ is not nonnegative, that is, attains a negative~value.}
	\end{proposition}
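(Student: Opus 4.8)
The plan is to use the rank-one hypothesis to collapse $P_{\eta(\k)}$ into a polynomial that is \emph{affine} in $x_3$, after which multistationarity can be read off directly from the sign behaviour of $A_{10}$. The essential observation is that rank one forces the leading coefficient $A_2(x_2)$ of $P_{\eta(\k)}$ (as a polynomial in $x_3$) to vanish identically.

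First I would translate the rank-one condition into a statement about coefficients. The matrix $M_{\k}$ has rank one exactly when all its $2\times 2$ minors vanish, and by \eqref{eq:abab} the minor coming from columns $i$ and $j$ is a nonzero positive multiple of $a_ib_j-a_jb_i$. Hence rank one is equivalent to $a_ib_j-a_jb_i=0$ for all $i,j$. Substituting this into \eqref{eq:express} shows that every coefficient of $\sum_{\ell}\sum_{i+j=\ell}(i-j)\,a_i\,b_j\,x_2^{\ell}$ is zero; since $A_2(x_2)$ (and likewise $A_{11}(x_2)$) is a positive multiple of this polynomial by Lemma~\ref{prop:edgespositive}, we conclude $A_2\equiv 0$.

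With $A_2\equiv 0$, the relevant polynomial \eqref{eqn:relevantpol} reduces to
\[ P_{\eta(\k)}(x_2,x_3)=A_{10}(x_2)\,x_3+A_{00}(x_2), \]
which is affine in $x_3$, and where $A_{00}(x_2)>0$ for all $x_2>0$ because all its coefficients are positive (Lemma~\ref{prop:edgespositive}). By Corollary~\ref{cor:crit}, $\k$ enables multistationarity if and only if $P_{\eta(\k)}$ attains a negative value on $\R^2_{>0}$, so it remains only to show that this occurs precisely when $A_{10}$ attains a negative value. For the backward direction, if $A_{10}(x_2^*)<0$ for some $x_2^*>0$, then $P_{\eta(\k)}(x_2^*,x_3)\to-\infty$ as $x_3\to\infty$, so $P_{\eta(\k)}$ is negative somewhere and $\k$ enables multistationarity. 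For the forward direction, if $A_{10}(x_2)\ge 0$ for all $x_2>0$, then $P_{\eta(\k)}(x_2,x_3)$ is the sum of the nonnegative term $A_{10}(x_2)\,x_3$ and the strictly positive term $A_{00}(x_2)$, hence is strictly positive throughout $\R^2_{>0}$, so by Corollary~\ref{cor:crit} $\k$ precludes multistationarity.

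None of the steps presents a genuine difficulty, since the quadratic $P_{\eta(\k)}$ degenerates to an affine function. The one place requiring care is the translation of the rank-one hypothesis into $A_2\equiv 0$ via \eqref{eq:abab} and \eqref{eq:express}: this is exactly the degenerate regime in which the $x_3$-leading coefficient drops out, so that the sign of $P_{\eta(\k)}$ is governed entirely by the competition between $A_{10}$ and the always-positive $A_{00}$.
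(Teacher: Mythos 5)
Your proposal is correct and follows essentially the same route as the paper's proof: translate the rank-one hypothesis into the vanishing of all $2\times 2$ minors, conclude $A_2\equiv 0$ via \eqref{eq:express} and \eqref{eq:abab}, reduce $P_{\eta(\k)}$ to the affine form $A_{10}(x_2)x_3+A_{00}(x_2)$ with $A_{00}>0$, and invoke the multistationarity criterion (Corollary~\ref{cor:crit}, which the paper cites via Theorem~\ref{thm:restricthexa}). Your version merely spells out the two directions of the sign analysis in more detail than the paper does.
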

	\begin{proof}
By hypothesis, all maximal minors of $M_{\k}$ vanish, and therefore, $A_2$ is identically zero by \eqref{eq:express} and \eqref{eq:abab}. Therefore, $P_{\eta(\k)}(x_2,x_3)=A_{10}(x_2)x_3+A_{00}(x_2)$.
As $A_{00}$ has only  positive coefficients, $P_{\eta(\k)}(x_2,x_3)$ attains negative values if and only if $A_{10}(x_2)$ does. 
		The statement now follows from Theorem~\ref{thm:restricthexa}.
			\end{proof}

	In the next result we show that network \eqref{eq:networknew} can be  multistationary even if all the minors of the matrix $M_{\k}$ are positive.
	
	\begin{theorem}\label{thm:interiornegative}
		Let $\k\in \R^{6n}_{>0}$ be such that 
		all the minors of $M_{\k}$ are positive. 
		Then, by choosing $K_0=\tfrac{\k_1}{\k_2+\k_3}$ small enough by varying $\k_1$ and/or $\k_2$  and leaving the rest of the entries of $\kappa$ fixed, we obtain a parameter vector  that enables multistationarity.
	\end{theorem}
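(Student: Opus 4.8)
The plan is to apply the characterization in Corollary~\ref{cor:crit}: after making $K_0$ small it suffices to exhibit a point $(x_2,x_3)\in\R^2_{>0}$ with $P_{\eta(\k)}(x_2,x_3)<0$. First I would record the effect of the hypothesis. Since all minors of $M_\k$ are positive, \eqref{eq:express} and \eqref{eq:abab} show that every coefficient $\sum_{i+j=\ell}(i-j)a_ib_j$ is positive, so $A_2(x_2)>0$ for all $x_2>0$; thus for each fixed $x_2$ the polynomial $P_{\eta(\k)}=A_2x_3^2+A_{10}x_3+A_{00}$ is, in $x_3$, an upward parabola with positive constant term $A_{00}(x_2)>0$ (cf. Lemma~\ref{prop:edgespositive}). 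Consequently $P_{\eta(\k)}$ takes a negative value over $\R^2_{>0}$ if and only if there is some $x_2>0$ with
\[
A_{10}(x_2)<0 \qquad\text{and}\qquad A_{10}(x_2)^2-4A_2(x_2)A_{00}(x_2)>0,
\]
since at such an $x_2$ the minimum of the parabola, attained at $x_3=-A_{10}/(2A_2)>0$, is negative. Note this is precisely the regime \emph{not} covered by Theorem~\ref{thm:suf}, where $A_2\geq 0$.

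Next I would track the dependence of the coefficients on $\varepsilon:=K_0$ as $\varepsilon\to 0$, varying only $\k_1,\k_2$ (which enter $\k$ solely through $K_0=\k_1/(\k_2+\k_3)$) and leaving all other entries fixed. By \eqref{eq:T} every $T_i$ with $i\geq 0$ is a fixed positive multiple of $K_0$, so $a_0=\varepsilon$ and $a_i=\varepsilon\,\hat a_i$ $(i\geq1)$, $b_i=\varepsilon\,\hat b_i$, $c_i=\varepsilon\,\hat c_i$, with $\hat a_i,\hat b_i,\hat c_i$ independent of $\varepsilon$ (here $\hat a_0=1$). Substituting into \eqref{eq:p_decomp} gives $A_{00}(x_2)=1+O(\varepsilon)$, $A_{10}(x_2)=\varepsilon\,g(x_2)+O(\varepsilon^2)$ and $A_2(x_2)=\varepsilon^2\,h(x_2)+O(\varepsilon^3)$, where
\[
g(x_2)=\sum_{\ell=0}^{n-1}(\ell+1)\,\hat a_\ell\, x_2^\ell-\sum_{\ell=1}^{n-1}\ell\,\hat b_\ell\, x_2^{\ell+1}, \qquad h(x_2)=\sum_{\ell=1}^{2n-3}\sum_{i+j=\ell}(i-j)\,\hat a_i\hat b_j\, x_2^{\ell+1}.
\]
As before $h(x_2)>0$ for $x_2>0$. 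Hence, for a fixed $x_2$, the discriminant above equals $\varepsilon^2\bigl(g(x_2)^2-4h(x_2)\bigr)+O(\varepsilon^3)$, and it is enough to produce a single $x_2^*>0$ with $g(x_2^*)<0$ and $g(x_2^*)^2-4h(x_2^*)>0$: for that fixed $x_2^*$ and all sufficiently small $\varepsilon$ one then gets $A_{10}(x_2^*)<0$ and positive discriminant, whence $P_{\eta(\k)}(x_2^*,x_3)<0$ for some $x_3>0$ and, by Corollary~\ref{cor:crit}, multistationarity. Note the minors of $M_\k$ do not involve $\k_1,\k_2$, so the chosen vector still satisfies the hypothesis.

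Finally I would produce $x_2^*$ by letting $x_2\to\infty$ (this is where $n\geq 2$ enters; for $n=1$ the hypothesis on $M_\k$ is vacuous and the statement does not apply). The top-degree term of $g$ is $-(n-1)\hat b_{n-1}x_2^{n}$, so $\deg g=n$ with negative leading coefficient and $g(x_2)\to-\infty$, while $g(x_2)^2$ has degree $2n$ with positive leading coefficient $(n-1)^2\hat b_{n-1}^2$. By Theorem~\ref{thm:NPverts} and the top term ($i=n-1,j=n-2$) one has $\deg h=2n-2$. Since $2n>2n-2$, the polynomial $g^2-4h$ has degree $2n$ with positive leading coefficient, so $g(x_2)^2-4h(x_2)\to+\infty$. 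Thus for $x_2^*$ large enough both $g(x_2^*)<0$ and $g(x_2^*)^2-4h(x_2^*)>0$ hold.

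The one genuinely delicate point — the hard part — is that $A_{10}^2$ and $A_2A_{00}$ are \emph{both} of order $\varepsilon^2$, so the sign of the discriminant cannot be read off from a single dominant term; it must be extracted from the gap between the $x_2$-degrees of $g^2$ (namely $2n$) and of $h$ (namely $2n-2$), which is exactly what forces the choice of a large $x_2^*$. I would also note, to connect with the heading of the result, that the negative coefficients of $g$ sit at the exponents $(\ell+1,1)$ with $\ell+1\in\{2,\dots,n\}$, which lie strictly between the vertices $(0,1)$ and $(2n,1)$ and hence in the \emph{interior} of the hexagon $\newton{P_{\eta(\k)}}$; so multistationarity indeed occurs here with only interior negative coefficients.
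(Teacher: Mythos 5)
Your proof is correct, and it takes a genuinely different route from the paper's. The paper also freezes $K_1,\dots,K_{n-1},L_0,\dots,L_{n-1}$ and promotes $K_0$ to a variable, but it then works with the trivariate polynomial $Q(K_0,x_2,x_3)$ and its Newton polytope in $\R^3$: it verifies, via the separating hyperplane with normal $\omega=(-2(n+1),2,4)$, that the exponent $(1,n,1)$ of the negative term $-(n-1)\,b_{n-1}\,x_2^{n}x_3$ is a \emph{vertex} of $\newton{Q}$, and then applies Proposition~\ref{prop:facerestriction} to conclude that $Q<0$ along the curve $(t^{-2(n+1)},t^{2},t^{4})$ for $t$ large, i.e.\ $K_0$ small and $x_2,x_3$ large. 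You never build that polytope: instead you use that, since all minors of $M_{\kappa}$ are positive (a condition unaffected by varying $\kappa_1,\kappa_2$), one has $A_2>0$ and $A_{00}>0$, so negativity of $P_{\eta(\kappa)}$ is equivalent to a discriminant condition in $x_2$; then the exact linearity of all $a_i,b_i,c_i$ in $\varepsilon=K_0$ (correct, by \eqref{eq:T} and \eqref{eq:abc}) reduces everything to finding one $x_2^*$ with $g(x_2^*)<0$ and $g(x_2^*)^2-4h(x_2^*)>0$, which follows from the degree gap $\deg g^2=2n>2n-2=\deg h$. Both arguments implicitly need $n\geq 2$: for $n=1$ the key negative term $-(n-1)b_{n-1}$ vanishes in the paper's proof just as in yours, so your caveat is consistent with the paper. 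What the paper's route buys is uniformity with its general machinery (the same face-restriction proposition used for Theorems~\ref{thm:suf} and \ref{thm:restricthexa}) together with an explicit scaling curve and no discriminant computation; what your route buys is a more elementary, self-contained argument that isolates the actual mechanism, namely that $A_{10}^2$ and $4A_2A_{00}$ are both of order $\varepsilon^2$, so the sign of the discriminant is decided purely by the $x_2$-degree gap between $g^2$ and $h$, which is what forces $x_2^*$ to be large. Your closing remark that the resulting negative coefficients sit at the interior points $(2,1),\dots,(n,1)$ of the hexagon correctly ties the result to the failure of the sufficient condition of Theorem~\ref{thm:suf}, matching the paper's framing.
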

	
	\begin{proof}
Fix the value of the assembled parameters $K_1,\dots,K_{n-1},L_0,\dots,L_{n-1}$ corresponding to $\k$, and construct
$T_0(K_0),\dots,T_{n-1}(K_0)$  as in \eqref{eq:T} but with $K_0$ treated as a parameter. 
For the one parameter family of vectors
$\xi(K_0) = (T_0(K_0),\dots,T_{n-1}(K_0),K_0,\dots,K_{n-1}, L_0,\dots,L_{n-1})\in \R^{3n}_{>0}$, 
the polynomial $P_{\xi(K_0)}(x_2,x_3)$ becomes a polynomial in $K_0,x_2,$ and $x_3$, which we denote as 
	$Q(K_0,x_2,x_3)$. 
	We prove the theorem by studying the Newton polytope  of $Q(K_0,x_2,x_3)$. 
	
		Note that $K_0$ is a factor with exponent $1$ in each of $a_i,b_i,$ and $c_i$. Consider the exponents of $K_0$ and $x_2$ in  the polynomials $A_{00},A_{10},$ and $A_2$ in \eqref{eq:p_decomp}. Close inspection of each of these show that   $\newton{Q}$ is  contained in the convex hull of the following set of points: 
		\begin{multline*}
		{\Omega:=}
		\{(0,0,0),(1,1,0),\ldots,(1,n,0),(1,0,1),\ldots,(1,n,1), (2,2,1),\ldots,(2,2n,1),\\   (2,2,2),\ldots,(2,2n-2,2),(3,3,2),\ldots,(3,3n-2,2)\}.\end{multline*}
Note that some points could appear with zero coefficient for certain  parameter values. 
		
Consider the point $(1,n,1)\in \Omega$. The term of  $Q$ with monomial $K_0 x_2^n x_3$ is $- (n-1) \,b_{n-1}\,  x_2^{n} x_3$. Hence, the coefficient is negative and    $(1,n,1)\in \newton{Q}$. 
With $\omega =(-2(n+1), 2 ,4) \in \R^3$ and $\gamma=-1$, it holds 
\[ \omega \cdot (1,n,1)^\top + \gamma > 0, \qquad \omega \cdot q^\top + \gamma < 0 \textrm{ for all }q\in \Omega \setminus \{(1,n,1)\}. \]%
Therefore, the hyperplane $\omega\cdot x^\top +\gamma =0$
separates $(1,n,1)$ and $\conv(\Omega \setminus \{(1,n,1)\})$. 
It follows that $(1,n,1)\notin \conv(\Omega \setminus \{(1,n,1)\})$, showing that $(1,n,1)$  is a vertex of $\newton{Q}$.

By Proposition~\ref{prop:facerestriction},  $Q$ attains negative values after appropriately choosing $K_0,x_2,x_3>0$. As $\omega$ belongs to the outer normal cone of $(1,n,1)$ by construction, the point $( t^{-2(n+1)} , t^2, t^4)$ makes $Q$ negative for $t$ large enough. This amounts to letting $K_0$ be small enough, and as   $\k_3$ was fixed in the definition of $T_i(K_0)$, this must be achieved by varying $\k_1$ and/or $\k_2$. 
This concludes the proof.
\end{proof}

\begin{remark}
Theorem~\ref{thm:interiornegative} implies that the conditions in Theorem~\ref{thm:suf} are not necessary for multistationarity. It gives, additionally, a way to construct parameters that enable multistationarity, while satisfying the condition that the negative coefficients of $\newton{P_{\eta(\k)}}$ 
correspond to points in the  interior of $\newton{P_{\eta(\k)}}$.

Although we have shown the existence of values of $K_0$ for multistationarity when the rest of the parameters are fixed, 
the theorem remains true when all parameters but $L_{n-1}$ are fixed. This is shown by considering 
$P_{\eta(\k)}$ as a polynomial in the variables $L_{n-1},x_2$ and $x_3$. Then, the point $(0,n,1)$ is a vertex of the corresponding polytope and the proof follows  similar arguments as the proof of Theorem~\ref{thm:interiornegative}.
\end{remark}

	\section{\bf Sufficient Conditions for Monostationarity}\label{section:circuitcovers}
	
In this section we present a systematic way to construct a sufficient condition for monostationarity for all $n$. To this end, we describe a region $R$ in the parameter space such that the polynomial $P_{\eta(\k)}$ in \eqref{eqn:relevantpol} is positive for any $\k \in \R^{6n}_{>0}$, {and contains parameter values for which $P_{\eta(\k)}$ has negative coefficients}. The approach  is a generalization of the methods that are used in \cite[{\S 2.2}]{FKWY} for $n=2$, and in \cite{Yuruk:Thesis} for $n=3$.

For the particular method we use in this section, it is crucial that the coefficients of the terms that correspond to the vertices of $\newton{P_{\eta(\k)}}$ have positive sign. If this is not the case, then  Theorem \ref{thm:suf} implies that multistationarity is enabled. Therefore, throughout this section we only consider parameters $\k \in \R_{>0}^{6n}$ where
	\begin{equation}
	\label{eq:FirstLastMinorsPositive}
	\k_3\k_{12} - \k_6\k_9 > 0 \qquad \textrm{and}\qquad \k_{6n-9}\k_{6n} - \k_{6n-3} \k_{6n-6} > 0 .
	\end{equation}	
For the cases $n=2$ and $n=3$, the conditions   in \eqref{eq:FirstLastMinorsPositive} are sufficient for all minors of the matrix $M_{\k}$ in \eqref{kappamatrix}  to be positive. 
	
\begin{remark}
For $n=2$, assuming that the minors of the matrix $M_{\k}$ are all positive, then $P_{\eta}$ has only one term  that can have  negative coefficient. The exponent that corresponds to this particular term is contained in the interior of $\newton{P_{\eta}}$. Therefore, as discussed in Remark~\ref{remark:SAGE}, $P_{\eta(\k)}$ is nonnegative if and only if it is a SONC polynomial. 
\end{remark}

In order to find the region $R$, we will restrict to a scenario where all coefficients of  $P_{\eta(\k)}$ corresponding to exponents at the boundary of $N(p_\eta)$ are nonnegative, and hence, only points in the interior of $\newton{P_{\eta(\k)}}$ can have negative coefficient. In particular, this holds when all minors of the matrix $M_{\k}$ are nonnegative, as then all coefficients of $A_2(x_2)$ are nonnegative by  \eqref{eq:abab} and only $A_{10}(x_2)$ contributes to negative terms to $P_{\eta(\k)}$. 
For $n>3$, the conditions in \eqref{eq:FirstLastMinorsPositive} are not enough to guarantee that all minors of $M_{\k}$ are positive. In Proposition~\ref{prop:AllMinorsPos}, we give a sufficient condition on $a_i,b_i,c_i$, {as defined in Equation~\eqref{eq:abc}}, that ensures that all minors of $M_{\k}$ are positive. 
\color{black}

\begin{proposition}
\label{prop:AllMinorsPos}
Assume that for a given $\k\in \R^{6n}_{>0}$, it holds that $a_{i+1}b_{i}-a_{i}b_{i+1}> 0$ (resp. $\geq 0$) for all $i \in  \{0, \dots, n-1\}$. Then, all $2$ by $2$ minors of $M_{\k}$ are positive (resp. nonnegative).
\end{proposition}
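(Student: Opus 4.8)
The plan is to reduce the whole statement to the monotonicity of a single sequence of ratios. First I would invoke the identity~\eqref{eq:abab}, which gives
\[
a_i b_j - a_j b_i = T_{i-1}T_{j-1}K_i K_j\left(\frac{\k_{6j+3}}{\k_{6j+6}} - \frac{\k_{6i+3}}{\k_{6i+6}}\right).
\]
Since $\eta\in\R^{3n}_{>0}$ forces every $T_{i-1}$ and every $K_i$ to be strictly positive, the sign of any such expression $a_i b_j - a_j b_i$ is governed entirely by the sign of the difference of ratios appearing in the parentheses. I therefore introduce the positive quantities $r_i := \k_{6i+3}/\k_{6i+6}$ for $i=0,\dots,n-1$, so that column $i$ (counting from $0$) of the matrix $M_{\k}$ in~\eqref{kappamatrix} is $(\k_{6i+3},\k_{6i+6})^\top = \k_{6i+6}\,(r_i,1)^\top$.

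With this notation I would first translate the hypothesis. Specializing the identity above to the pair of indices $(i+1,i)$ yields $a_{i+1}b_i - a_i b_{i+1} = T_i T_{i-1} K_{i+1} K_i\,(r_i - r_{i+1})$, whose positive prefactor shows that the assumption $a_{i+1}b_i - a_i b_{i+1}>0$ (resp. $\geq 0$) for every consecutive pair is \emph{exactly} the statement that $r_0,r_1,\dots,r_{n-1}$ is strictly (resp. weakly) decreasing. Next I would rewrite the generic minor of $M_{\k}$ in the same language: the $2\times 2$ minor built from columns $i<j$ equals
\[
\k_{6i+3}\k_{6j+6} - \k_{6j+3}\k_{6i+6} = \k_{6i+6}\k_{6j+6}\,(r_i - r_j),
\]
and the prefactor $\k_{6i+6}\k_{6j+6}$ is strictly positive. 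Hence this minor is positive (resp. nonnegative) precisely when $r_i > r_j$ (resp. $r_i \geq r_j$).

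The conclusion is then immediate from transitivity of the order relation: a strictly (resp. weakly) decreasing sequence satisfies $r_i > r_j$ (resp. $r_i \geq r_j$) for \emph{all} $i<j$, not merely for consecutive indices, so every $2\times 2$ minor of $M_{\k}$ is positive (resp. nonnegative). I do not expect any genuine obstacle here; the only care required is bookkeeping, namely matching the index conventions in~\eqref{eq:abab} to the columns of $M_{\k}$ and verifying that the consecutive hypothesis on the $a_i,b_i$ is exactly consecutive monotonicity of the $r_i$. Once the problem is rephrased through the ratios $r_i$, the proposition becomes the elementary observation that pairwise comparisons of neighbours propagate to all pairs.
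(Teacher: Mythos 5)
Your proof is correct and takes essentially the same approach as the paper: both arguments use \eqref{eq:abab} to reduce everything to the ratios $\k_{6i+3}/\k_{6i+6}$, translate the hypothesis into consecutive monotonicity of these ratios, and then pass from consecutive to arbitrary pairs (the paper does this via an explicit telescoping sum, which is just your transitivity step written out). No gap; the two proofs are the same argument in slightly different packaging.
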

\begin{proof}
We need to show that  $\k_{6i+3}\k_{6j+6} - \k_{6i+6}\k_{6j+3}> 0$, equivalently  $\tfrac{\k_{6i+3}}{\k_{6i+6}} - \tfrac{\k_{6j+3}}{\k_{6j+6}} > 0$ for all $j>i$. Note that
	\[ \frac{\k_{6i+3}}{\k_{6i+6}} - \frac{\k_{6j+3}}{\k_{6j+6}}  = \sum_{k=1}^{j-i}  \frac{\k_{6(i+k-1)+3}}{\k_{6(i+k-1)+6}} - \frac{\k_{6(i+k)+3}}{\k_{6(i+k)+6}}.   \] 
By hypothesis and \eqref{eq:abab}, all terms in the sum are positive and hence, so is their sum. The argument for nonnegativity is analogous. This concludes the proof. 
\end{proof}
	
In the rest of this section, we only consider parameter vectors $\k\in \R^{6n}_{>0}$  for which all $2$ by $2$ minors of $M_{\k}$ are nonnegative and \eqref{eq:FirstLastMinorsPositive} holds. These yield to parameter vectors $\eta\in \R^{3n}_{>0}$ such that the set of exponents of the monomials of $P_{\eta}$ that may have negative coefficient is 
\begin{align*}
		A_{-}:=\left\{ (2,1),(3,1),\dots, (2n-2,1) \right\}, 
\end{align*}
and we define 
\begin{equation*}\label{eq:iota}
	\iota_i := (i+1,1), \qquad \text{for}\quad  i\in \{1,\ldots,2n-3\}.
\end{equation*}
Furthermore, we know from Theorem~\ref{thm:NPverts} that $\newton{P_{\eta}}$ is a hexagon for $n\geq 2$, and let us label its vertices as follows:
\begin{equation*}
		\label{Equation:DefinitionAlphas}
		\alpha_1 := (0,0), \quad \alpha_2 := (2,2), \quad \alpha_3 := (2n,1) , \quad \alpha_4 := (0,1), \quad \alpha_5 := (n,0), \quad \alpha_6 := (3n-2,2).
		\end{equation*}
{Under the current assumptions, the coefficients corresponding to these vertices are nonnegative. 
The triangles
\begin{equation}\label{eq:delta}
\Delta_1:= \conv(\{\alpha_1,\alpha_2,\alpha_3\})\qquad \textrm{and}\qquad \Delta_2:= \conv(\{\alpha_4,\alpha_5,\alpha_6\}),
\end{equation}  
contain $\iota_i$  in the relative interior for all $i\in \{1,\ldots,2n-3\}$. 
Hence, our goal now is to construct  two circuit polynomials with supports $\Delta_1\cup \{\iota_i\} $, $\Delta_2\cup \{\iota_i\} $ for each exponent $\iota_i\in A_{-}$, such that we can apply Theorem~\ref{Theorem:CircuitPolynomialNonnegativity} repeatedly to ensure nonnegativity of $P_{\eta}$. }

{Consider $\iota_i \in A_-$, $x=(x_2,x_3)$, and the two polynomials}: 
\begin{align}
\label{eqn:circuitpolys}
	C_{1,i}(x) = \sum_{j=1}^3 \tfrac{\coef{P_{\eta}}{\alpha_j}}{2n-3} x^{\alpha_j} + \delta_{1,i} x^{\iota_i} \quad \text{ and }\quad
	C_{2,i}(x) = \sum_{j=4}^6 \tfrac{\coef{P_{\eta}}{\alpha_j}}{2n-3} x^{\alpha_j} + \delta_{2,i} x^{\iota_i},
\end{align}
where $\delta_{1,i}, \delta_{2,i} \in \R$ are chosen  such that $\delta_{1,i} + \delta_{2,i} =  \coef{P_{\eta}}{\iota_i}$. 
	By construction, $\newton{C_{1,i}}$ and $\newton{C_{2,i}}$ are respectively $\Delta_1$, $\Delta_2$ for all $i\in \{1,\ldots,2n-3\}$
	and hence $C_{1,i}$ and $C_{2,i}$ are  circuit polynomials.

We denote the circuit numbers associated with $C_{1,i}$ and $C_{2,i}$ with $\Theta_{1,i}$ and $\Theta_{2,i}$, respectively. Then, for $i\in \{1,\ldots,2n-3\}$,	it holds
	{\fontsize{10}{10}\selectfont
		\begingroup
		\allowdisplaybreaks
		\begin{align}
		\label{eq:CN1}
		\Theta_{1,i}  &= \left( \frac{\coef{C_{1,i}}{\alpha_1}}{\frac{2n-1-i}{4n-2}} \right)^{\frac{2n-1-i}{4n-2}} 
		\left(\frac{\coef{C_{1,i}}{\alpha_2}}{\frac{2n-1-i}{4n-2}} \right)^{\frac{2n-1-i}{4n-2}} 
		\left(\frac{\coef{C_{1,i}}{\alpha_3}}{\frac{2i}{4n-2}} \right)^{\frac{2i}{4n-2}} \\
		&= \left(\frac{\coef{P_\eta}{\alpha_1}}{(2n-1-i)} \right)^{\frac{2n-1-i}{4n-2}} 
		\left(\frac{\coef{P_\eta}{\alpha_2}}{(2n-1-i)} \right)^{\frac{2n-1-i}{4n-2}} 
		\left(\frac{\coef{P_\eta}{\alpha_3}}{2i} \right)^{\frac{2i}{4n-2}} \left( \frac{4n-2}{2n-3} \right), \nonumber
		\end{align} \endgroup}%
and similarly,
		{\fontsize{10}{10}\selectfont
		\begin{align}
		\label{eq:CN2}
		\Theta_{2,i} & = \left(\frac{\coef{P_\eta}{\alpha_4}}{(4n-2i-4)} \right)^{\frac{4n-2i-4}{4n-2}} 
		\left(\frac{\coef{P_\eta}{\alpha_5}}{(i+1)} \right)^{\frac{i+1}{4n-2}} 
		\left(\frac{\coef{P_\eta}{\alpha_6}}{i+1} \right)^{\frac{i+1}{4n-2}} \left( \frac{4n-2}{2n-3} \right)	. 	
		\end{align}
	}%

For each $i\in \{1,\ldots,2n-3\}$ we define the region $R_i\subseteq \R_{> 0}^{6n}$ as follows:
	 {	\begin{align*}
	 	R_i :=\{\k \in \R_{> 0}^{6n} & :   \coef{P_{\eta(\k)}}{\iota_i} \geq - \Theta_{1,i} - \Theta_{2,i}, \  \\
		 & \quad 	\k_3\k_{12} - \k_6\k_9 > 0,   \ \k_{6n-9}\k_{6n} - \k_{6n-3} \k_{6n-6} > 0, \text{  and}\\ &  \quad
	 	\k_{6\ell+3}\k_{6j+6} - \k_{6\ell+6}\k_{6j+3}\geq  0  \text{ for  all } \ell<j,\ \ell={1},\ldots, n-3\},
	 	\end{align*}}%
where $ \Theta_{1,i}, \Theta_{2,i}$ are defined using $\eta(\k)$. {Note that if $\k \in R_i$, all minors of $M_{\k}$ are nonnegative, and furthermore, the inequalities in \eqref{eq:FirstLastMinorsPositive} hold.} We denote the intersection of these $2n-3$ regions by
		\begin{align}
			\label{eqn:intersectedregions}
			R:=\bigcap\limits_{i=1}^{2n-3} R_i.
		\end{align}
	
\begin{theorem}
\label{Theorem:CircuitMono}\label{Proposition:CircuitNonEmptyIntersection}
With $R\subseteq \R^{6n}_{>0}$ as in \eqref{eqn:intersectedregions}, it holds:
\begin{itemize}
	\item[(i)] There exists $\k\in R$ such that $P_{\eta(\k)}$ has some negative coefficient. In particular, $R\neq \emptyset$.
	\item[(ii)] $P_{\eta(\k)}(x_2,x_3)>0$  for any $\k \in R$ and $(x_2,x_3)\in \R^2_{>0}$, and hence $\k$ precludes multistationarity.
\end{itemize}
\end{theorem}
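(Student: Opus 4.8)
The plan is to prove (ii) by exhibiting an explicit SONC decomposition of $P_{\eta(\k)}$, and to prove (i) by a controlled deformation inside the region where all $2\times 2$ minors of $M_\k$ are positive. For (ii), I would first record, using Theorem~\ref{thm:NPverts} together with the sign analysis of Lemma~\ref{prop:edgespositive}, that when all minors of $M_\k$ are nonnegative the only monomials of $P_{\eta(\k)}$ that can carry a negative coefficient are those supported at the interior exponents $\iota_1,\dots,\iota_{2n-3}$ on the line $x_3=1$: every coefficient at a vertex $\alpha_1,\dots,\alpha_6$, at a boundary-edge exponent, and at the two remaining interior-line exponents $(1,1)$ and $(2n-1,1)$ is nonnegative (the latter two being positive multiples of $a_1$ and of $b_{n-2}c_{n-1}$, as seen by direct inspection of $A_{10}$). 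I would then write $P_{\eta(\k)} = G + \sum_{i=1}^{2n-3}\big(C_{1,i}+C_{2,i}\big)$, where $G$ collects exactly the monomials with guaranteed nonnegative coefficients and the $C_{1,i},C_{2,i}$ are the circuit polynomials in \eqref{eqn:circuitpolys}. This identity is checked coefficientwise: each vertex coefficient $\coef{P_{\eta(\k)}}{\alpha_j}$ is recovered from the $2n-3$ equal shares $\tfrac{1}{2n-3}\coef{P_{\eta(\k)}}{\alpha_j}$, and each interior coefficient from $\delta_{1,i}+\delta_{2,i}=\coef{P_{\eta(\k)}}{\iota_i}$.

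To finish (ii), for $\k\in R$ I would choose the split $\delta_{1,i}:=-\Theta_{1,i}$ and $\delta_{2,i}:=\coef{P_{\eta(\k)}}{\iota_i}+\Theta_{1,i}$. Then $\delta_{1,i}\geq-\Theta_{1,i}$ holds trivially, while $\delta_{2,i}\geq-\Theta_{2,i}$ is exactly the defining inequality $\coef{P_{\eta(\k)}}{\iota_i}\geq-\Theta_{1,i}-\Theta_{2,i}$ of $R_i$. By Theorem~\ref{thm:nonneg} each $C_{1,i}$ and $C_{2,i}$ is nonnegative on $\R^2_{>0}$; since $G$ has all nonnegative coefficients and contains the strictly positive monomial $c_0 x_2$, it follows that $P_{\eta(\k)}>0$ on $\R^2_{>0}$, and Corollary~\ref{cor:crit} then yields that $\k$ precludes multistationarity.

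For (i), the key structural observation is that every entry of $M_\k$, and hence every minor, depends only on the rate constants $\k_{6i+3},\k_{6i+6}$, whereas the interior coefficients of $A_{10}$ also involve the independent quantities $K_i,L_i$. I would therefore fix $\k_{6i+3},\k_{6i+6}$ so that all minors of $M_\k$ are strictly positive via Proposition~\ref{prop:AllMinorsPos}; this places us in the regime of (ii), forces the six vertex coefficients to be strictly positive, and in particular makes every circuit number $\Theta_{1,i},\Theta_{2,i}$ strictly positive. Starting from a choice of the remaining parameters for which all interior coefficients are positive (so that $\k$ lies strictly inside every $R_i$), I would deform the free parameters so as to drive a single interior coefficient $\coef{P_{\eta(\k)}}{\iota_{i_0}}$ continuously through $0$ to a small negative value — the negative contribution $-i_0\,b_{i_0}$ to this coefficient is what makes this possible — while keeping the vertex data, and hence the circuit numbers, bounded below by a positive constant. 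For a negative value of small enough magnitude the inequality $\coef{P_{\eta(\k)}}{\iota_{i_0}}\geq-\Theta_{1,i_0}-\Theta_{2,i_0}$ still holds, the remaining $R_i$ are preserved by continuity, and the minor conditions are untouched; this produces the desired $\k\in R$ with a negative coefficient, and in particular shows $R\neq\emptyset$.

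The routine part is the coefficientwise bookkeeping and the application of Theorem~\ref{thm:nonneg} in (ii). The main obstacle is (i): one must simultaneously (a) keep all $2\times2$ minors nonnegative so that the hexagonal Newton-polytope structure and the edge-nonnegativity of Lemma~\ref{prop:edgespositive} persist, (b) certify that some interior coefficient can actually be made negative within this constrained region rather than being forced positive by the minor inequalities, and (c) control its magnitude against the circuit-number thresholds, which themselves vary with the parameters. Making the deformation rigorous — in particular pinning down an explicit parameter family and a uniform lower bound on $\Theta_{1,i_0},\Theta_{2,i_0}$ along it — is the delicate step; the positivity statement (ii) is comparatively mechanical once Theorem~\ref{thm:NPverts} and Theorem~\ref{thm:nonneg} are available.
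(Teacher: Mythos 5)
Your part (ii) is correct and follows essentially the paper's own argument: the same decomposition of $P_{\eta(\kappa)}$ into the circuit polynomials \eqref{eqn:circuitpolys} plus leftover terms of nonnegative coefficient, followed by Theorem~\ref{thm:nonneg}. Your asymmetric split $\delta_{1,i}=-\Theta_{1,i}$, $\delta_{2,i}=\coef{P_{\eta(\kappa)}}{\iota_i}+\Theta_{1,i}$ works just as well as the paper's symmetric one, and your explicit bookkeeping of the exponents $(1,1)$ and $(2n-1,1)$ --- which lie in $\Delta_1\cup\Delta_2$ but are covered neither by the circuit supports nor by $A_-$ --- is in fact more careful than the paper's definition of $A_+$, which literally omits these two (positive) terms from the decomposition \eqref{Equation:CircuitDecomp}.

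Part (i) is where there is a genuine gap. Your strategy (deform inside the minors-positive region until a single interior coefficient passes through $0$, then compare against the circuit-number thresholds by continuity) is the same as the paper's, but the entire mathematical content of (i) lies in the construction you defer. The difficulty is concrete: once $\kappa_{6i+3},\kappa_{6i+6}$ are fixed to pin down $M_\kappa$, the remaining freedom is in the $K_i,L_i$, and by \eqref{eq:T} and \eqref{eq:abc} one then has $b_i=a_i\,\kappa_{6i+3}/\kappa_{6i+6}$ with \emph{fixed} ratio. Hence you cannot enlarge the negative contribution $-i_0\,b_{i_0}$ to $\coef{P_{\eta(\kappa)}}{\iota_{i_0}}$ without proportionally enlarging $a_{i_0}$, which feeds a competing positive term $i_0\,a_{i_0}c_0$ (from $\sum_{i+j=i_0}(i-j)a_ic_j$) into the very same coefficient: along the natural deformation that scales $K_{i_0}$ and $L_{i_0}$ together (so that the $T_j$, and with them the vertex coefficients and circuit numbers, stay fixed) the slope of this coefficient is $i_0\,a_{i_0}\bigl(c_0-\kappa_{6i_0+3}/\kappa_{6i_0+6}\bigr)$, whose sign depends entirely on the fixed data. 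So ``the negative contribution $-i_0 b_{i_0}$ is what makes this possible'' is not automatic; it must be engineered, and simultaneously one must verify that the other $2n-4$ interior coefficients, which also move, stay above their thresholds $-\Theta_{1,i}-\Theta_{2,i}$. This is exactly what the paper's explicit family \eqref{eqn:oneparamfamily} accomplishes: the choices $\kappa_{6i+3}=n-i$, $\kappa_{6i+4}=2$, $\kappa_{6n-5}=k$ make all minors of $M_\kappa$ positive, make the coefficient $\mathcal{C}_\ell(k)$ of $x_2^\ell$ in $A_{10}$ positive for every $\ell\neq n$ and every $k>0$ (Lemma~\ref{Lemma:AppendixCoefs}), and make $\mathcal{C}_n(k)$ linear in $k$ with negative leading coefficient; the continuity argument at its zero $k_0$ then yields points of $R$ with $\mathcal{C}_n(k)<0$. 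Note also that your claimed starting point --- a parameter vector with all interior coefficients positive --- is itself something that must be exhibited (in the paper it comes for free from the same family with $k<k_0$). Until such a family with these verified properties is written down, part (i), and with it $R\neq\emptyset$, remains unproven.
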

\begin{proof}
We start by proving (ii). Let $\k \in R$, and $\eta:=\eta(\k)$. Let $\support{P_{\eta}}$ be the set of   exponents of $P_{\eta}$, and define $A_+=\support{P_{\eta}} \setminus \left(A_- \cup \Delta_1\cup \Delta_2 \right)$ (with $\Delta_1,\Delta_2$ as in \eqref{eq:delta}. We rewrite $P_{\eta}$ using the circuit polynomials defined in \eqref{eqn:circuitpolys} as follows:
\begin{align}
		\label{Equation:CircuitDecomp}
		P_{\eta}(x) = \sum_{j=1}^{2} \sum_{i=1}^{2n-3} C_{j,i}(x) + \sum_{\alpha \in A_+}\coef{P_{\eta}}{\alpha}x^{\alpha}.
\end{align}
For any $\alpha \in A_+$, it holds that $\coef{P_\eta}{\alpha} \geq 0$.	Recall that the coefficient of $x^{\iota_i}$ in $C_{j,i}$ equals $\delta_{j,i}$, and these could be chosen arbitrarily as long as $\delta_{1,i} + \delta_{2,i} = \coef{P_\eta}{\iota_i}$. This equality holds if we pick
\[ \delta_{j,i}=-\Theta_{j,i}+\frac{\coef{P_\eta}{\iota_i}+ \Theta_{1,i} + \Theta_{2,i}}{2}, \qquad j=1,2.\]
Furthermore, as in $R$ it holds $\coef{P_\eta}{\iota_i} \geq - \Theta_{1,i} - \Theta_{2,i}$, we find that $\delta_{j,i} \geq -\Theta_{j,i}$ for $j=1,2$ and hence $C_{1,i}$ and $C_{2,i}$ are nonnegative circuit polynomials by Theorem \ref{Theorem:CircuitPolynomialNonnegativity}. By the decomposition \eqref{Equation:CircuitDecomp}, $P_\eta$ is nonnegative as well, and combined with Corollary~\ref{cor:crit}, we have shown (ii). 

\smallskip
To prove (i)  we construct explicitly a point satisfying the conditions. We consider the one parameter family of reaction rate constants $\kappa(k)\in \R^{6n}_{>0}$ with $k>0$ defined as	
\begin{align}
		\label{eqn:oneparamfamily}
		\k_j = \begin{cases} 
		{n-i} & \text{if }  j = 6i+3 \text{ and } i<n, \\
		2 & \text{if }  j = 6i+4 \text{ and } i<n, \\
		k & \text{if } j = 6n-5 = 6(n-1) + 1, \\
		1 & \text{otherwise},
		\end{cases}
\end{align}
and let $\eta(k):=\eta(\k(k))\in \R^{3n}_{>0}$. We show that there exists an interval $I \subset \R_{>0}$ such that for any $k \in I$, $\k(k) \in R$ and $P_{\eta(k)}$ contains a negative term. For any such $\kappa(k)$, we have $L_i = 1$ for any $i \in \left\{0,\dots,n-1\right\}$, and hence, $b_i = c_i =T_i$ for all $i \in \left\{0,\dots,n-1\right\}$. Similarly, we have
\begin{align}\label{eq:T_under_assump}
K_i=\begin{cases} 
\frac{1}{n-i+1} & \text{if } i \in \left\{0, \dots, n-2 \right\}, \\
\frac{k}{2} & \text{if } i = n-1,
\end{cases}  \quad  \text{and} \quad
T_i = \begin{cases} 
\frac{n-i}{n+1} & \text{if } i \in \left\{-1,0, \dots, n-2 \right\}, \\
\frac{k}{n+1} & \text{if } i = n-1. \\
\end{cases}
\end{align}

We start by noting that for each  $\kappa(k)$ and for any $i \in \left\{0, \dots, n-1 \right\}$,	 it holds		
\[\left(\frac{\k_{6{i}+3}}{\k_{6{i}+6}} - \frac{\k_{6(i+1)+3}}{\k_{6(i+1)+6}} \right) = (n-i)-(n-i-1) = 1 > 0.\] 
Hence, all minors of $M_{\k(k)}$ are positive by Proposition~\ref{prop:AllMinorsPos}, and so are the coefficients of $P_{\eta(k)}$ corresponding to the exponents $\alpha_1,\dots,\alpha_6$. It follows from \eqref{eq:CN1} and \eqref{eq:CN2} that  $\Theta_{1,i} (k)+ \Theta_{2,i}(k)  >0$ for all $i\in \{1,\ldots,2n-3\}$, where $\Theta_{1,i}(k)$, $\Theta_{2,i}(k)$ are the circuit numbers   obtained for $\kappa(k)$.

Let $\mathcal{C}_{\ell}(k) := \coef{P_{\eta(k)}}{\iota_{\ell-1}}$, which is the coefficient of $x_2^{\ell}$ in the polynomial $A_{10}(x_2)$. We show in Lemma~\ref{Lemma:AppendixCoefs} in the Appendix, that for any $k>0$, $\mathcal{C}_\ell(k)>0$ when $\ell\in \left\{2,\dots,2n-2\right\}$ and $\ell\neq n$. Thus, $\k(k)\in R_\ell$ for any $\ell\neq n$.

We consider the coefficient of $x_2^n$ in $A_{10}(x_2)$,  see \eqref{eq:p_decomp}. As for $\k(k)$ we have $b_i c_j=T_i T_j$ and $\sum_{ i+j=n-2 }  (j+1-i) T_i\, T_j  = \sum_{ i+j=n-2 }  T_i\, T_j$, we have
\[ \mathcal{C}_{n}(k) = -  (n-1) \,b_{n-1} 
		+    \sum_{ i+j=n-2 }  T_i\, T_j  +   \sum_{\substack{i+j=n-1 \\ i > j}}   (i-j) \left(\,a_i\, c_j 	-\,a_j\, c_i  \right). \]
Using the identities \eqref{eq:abc}, \eqref{eq:T_under_assump}, and \eqref{eq:ac_differences} we find
$a_{n-1}c_0-a_0c_{n-1}= \tfrac{k\, (n-1)}{(n+1)^2}$ and $b_{n-1}=\tfrac{k}{n+1}$. This gives:

$$ \mathcal{C}_{n}(k) = k \left(\frac{n-1}{n+1}\right) \left(\frac{n-1}{n+1} - 1  \right)  +  \sum_{ i+j=n-2 }  T_i\, T_j \,  +  \sum_{\substack{i+j=n-1 \\ i > j,  i \neq n-1}}   (i-j) \left(\,a_i\, c_j -\,a_j\, c_i  \right).$$
The variable $k$ does not appear in the last two summands of $\mathcal{C}_{n}(k)$ and hence, 
$\mathcal{C}_{n}(k)$ is a linear function of $k$ with a negative leading term and a positive constant term. 
		
Let $k_0>0$ be the zero of $\mathcal{C}_n(k)$, and consider the function $g(k)=\mathcal{C}_n(k)+\Theta_{1,n-1}(k) + \Theta_{2,n-1}(k)$. 
Note that $g(k_0)>0$ as $\mathcal{C}_n(k_0)=0$. As $g$ is continuous in $k$, there exists   $\epsilon_0>0$ such that for all $k\in (k_0,k_0+\epsilon_0)$, $g(k)>0$. Hence $\kappa(k)\in R_{n-1}$. By definition of $R$, it then follows that $\kappa(k)\in R$ and by construction $\mathcal{C}_n(k)<0$ for any such $k$, showing (i).
\end{proof}

In the next example, we give the conditions to be in $R$ explicitly when $n=3$ and $\k(k)$ is the parameter vector constructed in \eqref{eqn:oneparamfamily} in  the proof of Theorem~\ref{Theorem:CircuitMono}(i).

\begin{example}
Let $n=3$, $\k(k)=(1,1,3,2,1,1,1,1,2,2,1,1,k,1,1,2,1,1)$ be as in \eqref{eqn:oneparamfamily}. The Newton polytope $\newton{P_{\eta(k)}}$   is shown in the right panel of Figure~\ref{Figure:Hexa}. The circuit numbers from  \eqref{eq:CN1} and \eqref{eq:CN2} are
\begin{align*}
\Theta_{1,1} & = 2^{-\frac{32}{10}} \ 3^{-1} \ 5 \  k^{\frac{4}{10}}, &  		\Theta_{2,1}  & = 2^{-\frac{28}{10}} \ 3^{-\frac{16}{10}} \ 5  \ k^{\frac{6}{10}},   \\
			\Theta_{1,2} &= 2^{-\frac{26}{10}} \ 3^{-\frac{16}{10}} \ 5 \ k^{\frac{8}{10}}, & 		\Theta_{2,2} & = 2^{-3} \ 3^{-\frac{16}{10}} \ 5  \ k^{\frac{9}{10}},  \\			
			\Theta_{1,3} &= 2^{-\frac{32}{10}} \ 3^{-\frac{16}{10}} \ 5 \ k^{\frac{12}{10}}, & 	\Theta_{2,3} &=2^{-\frac{44}{10}} \ 3^{-1} \ 5  \ k^{\frac{12}{10}}. 
\end{align*}		
The vectors $\k(k)\in \R^{18}_{>0}$ are in the regions $R_1, R_2$ and $R_3$ in the parameter space  if they satisfy the following inequalities respectively
			\begin{align*}
			\frac{6k+1}{8} & \geq -2^{-\frac{32}{10}} \ 3^{-1} \ 5 \  k^{\frac{4}{10}} \ - \ 2^{-\frac{28}{10}} \ 3^{-\frac{16}{10}} \ 5  \ k^{\frac{6}{10}}, \\
			\frac{3-k}{4} & \geq - 2^{-\frac{26}{10}} \ 3^{-\frac{16}{10}} \ 5 \  k^{\frac{8}{10}} \ - \ 2^{-3} \ 3^{-\frac{16}{10}} \ 5  \ k^{\frac{9}{10}}, \\
			\frac{7k+4}{16} & \geq - 2^{-\frac{32}{10}} \ 3^{-\frac{16}{10}} \ 5 \  k^{\frac{12}{10}} \ - \ 2^{-\frac{44}{10}} \ 3^{-1} \ 5  \ k^{\frac{12}{10}},
			\end{align*}
respectively. As shown in the proof of Theorem~\ref{Proposition:CircuitNonEmptyIntersection}, the first and the third inequalities hold for all $k>0$. The second inequality holds if $k<10.03$, and the coefficient corresponding to the term $\iota_{n-1}$ in $P_{\eta(k)}$ is negative if $k>3$. Therefore, for $10.03>k>3$,  $P_{\eta(k)}$ contains a negative term, but $P_{\eta(k)}$  is positive. Hence multistationarity is precluded for $\k(k)$.
\end{example}

	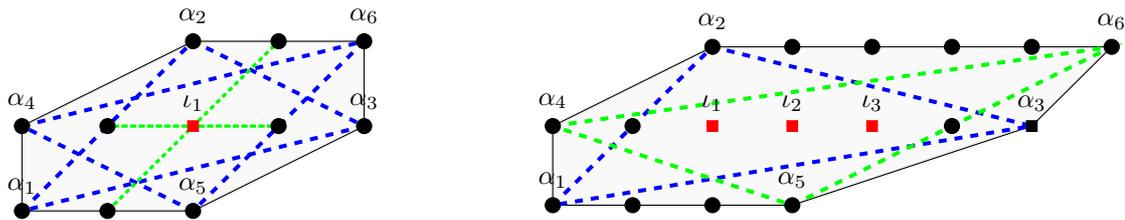
\begin{figure}[t]
			
			\begin{minipage}{.5\textwidth}
				\centering
				\begin{tikzpicture}[scale=0.75]
				\foreach \x in {0,1,2,...,5}
				\foreach \y in {0,1,2} 
				\draw[fill=none,draw=none] (3/2*\x,3/2*\y) circle (1pt) coordinate (m-\x-\y);

				\draw[line width=1.5pt,fill=gray,ultra nearly transparent] (m-4-2) -- (m-2-2) -- (m-0-1) -- (m-0-0) -- (m-2-0) -- (m-4-1) -- cycle;
				\draw  (m-4-2) -- (m-2-2) -- (m-0-1) -- (m-0-0) -- (m-2-0) -- (m-4-1) -- cycle;
				\draw[line width=1.5pt,dashed,blue]  (m-0-0) -- (m-2-2) -- (m-4-1) -- cycle;
				\draw[line width=1.5pt,dashed,blue]  (m-0-1) -- (m-4-2) -- (m-2-0) -- cycle;	
				\draw[line width=1.2pt,dotted,green]  (m-1-0) -- (m-3-2) -- cycle;
				\draw[line width=1.2pt,dotted,green]  (m-1-1) -- (m-3-1) -- cycle;
				
				\node[inner sep=2pt,circle,draw,fill,black,label={$\Vector{\alpha_6}$}] at (m-4-2) {};
				\node[inner sep=2pt,circle,draw,fill,black,label={}] at (m-3-2) {};
				\node[inner sep=2pt,circle,draw,fill,black,label={$\Vector{\alpha_2}$}] at (m-2-2) {};
				\node[inner sep=2pt,circle,draw,fill,black,label={$\Vector{\alpha_4}$}] at (m-0-1) {};
				\node[inner sep=2pt,circle,draw,fill,black,label={}] at (m-1-1) {};
				\node[inner sep=2pt,rectangle,draw,fill,red,label={$\Vector{\iota_1}$}] at (m-2-1) {};	
				\node[inner sep=2pt,circle,draw,fill,black,label={}] at (m-3-1) {};
				\node[inner sep=2pt,circle,draw,fill,black,label={$\Vector{\alpha_3}$}] at (m-4-1) {};

				\node[inner sep=2pt,circle,draw,fill,black,label={$\Vector{\alpha_5}$}] at (m-2-0) {};
				\node[inner sep=2pt,circle,draw,fill,black,label={}] at (m-1-0) {};			
				\node[inner sep=2pt,circle,draw,fill,black,label={$\Vector{\alpha_1}$}] at (m-0-0) {};			
				\end{tikzpicture}
			\end{minipage}%
			\begin{minipage}{.5\textwidth}
				\centering
				\begin{tikzpicture}[scale=0.70]
				\foreach \x in {0,1,2,...,9}
				\foreach \y in {0,1,2} 
				\draw[fill=none,draw=none] (3/2*\x,3/2*\y) circle (1pt) coordinate (m-\x-\y);

				\draw[line width=1.5pt,fill=gray,ultra nearly transparent] (m-7-2) -- (m-2-2) -- (m-0-1) -- (m-0-0) -- (m-3-0) -- (m-6-1) -- cycle;
				\draw  (m-7-2) -- (m-2-2) -- (m-0-1) -- (m-0-0) -- (m-3-0) -- (m-6-1) -- cycle;
				\draw[line width=1.5pt,dashed,blue]  (m-0-0) -- (m-2-2) -- (m-6-1) -- cycle;
				\draw[line width=1.5pt,dashed,green]  (m-0-1) -- (m-7-2) -- (m-3-0) -- cycle;

				\node[inner sep=2pt,circle,draw,fill,black,label={$\Vector{\alpha_6}$}] at (m-7-2) {};
				\node[inner sep=2pt,circle,draw,fill,black,label={}] at (m-6-2) {};
				\node[inner sep=2pt,circle,draw,fill,black,label={}] at (m-5-2) {};
				\node[inner sep=2pt,circle,draw,fill,black,label={}] at (m-4-2) {};
				\node[inner sep=2pt,circle,draw,fill,black,label={}] at (m-3-2) {};
				\node[inner sep=2pt,circle,draw,fill,black,label={$\Vector{\alpha_2}$}] at (m-2-2) {};
				\node[inner sep=2pt,circle,draw,fill,black,label={$\Vector{\alpha_4}$}] at (m-0-1) {};
				\node[inner sep=2pt,circle,draw,fill,black,label={}] at (m-1-1) {};
				\node[inner sep=2pt,rectangle,draw,fill,red,label={$\Vector{\iota_1}$}] at (m-2-1) {};	
				\node[inner sep=2pt,rectangle,draw,fill,red,label={$\Vector{\iota_2}$}] at (m-3-1) {};
				\node[inner sep=2pt,rectangle,draw,fill,red,label={$\Vector{\iota_3}$}] at (m-4-1) {};
				\node[inner sep=2pt,circle,draw,fill,black,label={}] at (m-5-1) {};
				\node[inner sep=2pt,rectangle,draw,fill,black,label={$\Vector{\alpha_3}$}] at (m-6-1) {};
				
				\node[inner sep=2pt,circle,draw,fill,black,label={$\Vector{\alpha_5}$}] at (m-3-0) {};
				\node[inner sep=2pt,circle,draw,fill,black,label={}] at (m-2-0) {};
				\node[inner sep=2pt,circle,draw,fill,black,label={}] at (m-1-0) {};			
				\node[inner sep=2pt,circle,draw,fill,black,label={$\Vector{\alpha_1}$}] at (m-0-0) {};			
				\end{tikzpicture}	
			\end{minipage}%
		
		\caption{{\small The support of $P_\eta$ for $n=2,3$, black circles and red squares correspond to the exponents with positive and negative coefficients, respectively. On the left, blue triangles and green line segments denote the Newton polytopes of the circuit polynomials used in \cite[Theorem 3.5]{FKWY}. On the right, blue and green triangles correspond to the simplices $\Delta_1$ and $\Delta_2$ from \eqref{eq:delta}, respectively.}}
		\label{Figure:Hexa}
	\end{figure}
	
We end this section with some remarks. 
\begin{remark}
\label{Remark:CoverComparison}
If the premise of Proposition \ref{prop:AllMinorsPos} holds, then every term of $P_{\eta(\k)}$ with  exponent in $A_+=\support{P_{\eta}} \setminus \left(A_- \cup \Delta_1\cup \Delta_2 \right)$ has a positive coefficient. This means that we can construct further circuit polynomials using the exponents in $A_+$ as vertices and in $A_-$ as  inner terms. For example, a similar approach was taken for the case $n=2$ in \cite[Theorem 3.5]{FKWY}, where  we considered two additional $1$-dimensional circuits given by $\{(1,0),(2,1),(3,2)\}$ and $\{(1,1),(2,1),(3,1)\}$ (green line segments in the left pane in Figure \ref{Figure:Hexa}). These were used to formulate a sufficient condition for monostationarity stronger than  that of Theorem~\ref{Theorem:CircuitMono}. In Theorem~\ref{Theorem:CircuitMono}, we omit these extra polynomials in  order to keep the notation tidy.

The region $R$ in \eqref{eqn:intersectedregions} is a carefully constructed set, and hence Theorem \ref{Theorem:CircuitMono} might only address a limited part of the monostationarity region. 
The ideas presented in this section can be used when $P_{\eta(\k)}$ has an exponent with a negative coefficient that lies on a positive dimensional face $F$ of its Newton polytope, by applying them to the polynomial  restricted to $F$.
\color{black}
\end{remark}

\begin{remark}
\label{Remark:REPandSAGE}
We note that if the values of the reaction rate constant $\k$ is  known, then we can find a lower bound for $P_{\eta(\k)}(x_2,x_3)$ by using various convex and polynomial optimization methods such as solving a relative entropy program induced by the SONC/SAGE decomposition \cite[Example 3.1]{Chandrasekaran:Shah:RelativeEntropy} \color{black} or using semidefinite programming \cite{Blekherman:Parrilo:Thomas}. However, for the results presented here, the parameters are a priori symbolic. Hence, we instead use the method of choosing circuit covers to obtain inequalities on the parameters, which yields symbolic necessary conditions for multistationarity.
\end{remark}

\section{\bf Connectivity}\label{section:connected}

As motivated in the introduction, even in lack of a full description of the parameter region of multistationarity, it is of interest to understand the topological properties of the region. Here we show that the subset of parameters enabling multistationarity and that of those precluding multistationarity are both path connected in the space of reaction rate constants.  

The strategy we follow, which is applicable to other networks, consists of two parts: first, we identify a subregion of the region of interest that is path connected. For multistationarity, it is the subregion given by Theorem~\ref{thm:suf}, making that theorem critical for this section. For monostationarity, we consider the subregion characterized by  $P_\eta$ having all coefficients positive (non-emptiness of this region is a consequence of our results here). Afterwards, we show that any other point in the region can be joined to a point in the selected subregion via a continuous path. This is done by exploiting again properties of the Newton polytope of $P_\eta$, but now viewing the polynomial as a polynomial in two of the entries of $\eta$.

	
\smallskip	
	Let $X \subseteq \R^{6n}_{>0}$ be the set of parameter vectors $\k$ that enable multistationarity. We prove that $X$ is path connected in two steps: First we show in Lemma \ref{lem:connectedvertex} that a subset $Y$ of $X$ is path connected, and then we construct a path in $X$ from any $\eta$ that enables multistationarity to the subset $Y$  in Theorem \ref{thm:multiconnected}.
	
	\begin{lemma}\label{lem:connectedvertex}
		Let $Y \subseteq \R^{6n}_{>0}$ consist of the parameters $\k\in \R^{6n}_{>0}$  such that $\k_{6n-9}\k_{6n}-\k_{6n-6}\k_{6n-3}<0.$ Then, $Y\subseteq X$ is path connected.
	\end{lemma}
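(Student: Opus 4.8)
The plan is to dispose of the inclusion $Y\subseteq X$ immediately and then concentrate on path connectedness, which is where the actual work lies. The defining inequality of $Y$, namely $\kappa_{6n-9}\kappa_{6n}-\kappa_{6n-6}\kappa_{6n-3}<0$, is literally the second of the two sufficient conditions appearing in Theorem~\ref{thm:suf} (after commuting the factors in the product $\kappa_{6n-3}\kappa_{6n-6}$). Hence every $\kappa\in Y$ enables multistationarity, so $Y\subseteq X$ with no further argument needed.

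For path connectedness I would exploit that the constraint cutting out $Y$ is multiplicative and involves only four of the $6n$ coordinates. The key observation is that, after passing to logarithmic coordinates, this constraint becomes affine. Concretely, I would introduce the coordinatewise logarithm
\[ \Lambda\colon \R^{6n}_{>0}\longrightarrow \R^{6n},\qquad \Lambda(\kappa)=(\log\kappa_1,\dots,\log\kappa_{6n}), \]
which is a homeomorphism with inverse the coordinatewise exponential. Writing $x=\Lambda(\kappa)$, the inequality $\kappa_{6n-9}\kappa_{6n}<\kappa_{6n-6}\kappa_{6n-3}$ translates into $x_{6n-9}+x_{6n}<x_{6n-6}+x_{6n-3}$, so that $\Lambda(Y)$ is exactly the open half-space
\[ \{x\in\R^{6n} : x_{6n-9}-x_{6n-6}-x_{6n-3}+x_{6n}<0\}. \]
An open half-space is convex, hence path connected, and a homeomorphism carries path connected sets to path connected sets; therefore $Y=\Lambda^{-1}(\Lambda(Y))$ is path connected. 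As a convenient reformulation I could record that the explicit path between $\kappa,\kappa'\in Y$ is $t\mapsto \Lambda^{-1}\big((1-t)\Lambda(\kappa)+t\Lambda(\kappa')\big)$, i.e.\ the coordinatewise geometric interpolation $\kappa_j^{1-t}(\kappa_j')^{t}$, which visibly stays in $Y$ because the half-space condition is preserved along the segment.

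I do not expect a genuine obstacle here, but the one point that must be handled with care is that $Y$ is \emph{not} convex in the original coordinates, so one cannot simply join two points by a straight segment. Indeed the constraint $\kappa_{6n-9}\kappa_{6n}<\kappa_{6n-6}\kappa_{6n-3}$ is bilinear, and the midpoint of two points of $Y$ can fail to lie in $Y$: for instance a point with $\kappa_{6n-9},\kappa_{6n}$ both small averages with a point having them both large to a point where their product exceeds $\kappa_{6n-6}\kappa_{6n-3}$. This is precisely why the logarithmic substitution, which convexifies the region, is the right device; the remaining $6n-4$ coordinates are unconstrained and play no role beyond contributing a convex factor to the half-space.
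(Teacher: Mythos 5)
Your proof is correct, and it is more self-contained than the paper's. For the inclusion $Y\subseteq X$ you and the paper do exactly the same thing: invoke Theorem~\ref{thm:suf}, whose second condition is literally the inequality defining $Y$ (up to commuting two factors). For path connectedness, however, the paper does not spell out an argument at all: it asserts that the claim follows by a ``straightforward adaptation of Lemma 5.1 and Remark 5.3 in \cite{FKWY}'', the analogous statement for $n=2$. Your logarithmic-coordinates argument fills this in completely and correctly: $\Lambda$ is a homeomorphism of $\R^{6n}_{>0}$ onto $\R^{6n}$, the image $\Lambda(Y)$ is the open half-space $\{x\in\R^{6n} : x_{6n-9}-x_{6n-6}-x_{6n-3}+x_{6n}<0\}$, which is convex, and path connectedness is preserved under homeomorphisms; the coordinatewise geometric interpolation $t\mapsto \kappa_j^{1-t}(\kappa_j')^{t}$ is exactly the resulting explicit path. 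What this buys is a proof verifiable without consulting the reference, and one that applies verbatim to any region cut out by monomial sign inequalities, in the same spirit as, but independent of, the path constructions in \cite{FKWY}. One small quibble: your parenthetical counterexample to convexity of $Y$ does not work as stated. If both points have $\kappa_{6n-9}=\kappa_{6n}$ (``both small'' versus ``both large''), then writing $(u_i,v_i)$ for their $(\kappa_{6n-6},\kappa_{6n-3})$ entries, the inequality $(u_1+u_2)(v_1+v_2)\geq \big(\sqrt{u_1v_1}+\sqrt{u_2v_2}\big)^2$ shows the midpoint still lies in $Y$. A correct example takes $(\kappa_{6n-9},\kappa_{6n})=(\epsilon,L)$ in one point and $(L,\epsilon)$ in the other, with $\kappa_{6n-6}\kappa_{6n-3}$ fixed strictly between $\epsilon L$ and $(\epsilon+L)^2/4$. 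Since that remark is purely motivational, it does not affect the validity of your proof: $Y$ is indeed non-convex, and your half-space argument stands regardless.
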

	\begin{proof}
		From Theorem~\ref{thm:suf}, we know that $Y \subseteq X.$ The proof that $Y$ is path connected is a straightforward adaptation of Lemma 5.1 and  Remark 5.3 in \cite{FKWY}. 
	\end{proof}

	\begin{theorem} \label{thm:multiconnected}
	For all $n\geq 2$, the set $X \subseteq \R^{6n}_{>0}$ consisting on the parameters $\kappa$ that enable multistationarity is path connected.
	\end{theorem}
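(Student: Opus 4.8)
The plan is to use Lemma~\ref{lem:connectedvertex}: since $Y=\{\kappa:\kappa_{6n-9}\kappa_{6n}-\kappa_{6n-6}\kappa_{6n-3}<0\}$ is a path-connected subset of $X$, it suffices to join an arbitrary $\kappa\in X$ to $Y$ by a continuous path lying entirely in $X$. Concatenating such a path with a path inside $Y$ to a fixed reference point then yields path-connectedness of $X$.

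To build the path I would increase the assembled parameter $T_{n-1}$ while fixing all other entries of $\eta(\kappa)$. This is realizable by an explicit curve $\kappa(s)$, $s\ge 1$, in $\R^{6n}_{>0}$: scale $\kappa_{6n-3}$ by $s$ and co-vary $\kappa_{6n-5}$ to keep $K_{n-1}=\kappa_{6n-5}/(\kappa_{6n-4}+\kappa_{6n-3})$ constant, leaving $\kappa_{6n}$ (hence $L_{n-1}$) and every other rate constant fixed. Since $\kappa_{6n-3}$ and $\kappa_{6n-5}$ enter $\eta$ only through $K_{n-1}$ and $T_{n-1}$, along $\kappa(s)$ the vector $\eta(\kappa(s))$ has $T_{n-1}(s)=s\,T_{n-1}$ and all other coordinates constant. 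As $s\to\infty$, the minor $\kappa_{6n-9}\kappa_{6n}-\kappa_{6n-6}\kappa_{6n-3}$, equivalently $a_{n-1}b_{n-2}-a_{n-2}b_{n-1}$ by \eqref{eq:abab}, tends to $-\infty$, so $\kappa(s)$ eventually lies in $Y$.

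The engine of the argument is a structural observation about the dependence of $P_\eta$ on $T_{n-1}$. Writing $A_2=A_{00}\cdot W$ with $W(x_2)=\sum_{\ell}\sum_{i+j=\ell}(i-j)a_ib_jx_2^{\ell+1}$, the parameter $T_{n-1}$ enters $W$ only through $b_{n-1}=L_{n-1}T_{n-1}$, and $\partial W/\partial T_{n-1}=L_{n-1}\sum_{i=0}^{n-2}(i-n+1)a_ix_2^{i+n}$ has all coefficients negative, hence is strictly negative on $\R_{>0}$. Since $A_{00}>0$, for each fixed $x_2>0$ the value $A_2(x_2)$ is a strictly decreasing function of $T_{n-1}$ tending to $-\infty$. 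Consequently, if $A_2(x_2^*)<0$ for some $x_2^*$ at $\kappa(s_0)$, then $A_2(x_2^*)<0$ for all $s\ge s_0$; as $A_2$ is the leading coefficient of the quadratic $P_\eta$ in $x_3$, this makes $P_\eta$ attain negative values, so $\kappa(s)\in X$ for all $s\ge s_0$ by Corollary~\ref{cor:crit}. Thus, once $A_2$ is negative somewhere along the path, the path remains in $X$ all the way into $Y$.

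The main obstacle is the complementary regime, where $\kappa\in X$ only because $P_\eta$ dips below zero through its degree-one-in-$x_3$ part $A_{10}$, while $A_2\ge 0$ on all of $\R^2_{>0}$ (the interior-negativity phenomenon of Theorem~\ref{thm:interiornegative}). Here the $T_{n-1}$-dependence of $A_{10}$ is not sign-definite, so raising $T_{n-1}$ need not preserve multistationarity before $A_2$ first becomes negative somewhere (which does occur, since $W\to-\infty$ pointwise). I would treat this regime by a finer analysis, viewing $P_\eta$ as a polynomial in the two variables $(T_{n-1},x_2)$ with $x_3$ frozen at the minimizer $x_3=-A_{10}/(2A_2)$, and tracking a negativity witness across the transition: as $A_2(x_2^\dagger)\downarrow 0$ at the critical exponent $x_2^\dagger$ while $A_{10}(x_2^\dagger)<0$, the minimal value $A_{00}-A_{10}^2/(4A_2)$ diverges to $-\infty$, so multistationarity is retained throughout the phase; together with the openness of $X$ this rules out a gap and completes the path. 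Making this transition uniform in $n$—guaranteeing that $A_{10}$ remains negative at the relevant exponent along the curve—is the delicate point, and is precisely where the Newton-polytope-in-two-$\eta$-entries technique announced in the introduction carries the argument.
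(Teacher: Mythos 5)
Your overall strategy (reduce to Lemma~\ref{lem:connectedvertex}, then join an arbitrary $\kappa\in X$ to $Y$ by a path inside $X$) is the same as the paper's, and your first regime is handled correctly: writing $A_2=A_{00}\cdot W$ with $W(x_2)=\sum_{\ell}\sum_{i+j=\ell}(i-j)a_ib_jx_2^{\ell+1}$, the parameter $T_{n-1}$ enters $W$ only through $b_{n-1}$, $\partial W/\partial T_{n-1}<0$ on $\R_{>0}$, and since $\sign A_2=\sign W$ (note that $A_2$ itself need not be decreasing in $T_{n-1}$, as the factor $A_{00}$ increases with $c_{n-1}=T_{n-1}$; only the sign statement survives), negativity of $A_2$ at a point persists as $s$ grows, and the path eventually enters $Y$. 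The genuine gap is exactly where you locate it, and your sketch does not close it. In the interior-negativity regime ($A_2\geq 0$ on all of $\R_{>0}$, which is non-vacuous by Theorem~\ref{thm:interiornegative}), you must show $\kappa(s)\in X$ for \emph{every} $s$ in $[1,s^{\dagger}]$, where $s^{\dagger}$ is the first time $A_2$ vanishes somewhere. Your argument controls only the two ends of this interval: membership at $s=1$ (by hypothesis) and near $s^{\dagger}$ (via $A_{00}-A_{10}^2/(4A_2)\to-\infty$, which moreover needs the unproven claim that $A_{10}$ stays negative at the critical $x_2^{\dagger}$). Openness of $X$ does not ``rule out a gap'': an open set can contain both ends of a segment and miss its middle, and since $\partial A_{10}/\partial T_{n-1}$ has mixed signs there is no monotonicity forcing a negativity witness to persist at intermediate $s$. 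As written, your path may exit and re-enter $X$, which breaks the proof.

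The paper dissolves this difficulty by choosing a different path, along which negativity at the \emph{original, fixed} witness $(x_2,x_3)$ is preserved for all $s$. Instead of fixing $K_{n-1}$ and scaling $T_{n-1}$, it scales $\kappa_{6n-4}$ and $\kappa_{6n-3}$ both by $s$; this sends $K_{n-1}\mapsto K_{n-1}s^{-1}$ while keeping $T_{n-1}\propto \kappa_{6n-3}K_{n-1}$ (hence $b_{n-1},c_{n-1}$) fixed. Viewing $P$ at the fixed witness as a polynomial $\overline{P}(\kappa_{6n-3},K_{n-1})=p_0+p_1K_{n-1}+p_2\kappa_{6n-3}K_{n-1}+p_3\kappa_{6n-3}K_{n-1}^2+p_4\kappa_{6n-3}^2K_{n-1}^2$ with $p_1,p_3>0$, all possibly negative terms lie on the edge of $\newton{\overline{P}}$ joining $(0,0)$ and $(2,2)$, whose outer normal is $(1,-1)$ --- precisely the chosen scaling direction. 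Consequently $s\,\overline{P}(\kappa_{6n-3}s,\,K_{n-1}s^{-1})$ is linear in $s$ with negative slope and negative value at $s=1$, hence negative for all $s\geq 1$. Equivalently: with the $T_i$'s and $L_i$'s held fixed, every monomial of $P_{\eta}$ containing $K_{n-1}$ has a positive coefficient, so lowering $K_{n-1}$ only pushes $P_{\eta}$ down pointwise. Meanwhile $\kappa_{6n-3}s\to\infty$ makes the minor $\kappa_{6n-9}\kappa_{6n}-\kappa_{6n-3}s\,\kappa_{6n-6}$ negative, landing the path in $Y$. To rescue your one-parameter path you would have to prove persistence of a witness through the whole first phase, which is exactly the step your proposal defers; varying the second parameter as the paper does makes that step unnecessary.
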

	\begin{proof}
Let $\k\in X$. Then by Corollary~\ref{cor:crit}, there exists $x_2,x_3>0$ such that $P_{\eta(\k)}(x_2,x_3)<0$. 
We proceed similarly to the proof of Theorem~\ref{thm:interiornegative} by treating some parameters as variables. 

Fix the value of the assembled parameters $K_0,\dots,K_{n-2},L_0,\dots,L_{n-1},T_0,\dots,T_{n-2}$ corresponding to $\k$, and construct
$T_{n-1}(\k_{6n-3},K_{n-1})$  as in \eqref{eq:T}   with $\k_{6n-3},K_{n-1}$ treated as parameters.
We fix as well $x_2,x_3>0$, and consider the two parameter family of vectors
\[ \xi(\k_{6n-3},K_{n-1}) = (T_0,\dots,T_{n-2},T_{n-1}(\k_{6n-3},K_{n-1}),K_0,\dots,K_{n-1}, L_0,\dots,L_{n-1})\in \R^{3n}_{>0}.\]
The  polynomial $P_{\xi(\k_{6n-3},K_{n-1})}(x_2,x_3)$ becomes a polynomial in $\k_{6n-3},K_{n-1}$, denoted by $\overline{P}(\k_{6n-3},K_{n-1})$.

Note that the parameters $\k_{6n-3},K_{n-1}$ are only  present  in the terms involving $a_{n-1},b_{n-1}$ and $c_{n-1}$, see  \eqref{eq:T} and \eqref{eq:abc}. 
In particular, $a_{n-1}$ depends only on $K_{n-1}$, while $b_{n-1}$ and $c_{n-1}$ depend on the product $\k_{6n-3}K_{n-1}$.
		 Using \eqref{eq:p_decomp}, we  express $A_2,A_{10}$ and $A_{00}$ in terms of $\k_{6n-3}$ and $K_{n-1}$ in the following way, {where the coefficients $r_{i}^{(j)}\in \R$ appearing in the expressions} are independent of $\k_{6n-3}$ and  $K_{n-1}$:
		\begin{small}
			\begin{align*}
			\overline{A}_2 &= \left(1+\sum_{i=0}^{n-1} c_i \, x_2^{i+1}  \right) \left(\sum_{\ell=1}^{2n-3} \sum_{i+j=\ell} (i-j)\, a_i \, b_j \, x_2^{\ell +1} \right)  \\
			&=(r^{(2)}_0+r^{(2)}_1 \k_{6n-3} K_{n-1})(r^{(2)}_2 +r^{(2)}_3K_{n-1}+r^{(2)}_4\k_{6n-3}K_{n-1}) \qquad \qquad \text{with} \quad r^{(2)}_0,r^{(2)}_1,r^{(2)}_3 >0,\\
			\overline{A}_{10} &=\sum_{\ell=0}^{n-1} (\ell+1)\, a_\ell \, x_2^\ell  -  \sum_{\ell=0}^{n-1} \ell \,b_\ell \,  x_2^{\ell +1} 
			+   \sum_{\ell = 0 }^{2n-2} \sum_{i+j=\ell}   (j+1-i) \,b_i\, c_j \,   x_2^{\ell +2} +   \sum_{\ell = 1 }^{2n-3} \sum_{i+j=\ell}   (i-j) \,a_i\, c_j \,   x_2^{\ell+1}  \\
			&=(r^{(10)}_0+r^{(10)}_1 K_{n-1})-(r^{(10)}_2 +r^{(10)}_3\k_{6n-3}K_{n-1}) + (r^{(10)}_4 +r^{(10)}_5\k_{6n-3}K_{n-1}+r^{(10)}_6\k_{6n-3}^2K_{n-1}^2)\\
			&+(r^{(10)}_7 +r^{(10)}_8K_{n-1}+r^{(10)}_9\k_{6n-3}K_{n-1})\qquad \hspace{4cm} \text{with} \quad r^{(10)}_1,r^{(10)}_8 >0,\\
			\overline{A}_{00} &= 1+\sum_{i=0}^{n-1} c_i \, x_2^{i+1}  =(r^{(2)}_0+r^{(2)}_1 \k_{6n-3} K_{n-1}).
			\end{align*}
		\end{small}%
Collecting all the terms together we write $\overline{P}$ as
\[\overline{P}(\k_{6n-3},K_{n-1})=p_0+p_1 K_{n-1}+p_2 \k_{6n-3} K_{n-1}+p_3\k_{6n-3} K_{n-1}^2+p_4\k_{6n-3}^2 K_{n-1}^2,\]
with $p_1=r^{(2)}_0r^{(2)}_3x_3^2+(r^{(10)}_1+r^{(10)}_8)x_3>0$ and $p_3=r^{(2)}_1r^{(2)}_3x_3^2>0$. The Newton polytope of $\overline{P}$ has vertices $(0,0),(0,1),(1,2),(2,2)$. The  terms of $\overline{P}$ that can be negative are all supported on the edge $F$ joining the vertices $(0,0)$ and $(2,2)$. The outer normal cone of this edge if generated by $v=(v_1,v_2)=(1,-1).$
		
By the choice of $\k$, $x_2,x_3$, we know that $\overline{P}$ attains negative values, namely by choosing $\k_{6n-5},\k_{6n-4},\k_{6n-3}$ to be the entries of   the given $\k$. To avoid confusion, we denote them by $\k_{6n-5}^*,\k_{6n-4}^*,\k_{6n-3}^*$  and the corresponding assembled parameter as $K_{n-1}^*$. 
It follows that $\overline{P}_F(\k_{6n-3}^*,K_{n-1}^*)<0$. 
We evaluate $\overline{P}$ at $q_s=(\k_{6n-3}^* s^{v_1},K_{n-1}^* s^{v_2})=(\k_{6n-3}^* s,K_{n-1}^* s^{-1})$ and obtain
\[ Q(s):= s \overline{P}(q_s)=  p_1 K_{n-1}^*+p_3\k_{6n-3}^* (K_{n-1}^*)^2 + s\,  \overline{P}_F(\k_{6n-3}^*,K_{n-1}^*).   \]%
As for $s=1$, $Q(s)$ is a linear function with negative coefficient and $Q(1)<0$, we have that $Q(s)<0$ for all $s\geq 1$. Hence,
consider the parameter vector $\kappa_s$ obtained from $\kappa$ by replacing $\k_{6n-4}^*,\k_{6n-3}^*$ by $\k_{6n-4}\, s,\k_{6n-3}\, s$  such that the new $K_{n-1}$ is $K_{n-1} s^{-1}$. Then $\k_s$ enables multistationarity for all $s\geq 1$. 
As $s$ increases, so does $\k_{6n-3}s$, and eventually $\k_s$ belongs to $Y$. The parameter vectors $\k_s$ thus give a path connecting the point $\k$ in $X$ with a point in $Y$. This concludes the proof. 
	\end{proof}

	The rest of the section focuses on showing that the region of parameters precluding multistationarity is also path connected. We first show that the region of monostationarity is connected in the $3n$ dimensional parameter space given by $T_i,K_i,L_i$ for $i=0,\ldots, n-1.$ 
Written in terms of $T_i,K_i,L_i$ instead of $a_i,b_i,c_i$,  the terms of the polynomial $P_\eta$ in \eqref{eqn:relevantpol} are	
\begin{small}
		\begingroup
		\allowdisplaybreaks
		\begin{align*}
		A_2 (x_2) &=  \left(1+\sum_{i=0}^{n-1} T_i \, x_2^{i+1}  \right) \left(\sum_{\ell=1}^{2n-3} \sum\limits_{\i+j=\ell} (i-j)\, K_i L_j T_{i-1} \, T_j   x_2^{\ell +1} \right), \\
		A_{10}(x_2) &=
			\sum_{\ell=0}^{n-1} (\ell+1)\, K_\ell T_{\ell-1} \, x_2^\ell  -  \sum_{\ell=0}^{n-1} \ell \,L_\ell T_{\ell} \,  x_2^{\ell +1} 
		+   \sum_{\ell = 0 }^{2n-2} \sum_{i+j=\ell}   (j+1-i) \,L_i T_{i}\, T_j \,   x_2^{\ell +2} \\
		&+   \sum_{\ell = 1 }^{2n-3} \sum\limits_{ \i+j=\ell}  (i-j) K_i T_{i-1}\, T_j    x_2^{\ell+1}, \\
		A_{00}(x_2) &= 1+\sum_{i=0}^{n-1} T_i \, x_2^{i+1}.  \nonumber
		\end{align*}
		\endgroup
	\end{small}
	
	In Theorem \ref{thm:monoconnected} we show that the region $Z\subseteq \R^{3n}_{>0}$ of parameters for which $P_\eta$ is positive  is path connected. We prove this by showing that any point in $Z$ is path connected to a point $\eta \in Z$ such that all the coefficients of $P_{\eta}$ are positive. We then show that any two such points are also path connected in~$Z.$

	\begin{theorem}\label{thm:monoconnected}
		For all $n\geq 1$, the set
				\[ Z:= \{\eta \in \R^{3n}_{>0} : P_\eta(x_2,x_3)>0  \textrm{ for all }(x_2,x_3)\in \R^2_{>0}\} \]
		 is path connected. 	 	\end{theorem}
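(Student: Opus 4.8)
The plan is to isolate inside $Z$ the subregion $Z_+:=\{\eta\in\R^{3n}_{>0}: \text{every coefficient of } P_\eta \text{ is strictly positive}\}$ and to reduce the theorem to two claims. Since a polynomial with all-positive coefficients is positive on $\R^2_{>0}$, we have $Z_+\subseteq Z$, and $Z_+\neq\emptyset$ (this follows from the earlier analysis, e.g.\ taking all maximal minors of $M_{\k}$ positive together with parameter values for which the interior coefficients of $A_{10}$ coming from \eqref{eq:p_decomp} are positive). The two claims are: \textbf{(I)} every $\eta\in Z$ can be joined to a point of $Z_+$ by a path contained in $Z$; and \textbf{(II)} $Z_+$ is path connected. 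Granting both, any $\eta,\eta'\in Z$ are joined to $\zeta,\zeta'\in Z_+$ by (I), and $\zeta,\zeta'$ are joined inside $Z_+\subseteq Z$ by (II), so $Z$ is path connected.

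For (I), which I expect to be the heart of the argument, I would proceed as in the proof of Theorem~\ref{thm:multiconnected}, viewing $P_\eta$ as a polynomial in $x_2,x_3$ whose coefficients are, in turn, polynomials in a well-chosen pair of entries of $\eta$ (the remaining entries fixed). By Lemma~\ref{prop:edgespositive} and Theorem~\ref{thm:NPverts}, the coefficients that may be negative sit only at interior lattice points of $\newton{P_\eta}$: the interior points of the row $x_3=2$ (governed, through \eqref{eq:abab}, by the intermediate minors of $M_{\k}$) and the points $\iota_i$ of the row $x_3=1$ coming from $A_{10}$. I would clear these in two stages. First, move the parameters so that every maximal minor of $M_{\k}$ becomes positive; by \eqref{eq:abab} this turns all coefficients of $A_2$ positive and removes the negative entries of the top row. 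Then, staying in the region where all minors are positive, adjust a pair among $K_i,L_i,T_i$ so that each remaining negative coefficient $\coef{P_\eta}{\iota_i}$ turns positive. Each stage is realized by a monomial path $t\mapsto(\xi_a t^{w_a},\xi_b t^{w_b})$ in the two active parameters, with the weight vector $w$ chosen, by the Newton-polytope reasoning behind Proposition~\ref{prop:facerestriction} now applied in parameter space, so that the dominant contribution to each target coefficient is positive. Crucially, I would pick the flow so that for every fixed $(x_2,x_3)$ the rescaled quantity $t^{N}P_{\eta_t}(x_2,x_3)$ is an affine (or monotone) function of $t$ with the correct sign behaviour, exactly as the function $Q(s)$ in the proof of Theorem~\ref{thm:multiconnected}; this guarantees $P_{\eta_t}>0$ for the whole path, so it remains in $Z$ and terminates in $Z_+$.

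For (II), since $Z_+\subseteq Z$ it suffices to join an arbitrary $\zeta\in Z_+$ to a single reference point $\eta^*\in Z_+$ by a path along which all coefficients stay positive; concatenating two such paths connects any two points of $Z_+$. I would build this path by scaling the blocks $(T_i),(K_i),(L_i)$ in turn toward their reference values, using that inside $Z_+$ all coefficients are strictly positive (hence there is slack) and that $A_{00}$ and the vertex coefficients depend monotonically on the relevant parameters, so positivity of each coefficient is preserved. The main obstacle is Claim (I): each entry of $\eta$ enters several coefficients of $P_\eta$ with \emph{different} signs (for instance $T_{m-1}$ appears positively in $A_{00}$ but negatively in the $A_{10}$-term $-(m-1)L_{m-1}T_{m-1}$), so a naive one-parameter flow can simultaneously remove one negative coefficient and create another, or push $P_\eta$ below zero and out of $Z$. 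The delicate point, where the explicit structure of $A_2$ and $A_{10}$ in \eqref{eq:p_decomp} together with \eqref{eq:abab} must be exploited, is to choose at each stage the correct pair of parameters and normal direction so that the flow is monotone on the problematic coefficients, leaves the already-positive ones positive, and keeps $P_\eta$ positive throughout, and to verify that these stages can be chained for all $n$ without reintroducing negativity.
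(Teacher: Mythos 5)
Your overall architecture coincides with the paper's: both proofs reduce the statement to (I) joining an arbitrary $\eta \in Z$ to the set $Z_+$ of parameters for which all coefficients of $P_\eta$ are positive by a path inside $Z$, and (II) showing $Z_+$ is path connected. However, the actual content of the paper's proof is precisely the resolution of the ``delicate point'' you explicitly leave open, so there is a genuine gap. The paper's key structural observation is that one distinguished pair of coordinates, $(K_{n-1}, L_0)$, does everything at once: in the $\eta$-coordinates (where the $T_i$, $K_i$, $L_i$ are \emph{independent}), $K_{n-1}$ occurs only in $a_{n-1}=K_{n-1}T_{n-2}$ and $L_0$ only in $b_0=L_0T_0$ by \eqref{eq:abc}, every monomial (in the parameters) of every coefficient of $P_\eta$ that involves $K_{n-1}$ or $L_0$ carries a positive sign --- in $\sum_{i>j}(i-j)(a_ib_j-a_jb_i)$ the parameter $K_{n-1}$ enters only through $(n-1-j)\,a_{n-1}b_j$ and $L_0$ only through $i\,a_ib_0$ --- and every potentially negative coefficient (in both the $A_2$ and the $A_{10}$ rows) contains at least one such monomial. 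Consequently: (a) for fixed $(x_2,x_3)$, $P_\eta$ as a function of $(K_{n-1},L_0)$ has the form $r_3K_{n-1}L_0+r_2K_{n-1}+r_1L_0+r_0$ with $r_1,r_2,r_3>0$, so along the diagonal scaling $(K_{n-1}^*s,L_0^*s)$ one obtains a quadratic $Q(s)$ with positive leading and middle coefficients and $Q(1)>0$, whence Descartes' rule of signs gives $Q(s)>0$ for all $s\geq 1$ and the path stays in $Z$; and (b) as $s\to\infty$ every coefficient of $P_{\eta_s}$ becomes positive, since the negative monomials do not contain $K_{n-1},L_0$ and are therefore outgrown. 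A single one-parameter scaling accomplishes (I); your two-stage plan (first make the $A_2$ coefficients positive, then repair $A_{10}$, with unspecified parameter pairs and weight vectors at each stage) never exhibits the required flow, and the chaining problem you yourself name --- that fixing one coefficient may break another or push $P_\eta$ out of $Z$ --- is exactly what remains unproved.

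Your sketch of (II) also does not stand as written. The coefficients of $P_\eta$ are not monotone in the blocks $(T_i),(K_i),(L_i)$ --- for instance $a_ib_j-a_jb_i=K_iT_{i-1}L_jT_j-K_jT_{j-1}L_iT_i$ mixes all three blocks with both signs --- and positivity of such polynomial coefficients is not preserved under linear interpolation or under scaling a block toward reference values. The paper's version of (II) again leans on the distinguished pair: increasing $K_{n-1}$ and $L_0$ preserves membership in $Z_+$ (all monomials containing them are positive), so one first pushes both endpoints to have very large $K_{n-1},L_0$, and only then interpolates linearly in the remaining coordinates, using compactness of $[0,1]$ to choose $K_{n-1},L_0$ so large that all coefficients stay positive along the whole segment. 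Without identifying the pair $(K_{n-1},L_0)$ and the sign structure of the monomials containing it, neither (I) nor (II) goes through; the proposal is an outline of the correct strategy rather than a proof.
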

	\begin{proof}
Let $\eta\in Z$. Hence, for all $x_2,x_3>0$, $P_\eta(x_2,x_3)>0$. 
Fix $(x_2,x_3)\in \R^2_{>0}$ and all parameters in $\eta$ but $K_{n-1},L_0$, and view $P_{\eta}(x_2,x_3)$ as a polynomial in $K_{n-1}$ and $L_0$ with real coefficients. Let $P'$ denote this polynomial. 
It is straightforward to see using \eqref{eq:abab} that $P'$ 	takes the form
		\[ P'(K_{n-1},L_0):=r_3K_{n-1}L_0+r_2K_{n-1}+r_1L_0+r_0, \qquad \textrm{with }r_1,r_2,r_3>0.\]
As in the proof of Theorem~\ref{thm:multiconnected}, we denote by  $K_{n-1}^*,L_0^*$ the corresponding  parameter entries in $\eta$, such that $P'(K^*_{n-1},L^*_0)>0$.

We evaluate $P'$ at $q_s=(K_{n-1}^* s,L_0^* s)$:
\begin{align*}\label{eqn:polQ}
Q(s):= P'(q_s)=  r_3K_{n-1}L_0 s^2+(r_2K_{n-1}+r_1L_0) s+r_0.
\end{align*}
The polynomial $Q$ is quadratic in $s$ with positive leading term and $Q(1)>0$. By Descartes' rule of signs, $Q$ has at most one positive root and hence $Q(s)>0$ for all $s\geq 1$.

As this holds for any fixed $(x_2,x_3)\in \R^2_{>0}$, it follows that 
\[ \eta_s: =(T_0,\ldots,T_{n-1},K_0,\ldots,K_{n-2},K_{n-1}^* s ,L_0^*s ,L_1,\ldots,L_{n-1}) \in Z,\quad \textrm{ for all }s\geq 1.\]

We now show that for $s$ large enough, all coefficients of $P_{\eta_s}$ in $x_2,x_3$ are positive.
The factor of $A_2(x_2)$  that can be negative is $\left(\sum_{\ell=1}^{2n-3} \sum_{i+j=\ell} (i-j)\, K_i L_j T_{i-1} \, T_j \, x_2^{\ell +1} \right).$ For the coefficient of $x_2^k$, seen as a polynomial in all parameters, the coefficients of all the monomials involving $K_{n-1}$ and $L_0$ 
are positive, and there is at least one such monomial for every $2\leq k \leq 2n-2$. Hence, 
by  letting $K_{n-1}$ and $L_0$ be sufficiently large, all coefficients become positive.
		An analogous argument holds for $A_{10}(x_2)$ and therefore we conclude that,  for $s$ large enough, all coefficients of $P_{\eta_s}$ in $x_2,x_3$ are positive.

		\smallskip
All that is left is to show that any two parameter points $\eta_1$, $\eta_2$   such that all coefficients of $P_{\eta_i}$ are positive, are connected by a path in $Z$. 
First note that increasing the coordinates of $K_{n-1}$ and $L_0$ in any of them leads to a parameter point also in $Z$ with all coefficients of $P_{\eta_i}$ positive. Second, consider 
the projection $\pi$ outside the entries $K_{n-1}$ and $L_0$, and form the line segment $ \eta'(s):= s \pi(\eta_1) + (1-s) \pi(\eta_2)$ for $s\in [0,1]$. 
Extend $\eta'(s)$ to a parameter vector $\eta(s,K_{n-1},L_0)$ with $K_{n-1}$ and $L_0$ seen as variables. The coefficients of $P_{\eta(s,K_{n-1},L_0)}$ are polynomials in $K_{n-1}, L_0$ with coefficients in $s$, such that the coefficients of the monomials involving $K_{n-1},L_0$ are positive continuous functions of $s$. As $[0,1]$ is compact, there exist $K_{n-1}^*$ and $L_0^*$ such that for any $K_{n-1}\geq K_{n-1}^*$ and $L_0\geq L_0^*$, all the coefficients of $P_{\eta(s,K_{n-1},L_0)}$ are positive for all $s\in [0,1]$. By picking $K_{n-1}^*$ and $L_0^*$  to be larger than the entries of $\eta_1$ and $\eta_2$, the desired path is the composition of the line segments  in $Z$ 
joining the points   $\eta_1$,  $\eta(0,K_{n-1}^*,L_0^*)$, $\eta(1,K_{n-1}^*,L_0^*)$ and $\eta_2$ in this order. This concludes the proof.
	\end{proof}
	
	In Theorem~\ref{thm:monoconnected} above we have shown that the region of monostationarity is path connected in the parameter space given by the parameters $K_i,L_i,$ and $T_i$ for all $i=0,\ldots,n-1$. Next, we extend this result to the $6n$ dimensional parameter space given by $\k\in \R^{6n}_{>0}$.
	
\begin{corollary}\label{cor:monoconnected}
The set of parameters $\k\in \R^{6n}_{>0}$ precluding multistationarity is path connected for all $n$.
\end{corollary}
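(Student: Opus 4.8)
The plan is to realize the set of parameters precluding multistationarity as the preimage under the assembly map $\eta\colon \R^{6n}_{>0}\to \R^{3n}_{>0}$ from \eqref{eq:assembly} of the path-connected set $Z$ of Theorem~\ref{thm:monoconnected}, and then to transfer path-connectedness through $\eta$. By Corollary~\ref{cor:crit}, a vector $\k$ enables multistationarity exactly when $P_{\eta(\k)}$ attains a negative value on $\R^2_{>0}$; hence $\k$ precludes multistationarity exactly when $P_{\eta(\k)}$ is positive everywhere, that is, when $\eta(\k)\in Z$. So the set in question is precisely $\eta^{-1}(Z)$, and since $\eta$ is surjective we have $Z\subseteq\eta(\R^{6n}_{>0})$.

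The topological tool I would use is the following elementary fact: if $f\colon X\to Y$ is continuous and admits a continuous section $s\colon Y\to X$ (so $f\circ s=\mathrm{id}_Y$) and every fiber $f^{-1}(y)$ is path connected, then $f^{-1}(C)$ is path connected for every path-connected $C\subseteq Y$. Indeed, given $x_1,x_2\in f^{-1}(C)$ with images $y_j=f(x_j)\in C$, one joins $x_j$ to $s(y_j)$ inside the fiber $f^{-1}(y_j)\subseteq f^{-1}(C)$, and joins $s(y_1)$ to $s(y_2)$ by $s\circ\gamma$, where $\gamma$ is a path from $y_1$ to $y_2$ in $C$; the latter stays in $f^{-1}(C)$ since $f(s(\gamma(t)))=\gamma(t)\in C$. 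Concatenating the three arcs yields a path in $f^{-1}(C)$.

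It then remains to verify the hypotheses for $f=\eta$ and $C=Z$. Continuity of $\eta$ is immediate, as it is a composition of rational maps with positive (hence nonvanishing) denominators on $\R^{6n}_{>0}$, recall \eqref{eq:KL} and \eqref{eq:T}. For the fibers, I would fix a target $(T_i,K_i,L_i)_i$ and observe that the defining relations decouple over the blocks $(\k_{6i+1},\dots,\k_{6i+6})$: by \eqref{eq:T} the value of $T_i$ only fixes the ratio $\k_{6i+3}/\k_{6i+6}=T_iL_i/(T_{i-1}K_i)$, while $K_i$ and $L_i$ each impose one linear relation. Solving shows that each block is a graph over the free coordinates $(\k_{6i+2},\k_{6i+5},\k_{6i+6})\in\R^3_{>0}$, so the fiber is homeomorphic to $\R^{3n}_{>0}$ and in particular path connected. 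A continuous section is obtained by fixing these free coordinates, for instance $\k_{6i+2}=\k_{6i+5}=\k_{6i+6}=1$, which forces $\k_{6i+3}=T_iL_i/(T_{i-1}K_i)$, $\k_{6i+1}=K_i(1+\k_{6i+3})$ and $\k_{6i+4}=2L_i$; all of these are positive and depend continuously on the target. Applying the topological fact to $\eta^{-1}(Z)$ then gives the claim.

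I expect the only genuinely delicate point to be the topological fact itself: a continuous surjection onto a path-connected space need not have path-connected preimage, so one cannot avoid exploiting the specific structure of $\eta$, namely the combination of a global continuous section with path connectedness (in fact contractibility) of every fiber. Once the block-decoupling of the fibers is made precise, the section is explicit and everything else is a routine verification.
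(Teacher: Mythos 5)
Your proposal is correct and takes essentially the same route as the paper: both identify the precluding set as $\eta^{-1}(Z)$ and transfer the path-connectedness of $Z$ from Theorem~\ref{thm:monoconnected} through $\eta$ by combining a continuous section of $\eta$ with paths inside its (path-connected) fibers. The paper's explicit constructions are instances of your abstract lemma: its lifted path $\kappa(s)$ is precisely your section with the free coordinates $(\k_{6i+2},\k_{6i+5},\k_{6i+6})$ frozen at the values of $\k$ rather than at $1$ (which is why it needs a fiber path at only one end), and its path $\alpha(s)$ is exactly the fiber-connectivity step.
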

\begin{proof}
Let $\k,\k'\in \R^{6n}_{>0}$ preclude multistationarity. We have that $\eta(\k), \eta(\k')\in Z$, with $Z$ as in Theorem~\ref{thm:monoconnected} and hence, there is a continuous path 
\[ \xi(s)=(T_0(s),\ldots,T_{n-1}(s),K_0(s),\ldots,K_{n-1}(s),L_0(s),\ldots,L_{n-1}(s)), \quad s\in [0,1] \] connecting them in $Z$. 
Define the points $\kappa(s) \in \R^{6n}_{>0}$  for $i=0,\dots,n-1$, as
\begin{align*}
 \k_{6i+2}(s) & :=\k_{6i+2}, &  \k_{6i+3}(s) & := \k_{6i+6} \, \frac{T_i(s)}{T_{i-1}(s)} \frac{L_i(s)}{ K_i(s)}, &  \k_{6i+1}(s) & := K_i(s) (\k_{6i+2} + \k_{6i+3}(s)), \\
 \k_{6i+5}(s) & :=\k_{6i+5}, &  \k_{6i+6}(s) & := \k_{6i+6}, &  \k_{6i+4}(s) & := L_i(s) (\k_{6i+5} + \k_{6i+6}).
 \end{align*}
Using \eqref{eq:T}, $\frac{T_i(0)}{T_{i-1}(0)} \frac{L_i(0)}{ K_i(0)}=\frac{\k_{6i+3}}{\k_{6i+6}}$ and hence $\kappa(0)=\kappa$. Furthermore, for all $s\in [0,1]$, $\eta(\kappa(s)) = \xi(s)$. 
Therefore, $\kappa(s)$ precludes multistationarity  for all $s\in [0,1]$ and gives a path 
from $\kappa$ to $\kappa(1)$. All that is left is to connect $\kappa'$ with $\kappa(1)$. 
Note that for $i=0,\dots,n-1$,
\begin{multline*}
(\k_{6i+1}(1), \k_{6i+2}(1), \k_{6i+3}(1), \k_{6i+4}(1), \k_{6i+5}(1), \k_{6i+6}(1))= \\
\left( \tfrac{\k'_{6i+1}}{\k'_{6i+2}+ \k'_{6i+3}} \big(\k_{6i+2} +  \k_{6i+6} \tfrac{\k'_{6i+3}}{\k'_{6i+6}}\big), 
\k_{6i+2},  \k_{6i+6} \tfrac{\k'_{6i+3}}{\k'_{6i+6}},\tfrac{\k'_{6i+4}}{\k'_{6i+5}+ \k'_{6i+6}}, \k_{6i+5}, \k_{6i+6}\right)
\end{multline*}
To this end, define a path $\alpha(s)$ with, for $i=0,\dots,n-1$,
\begin{align*}
 \alpha_{6i+2}(s) & :=\k_{6i+2}  \Big(\tfrac{\k'_{6i+2}}{\k_{6i+2}}\Big)^{s} , &  \alpha_{6i+3}(s) & := \k_{6i+3}'   \Big(\tfrac{\k_{6i+6}}{\k'_{6i+6}}\Big)^{1-s}, & 
  \alpha_{6i+1}(s) &:=  \tfrac{\k'_{6i+1}}{\k'_{6i+2}+ \k'_{6i+3}} \big(\alpha_{6i+2}(s)  + \alpha_{6i+3}(s) \big)
  \\
 \alpha_{6i+5}(s) & :=\k_{6i+5}\Big(\tfrac{\k'_{6i+5}}{\k_{6i+5}}\Big)^{s} , &   \alpha_{6i+6}(s) & := \k'_{6i+6}\Big(\tfrac{\k_{6i+6}}{\k'_{6i+6}}\Big)^{1-s} , &   \alpha_{6i+4}(s) & :=  \tfrac{\k'_{6i+4}}{\k'_{6i+5}+ \k'_{6i+6}} \big(\alpha_{6i+5}(s)  + \alpha_{6i+6}(s) \big).
 \end{align*}
We have $\alpha(0)=\xi(1)$ and $\alpha(1)=\k'$. Furthermore, $\eta(\alpha(s))= \eta(\kappa')$ for all $s\in [0,1]$ and hence, $\alpha(s)$ precludes multistationarity. 
Thus, $\alpha$ gives the desired path and concludes the proof.
	\end{proof}

	\appendix
	\section{} \label{sec:appendix}
	In this appendix we compute the determinant of the matrix $J$ in \eqref{matrix:Jac2}, which we rewrite here for convenience:
	\begin{equation*} \label{matrix:Jac3}  {
		J:= \begin{bmatrix}
		1+  \sum_{i=0}^{n-1} (i+1)\, a_i \, x_2^i x_3& - \sum_{i=0}^{n-1} i\, a_i \, x_2^{i+1}  x_3&\sum_{i=0}^{n-1}  a_i  \, x_1x_2^i \\ 
		\sum_{i=0}^{n-1} (i+1)\, b_i  \,x_2^i x_3 & 1 - \sum_{i=0}^{n-1} i \,b_i\,  x_2^{i+1} x_3&\sum_{i=0}^{n-1}  b_i  \, x_1 x_2^{i} \\ 
		-1+\sum_{i=0}^{n-1} (i+1) \, c_i \,  x_1^{-1} x_2^{i+1}  x_3 & -1 - \sum_{i=0}^{n-1} (i+1) \, c_i \, x_1^{-1} x_2^{i+2}  x_3  & 1+\sum_{i=0}^{n-1} c_i \, x_2^{i+1} 
			\end{bmatrix}.
	}
	\end{equation*}
	
	Note that $x_3$ appears with degree one in each of the entries of columns $1$ and $2$ but the third column does not depend on $x_3$. Therefore, the determinant of $J$
	can be written as a quadratic polynomial in $x_3$:
	\[ P:=\det J= A_2x_3^2+A_1x_3+A_0. \]
	Specifically, after noticing that the first two columns can be written as a sum of two vectors, one independent of $x_3$ and the other a multiple of $x_3$, we write the determinant of $J$ as 
	\[ P = \det J_1+ (\det J_2 + \det J_3)x_3 + (\det J_4) x_3^2,\]%
	where
	\begingroup
	\allowdisplaybreaks
	\begin{align*}
	J_1 & = \begin{bmatrix}
	1 &  0 &\sum_{i=0}^{n-1}  a_i  \, x_1x_2^i \\ 
	0 & 1 &\sum_{i=0}^{n-1}  b_i  \, x_1 x_2^{i} \\ 
	-1 & -1  & 1+\sum_{i=0}^{n-1} c_i \, x_2^{i+1} 
	\end{bmatrix}, \\ 
	J_2 & = \begin{bmatrix}
	1  & - \sum_{i=0}^{n-1} i\, a_i \, x_2^{i+1}  &\sum_{i=0}^{n-1}  a_i  \, x_1x_2^i \\ 
	0 & - \sum_{i=0}^{n-1} i \,b_i\,  x_2^{i+1} &\sum_{i=0}^{n-1}  b_i  \, x_1 x_2^{i} \\ 
	-1&  - \sum_{i=0}^{n-1} (i+1) \, c_i \, x_1^{-1} x_2^{i+2}     & 1+\sum_{i=0}^{n-1} c_i \, x_2^{i+1} 		\end{bmatrix},  \\
	J_3 & = \begin{bmatrix}
	\sum_{i=0}^{n-1} (i+1)\, a_i \, x_2^i & 0 &\sum_{i=0}^{n-1}  a_i  \, x_1x_2^i \\ 
	\sum_{i=0}^{n-1} (i+1)\, b_i  \,x_2^i  & 1&\sum_{i=0}^{n-1}  b_i  \, x_1 x_2^{i} \\ 
	\sum_{i=0}^{n-1} (i+1) \, c_i \,  x_1^{-1} x_2^{i+1}   & -1   & 1+\sum_{i=0}^{n-1} c_i \, x_2^{i+1} 		\end{bmatrix},   \\
	J_4 & = \begin{bmatrix}
	\sum_{i=0}^{n-1} (i+1)\, a_i \, x_2^i & - \sum_{i=0}^{n-1} i\, a_i \, x_2^{i+1}  &\sum_{i=0}^{n-1}  a_i  \, x_1x_2^i \\ 
	\sum_{i=0}^{n-1} (i+1)\, b_i  \,x_2^i & - \sum_{i=0}^{n-1} i \,b_i\,  x_2^{i+1} &\sum_{i=0}^{n-1}  b_i  \, x_1 x_2^{i} \\ 
	\sum_{i=0}^{n-1} (i+1) \, c_i \,  x_1^{-1} x_2^{i+1}    &  -\sum_{i=0}^{n-1} (i+1) \, c_i \, x_1^{-1} x_2^{i+2}   & 1+\sum_{i=0}^{n-1} c_i \, x_2^{i+1} 		\end{bmatrix}. 
	\end{align*}
	\endgroup
	
	We proceed to compute each coefficient $A_0,A_1,A_2$ using the matrices $J_i$, $i=1,2,3,4$. 
	
	\medskip
	\noindent
	\paragraph{\textbf{Computing $A_0$}}
	The determinant of $J_1$ is simply	
	\begin{equation*}\label{eqn:coeff0}
	A_0 = 1+\sum_{i=0}^{n-1} c_i \, x_2^{i+1} +  \sum_{i=0}^{n-1} b_i x_1 x_2^{i}+  \sum_{i=0}^{n-1} a_i x_1 x_2^{i}.
	\end{equation*}

	\noindent
	\paragraph{\textbf{Computing $A_2$}}
	We compute now the determinant of $J_4$.  After adding the first column multiplied by $x_2$ to the second column, $\det J_4$ agrees with the determinant of
	\[ J_4'   = \begin{bmatrix}
	\sum_{i=0}^{n-1} (i+1)\, a_i \, x_2^i &  \sum_{i=0}^{n-1}  a_i \, x_2^{i+1}  &\sum_{i=0}^{n-1}  a_i  \, x_1x_2^i \\ 
	\sum_{i=0}^{n-1} (i+1)\, b_i  \,x_2^i &  \sum_{i=0}^{n-1} \,b_i\,  x_2^{i+1} &\sum_{i=0}^{n-1}  b_i  \, x_1 x_2^{i} \\ 
	\sum_{i=0}^{n-1} (i+1) \, c_i \,  x_1^{-1} x_2^{i+1}    &  0   & 1+\sum_{i=0}^{n-1} c_i \, x_2^{i+1} 
	\end{bmatrix}. \] 
	Subtraction of the second column of $J_4'$ multiplied by $x_1/x_2$ from the third column gives 
	that $\det J_4=\det J_4'=\det J_4''$ with 
	\[ J_4''   = \begin{bmatrix}
	\sum_{i=0}^{n-1} (i+1)\, a_i \, x_2^i &  \sum_{i=0}^{n-1}  a_i \, x_2^{i+1}  & 0 \\ 
	\sum_{i=0}^{n-1} (i+1)\, b_i  \,x_2^i &  \sum_{i=0}^{n-1} \,b_i\,  x_2^{i+1} &0  \\ 
	\sum_{i=0}^{n-1} (i+1) \, c_i \,  x_1^{-1} x_2^{i+1}    &  0   & 1+\sum_{i=0}^{n-1} c_i \, x_2^{i+1} 
	\end{bmatrix}. \] 
	Hence, we compute the determinant of $J_4''$. We start by finding the principal minor obtained by removing the last row and column of $J_4''$:
	\begin{align*}
	M &= \left( \sum_{i=0}^{n-1} (i+1)\, a_i \, x_2^i  \right) \left(\sum_{i=0}^{n-1} \,b_i\,  x_2^{i+1} \right ) - 
	\left(\sum_{i=0}^{n-1}  a_i \, x_2^{i+1} \right) \left( \sum_{i=0}^{n-1} (i+1)\, b_i  \,x_2^i  \right) \\ &=
	\left( \sum_{i=0}^{n-1}\sum_{j=0}^{n-1} (i+1)\, a_i \, b_j x_2^{i+j+1} \right ) - 
	\left(\sum_{i=0}^{n-1}  \sum_{j=0}^{n-1}  a_i  (j+1)\, b_j  \,x_2^{i+j +1} \right) \\
	&=
	\sum_{i=0}^{n-1}\sum_{j=0}^{n-1} (i-j)\, a_i \, b_j x_2^{i+j+1}  = 
	\sum_{\ell=1}^{2n-2} \sum_{i+j=\ell} (i-j)\, a_i \, b_j x_2^{\ell +1}= 
	\sum_{\ell=1}^{2n-3} \sum_{i+j=\ell} (i-j)\, a_i \, b_j x_2^{\ell +1},
	\end{align*}
	where in the last equality we note that the term vanished for $\ell=2n-2$.
	From this, we obtain
	\begin{align*}
	A_2= \det J_4 & = \left(1+\sum_{i=0}^{n-1} c_i \, x_2^{i+1}  \right) \left({\sum_{\ell=1}^{2n-3}} \sum_{i+j=\ell} (i-j)\, a_i \, b_j \, x_2^{\ell +1} \right). 
	\end{align*}

	\noindent
	\paragraph{\textbf{Computing $A_1$}}
	We start by finding $\det J_2$. 
	\[ 	\det J_2  = \det \begin{bmatrix}
	1  & - \sum_{i=0}^{n-1} i\, a_i \, x_2^{i+1}  &\sum_{i=0}^{n-1}  a_i  \, x_1x_2^i \\ 
	0 & - \sum_{i=0}^{n-1} i \,b_i\,  x_2^{i+1} &\sum_{i=0}^{n-1}  b_i  \, x_1 x_2^{i} \\ 
	-1&  - \sum_{i=0}^{n-1} (i+1) \, c_i \, x_1^{-1} x_2^{i+2}     & \sum_{i=0}^{n-1} c_i \, x_2^{i+1} 	   
	\end{bmatrix}  + \det  \begin{bmatrix}
	1  & - \sum_{i=0}^{n-1} i\, a_i \, x_2^{i+1}  & 0 \\ 
	0 & - \sum_{i=0}^{n-1} i \,b_i\,  x_2^{i+1} & 0  \\ 
	-1&  - \sum_{i=0}^{n-1} (i+1) \, c_i \, x_1^{-1} x_2^{i+2}     & 1
	\end{bmatrix}.  \]
	The second determinant is simply
	\[ D_2:= - \sum_{i=0}^{n-1} i \,b_i\,  x_2^{i+1} .\]
	For the first determinant $D_1$, 
	adding the first row to the last gives
	\[ D_1 = \det  \begin{bmatrix}
	1  & - \sum_{i=0}^{n-1} i\, a_i \, x_2^{i+1}  &\sum_{i=0}^{n-1}  a_i  \, x_1x_2^i \\ 
	0 & - \sum_{i=0}^{n-1} i \,b_i\,  x_2^{i+1} &\sum_{i=0}^{n-1}  b_i  \, x_1 x_2^{i} \\ 
	0 &  - \sum_{i=0}^{n-1} ( i\, a_i \, x_2^{i+1} +  (i+1) \, c_i \, x_1^{-1} x_2^{i+2})   & \sum_{i=0}^{n-1}  (a_i  \, x_1x_2^i + c_i \, x_2^{i+1}) 	   
	\end{bmatrix},   \]
	and hence 
	\begingroup
	\allowdisplaybreaks
	\begin{align*}
	D_1 &= - \left( \sum_{i=0}^{n-1} i \,b_i\,  x_2^{i+1}  \right)   \left(  \sum_{i=0}^{n-1}  (a_i  \, x_1x_2^i + c_i \, x_2^{i+1}) 	    \right) 
	+  \left( \sum_{i=0}^{n-1}  b_i  \, x_1 x_2^{i}  \right)  \left( \sum_{i=0}^{n-1} ( i\, a_i \, x_2^{i+1} +   (i+1) \, c_i \, x_1^{-1} x_2^{i+2})  \right)  
	\\ 
	&=   \sum_{i=0}^{n-1}  \sum_{j=0}^{n-1}  (j- i) \,  a_j   \,b_i\,  x_1 x_2^{i+j+1}  
	+   \sum_{i=0}^{n-1} \sum_{j=0}^{n-1}  (j+1-i) \,b_i\, c_j \,   x_2^{i+j+2}    
	\\ 
	&=  \sum_{\ell = 1 }^{2n-3} \sum_{i+j=\ell}  (j- i) \,  a_j  \,b_i\,  x_1 x_2^{\ell+1}  
	+   \sum_{\ell = 0 }^{2n-2} \sum_{i+j=\ell}   (j+1-i) \,b_i\, c_j \,   x_2^{\ell +2}.  
		\end{align*}
	\endgroup
	This gives $\det J_2=D_1+D_2$.
	
	We proceed similarly {with} the computation of $\det J_3$: 
	\begin{align*}
	\det J_3  & =\det  \begin{bmatrix}
	\sum_{i=0}^{n-1} (i+1)\, a_i \, x_2^i & 0 &\sum_{i=0}^{n-1}  a_i  \, x_1x_2^i \\ 
	\sum_{i=0}^{n-1} (i+1)\, b_i  \,x_2^i  & 1&\sum_{i=0}^{n-1}  b_i  \, x_1 x_2^{i} \\ 
	\sum_{i=0}^{n-1} (i+1) \, c_i \,  x_1^{-1} x_2^{i+1} & -1   & \sum_{i=0}^{n-1} c_i \, x_2^{i+1} 		 
	\end{bmatrix}  \\ & + \det  \begin{bmatrix}
	\sum_{i=0}^{n-1} (i+1)\, a_i \, x_2^i & 0 &0 \\ 
	\sum_{i=0}^{n-1} (i+1)\, b_i  \,x_2^i  & 1&0 \\ 
	\sum_{i=0}^{n-1} (i+1) \, c_i \,  x_1^{-1} x_2^{i+1}   & -1   & 1 
	\end{bmatrix}.     
	\end{align*}
	The second determinant $C_2$ is
	\[ C_2 =\sum_{i=0}^{n-1} (i+1)\, a_i \, x_2^i . \]
	For the first determinant $C_1$, we have 
	\[ C_1 = \det  \begin{bmatrix}
	\sum_{i=0}^{n-1} (i+1)\, a_i \, x_2^i & 0 &\sum_{i=0}^{n-1}  a_i  \, x_1x_2^i \\ 
	\sum_{i=0}^{n-1} (i+1)\, b_i  \,x_2^i  & 1&\sum_{i=0}^{n-1}  b_i  \, x_1 x_2^{i} \\ 
	\sum_{i=0}^{n-1} (i+1)\, ( b_i  \,x_2^i + \, c_i \,  x_1^{-1} x_2^{i+1})  & 0   & \sum_{i=0}^{n-1}  (b_i  \, x_1 x_2^{i}  +  c_i \, x_2^{i+1} 		)		\end{bmatrix},  \] 
	and hence
	\begin{align*}
	C_1 &=  \left( \sum_{i=0}^{n-1} (i+1) \,a_i\,  x_2^{i}  \right)   \left( \sum_{i=0}^{n-1}  (b_i  \, x_1 x_2^{i}  +  c_i \, x_2^{i+1} )	 \right) 
	-  \left( \sum_{i=0}^{n-1}  a_i  \, x_1 x_2^{i}  \right)  \left( \sum_{i=0}^{n-1} (i+1)\, ( b_i  \,x_2^i + \, c_i \,  x_1^{-1} x_2^{i+1}) \right)  
	\\ 
	&=   \sum_{i=0}^{n-1}  \sum_{j=0}^{n-1}  (i- j)  \, a_i \,b_j\,  x_1 x_2^{i+j}  
	+   \sum_{i=0}^{n-1} \sum_{j=0}^{n-1}  (i-j) \,a_i\, c_j \,   x_2^{i+j+1}    
	\\ 
	&=  \sum_{\ell = 1 }^{2n-3} \sum_{i+j=\ell}  (i-j) \, a_i  \,b_j\,  x_1 x_2^{\ell}  
	+   \sum_{\ell = 1 }^{2n-3} \sum_{i+j=\ell}   (i-j) \,a_i\, c_j \,   x_2^{\ell+1}.  
	\end{align*}
	
	Putting it all together, we get
	\begingroup
	\allowdisplaybreaks
	\begin{align*}
	A_1 &=C_2+D_2  +  C_1+D_1\\ 
	&= \sum_{\ell=0}^{n-1} (\ell+1)\, a_\ell \, x_2^\ell  - \sum_{\ell=0}^{n-1} \ell \,b_\ell \,  x_2^{\ell +1} + 
	\sum_{\ell = 1 }^{2n-3} \sum_{i+j=\ell}  (j- i)\,  a_j    \,b_i\,  x_1 x_2^{\ell+1}    \\
	&
	+   \sum_{\ell = 0 }^{2n-2} \sum_{i+j=\ell}   (j+1-i) \,b_i\, c_j \,   x_2^{\ell +2}
	+ \sum_{\ell = 1 }^{2n-3} \sum_{i+j=\ell}  (i-j) a_i \,  b_j\,  x_1 x_2^{\ell}  
	+   \sum_{\ell = 1 }^{2n-3} \sum_{i+j=\ell}   (i-j) \,a_i\, c_j \,   x_2^{\ell+1}  
	\\ 
	&= \sum_{\ell=0}^{n-1} (\ell+1)\, a_\ell \, x_2^\ell  -  \sum_{\ell=0}^{n-1} \ell \,b_\ell \,  x_2^{\ell +1} 
	+   \sum_{\ell = 0 }^{2n-2} \sum_{i+j=\ell}   (j+1-i) \,b_i\, c_j \,   x_2^{\ell +2} +   \sum_{\ell = 1 }^{2n-3} \sum_{i+j=\ell}   (i-j) \,a_i\, c_j \,   x_2^{\ell+1}  \\
	&\quad + 
	x_1 (1+x_2) \left( \sum_{\ell = 1 }^{2n-3} \sum_{i+j=\ell}  (j- i)\,   a_j \,b_i\,   x_2^{\ell}    \right).
	\end{align*}
	\endgroup

	\section{}
	\label{appendix:Circuit}
	In this part of the appendix, we provide a supplementary lemma to prove Theorem~\ref{Proposition:CircuitNonEmptyIntersection}.

	\begin{lemma}
		\label{Lemma:AppendixCoefs}
		Let  	$\kappa(k)$ as given in \eqref{eqn:oneparamfamily}. 
If $\ell \in \left\{2,\dots,2n-2\right\}$, $\ell\neq n$, then the coefficient $\mathcal{C}_{\ell}(k)$ of $x_2^{\ell}$ in $A_{10}$ is positive for any $k\in \R_{>0}$.  
	\end{lemma}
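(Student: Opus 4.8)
The plan is to substitute the explicit values of \eqref{eq:T_under_assump} into the formula for $A_{10}$ in \eqref{eq:p_decomp} and track the coefficient of $x_2^\ell$ through each of its four sums. Under \eqref{eqn:oneparamfamily} one has $b_i=c_i=T_i$, with $a_i=\tfrac{1}{n+1}$ and $T_i=\tfrac{n-i}{n+1}$ for $i\le n-2$, while $a_{n-1}=T_{n-1}=\tfrac{k}{n+1}$. Writing $\mathcal{C}_\ell(k)=S_1+S_2+S_3+S_4$ for the four contributions coming respectively from the four summands of $A_{10}$, one reads off
\[
S_1=(\ell+1)a_\ell\ [\ell\le n-1],\quad S_2=-(\ell-1)T_{\ell-1}\ [2\le\ell\le n],\quad S_3=\!\!\sum_{i+j=\ell-2}\!\!(j+1-i)T_iT_j,\quad S_4=\!\!\sum_{i+j=\ell-1}\!\!(i-j)a_iT_j.
\]

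The two decisive simplifications are for $S_3$ and $S_4$. Since $\sum_{i+j=m}(j-i)T_iT_j=0$ by the symmetry $i\leftrightarrow j$, the cross term in $S_3$ collapses and $S_3=\sum_{i+j=\ell-2}T_iT_j>0$. For $S_4$, antisymmetrizing gives $S_4=\sum_{i>j,\,i+j=\ell-1}(i-j)(a_iT_j-a_jT_i)$, and a direct evaluation yields $a_iT_j-a_jT_i=\tfrac{i-j}{(n+1)^2}\ge 0$ when $i,j\le n-2$, and $a_{n-1}T_j-a_jT_{n-1}=\tfrac{k(n-1-j)}{(n+1)^2}\ge 0$ when $j\le n-2$. (These differences are exactly the $2\times 2$ minors of $M_{\k}$ that Proposition~\ref{prop:AllMinorsPos} already guarantees to be positive.) Hence $S_4\ge 0$, so $S_3+S_4>0$ always, and the only obstruction to positivity is the negative term $S_2$.

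Next I would isolate where $k$ enters: since $k$ appears only through the index $n-1$, three cases arise. For $\ell\ge n+1$ one has $S_1=S_2=0$, so positivity is immediate from $S_3+S_4>0$. For $\ell=n-1$ the parameter $k$ enters only through $S_1=\tfrac{nk}{n+1}>0$, while $S_2+S_3+S_4$ is $k$-free; here a short computation gives $(n+1)^2(S_2+S_3)=(n-2)\bigl[n-2+\tfrac{(n-3)(n-4)}{6}\bigr]\ge 0$, and with $S_4\ge 0$ the coefficient is positive. For $2\le\ell\le n-2$ the whole coefficient $\mathcal{C}_\ell(k)$ is independent of $k$; evaluating $S_3$ in closed form (its $(n+1)^2$-multiple is $(\ell-1)\bigl[n(n-\ell+2)+\tfrac{(\ell-2)(\ell-3)}{6}\bigr]$) and combining with $S_1+S_2$, the potentially large cancellation between $S_2$ and $S_3$ telescopes into
\[
(n+1)^2\,(S_1+S_2+S_3)=(n+1)(\ell+1)+(\ell-1)^2+\tfrac{(\ell-1)(\ell-2)(\ell-3)}{6},
\]
a sum of manifestly nonnegative terms with strictly positive leading term $(n+1)(\ell+1)$; adding $S_4\ge 0$ closes this range.

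The main obstacle is precisely the negative contribution $S_2=-\sum_\ell \ell\,b_\ell x_2^{\ell+1}$: relative to $S_1$ it is largest when $\ell$ is close to $n$, so it cannot simply be discarded but must be shown to be dominated by the quadratic contributions $S_1+S_3$. The payoff of the approach above is that, once the antisymmetric part of $S_3$ is removed and the elementary sum $\sum_{i=0}^{\ell-2}(n-i)(n-\ell+2+i)$ is carried out, the cancellation against $S_2$ is exact and produces the clean nonnegative expression displayed, so no sharp estimate is ever needed. The degenerate small instances ($n=2$, where the index set $\{2,\dots,2n-2\}\setminus\{n\}$ is empty, and the cases $\ell\in\{2,3\}$ where the cubic term vanishes) require no separate argument, as the formulas above already cover them.
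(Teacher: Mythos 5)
Your proposal is correct and takes essentially the same route as the paper's proof: the identical four-sum decomposition of $\mathcal{C}_\ell(k)$, the same symmetrization that collapses $S_3$ to $\sum_{i+j=\ell-2}T_iT_j$, the same antisymmetrization of $S_4$ via the explicit differences $a_ic_j-a_jc_i$ (the paper's equation \eqref{eq:ac_differences}), and the same case split around $\ell=n$. The only difference is in the last step for $2\le\ell\le n-1$: where the paper dominates the negative term $(\ell-1)b_{\ell-1}$ by bounding each of the $\ell-1$ summands of $(n+1)^2\sum_{i+j=\ell-2}T_iT_j$ below by $(n+1)(n-\ell+1)$, you evaluate the sums in closed form and exhibit the exact cancellation --- a harmless variant whose identities I checked and which are correct.
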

	
	\begin{proof}
		First, we consider the case $\ell \in \left\{n+1,\dots,2n-2\right\}$, and note that in this case
		\begin{align*}
		\mathcal{C}_{\ell} (k) &= \sum_{i+j=\ell-2}   (j+1-i) \,b_i\, c_j \ +   \sum_{i+j=\ell-1}   (i-j) \,a_i\, c_j \\
		&=\sum_{ i+j=\ell-2 }   T_i\, T_j \  +  \sum_{\substack{i+j=\ell-1 \\ i > j}}   (i-j) \left(\,a_i\, c_j 	-\,a_j\, c_i  \right),
		\end{align*}
as $b_i = c_i =T_i$ for any $i \in \left\{0,\dots,n-1\right\}$ for $\kappa(k)$, and $\sum_{ i+j=\ell-2 }  (j+1-i) T_i\, T_j  = 
\sum_{ i+j=\ell-2 }  T_i\, T_j$.  
	
		The first summand is positive since each $T_i$ is positive. Likewise, the last summand is also positive, because for any $i,j \in \left\{0,\dots,n-1\right\}$ such that $i>j$,  we find using \eqref{eq:T_under_assump}:
		\begin{align}
		\label{eq:ac_differences}
		\,a_i\, c_j 	-\,a_j\, c_i  =K_iT_{i-1}T_j-K_jT_{j-1}T_i = 		
		\begin{cases} 
		\frac{i-j}{(n+1)^2} & \text{if } i \in \left\{1 \dots, n-2 \right\}, \\
		\frac{k}{(n+1)^2}(n-j-1) & \text{if } i = n-1. \\
		\end{cases}
		\end{align} 
		Therefore $\mathcal{C}_{\ell}(k)>0$ for   $\ell \in \left\{n+1,\dots,2n-2\right\}$. 
		
The coefficient $\mathcal{C}_{\ell}(k)$ for $\ell \in \left\{ 2, \dots,n-1\right\}$ is given by		
\begin{align*}
 \mathcal{C}_{\ell}(k) & = 
		(\ell+1)\, a_\ell  -  (\ell-1) \,b_{\ell-1} 
		+    \sum_{ i+j=\ell-2 } \,T_i\, T_j \  +   \sum_{\substack{i+j=\ell-1 \\ i > j}}   (i-j) \left(\,a_i\, c_j 	-\,a_j\, c_i  \right) \\ 
	&=	\left(  -  (\ell-1) \,b_{\ell-1} 
		+    \sum_{ i+j=\ell-2 } \,T_i\, T_j \right)  + (\ell+1)\, a_\ell +   \sum_{\substack{i+j=\ell-1 \\ i > j}}   (i-j) \left(\,a_i\, c_j 	-\,a_j\, c_i  \right),
\end{align*}
where the second equality is simply a reordering of terms. The second and third terms are positive 
by \eqref{eq:ac_differences}. We will show that the first term is also positive. 

We point out that  $T_{n-1}$ does not appear in any sum. For $\ell=2$, the first term is $-  (\ell-1) \,T_{\ell-1} \ + T_{0}^2=T_0^2>0$ and hence, $\mathcal{C}_2(k)>0.$ For $\ell>2$,  
 \eqref{eq:T_under_assump}  gives $T_iT_j=\frac{(n-i)(n-j)}{(n+1)^2}$, and we obtain 
	\begin{align*}			
	\label{eq:small_degrees2}
	(n+1)^2 \sum_{ i+j=\ell-2 } \,T_i\, T_j & = 
 \sum_{i+j=\ell-2 } (n-i)(n-j) \notag \   =  \sum_{i=0}^{\ell-2} (n-i)(n-\ell+2+i)\\
	&=  \sum_{i=0}^{\ell-2} \Big((n+1)-(i+1)\Big)\Big((n-\ell+1)+(i+1)\Big)  \\ & > (\ell-1)(n-\ell+1)(n+1) = 
	(n+1)^2 (\ell-1) b_{\ell-1}.   
	\end{align*} 
This gives $\mathcal{C}_{\ell}(k)>0$ and concludes the proof. 
	\end{proof}

\medskip
\noindent
\paragraph{\textbf{Acknowledgements. }}
{We thank the anonymous referees for their helpful comments.} 
EF acknowledges funding from the Independent Research Fund of Denmark. TdW is supported by the Deutsche Forschungsgemeinschaft (DFG, German Research Foundation) Emmy Noether Programme, grant WO 2206/1-1. OY is funded by the Deutsche Forschungsgemeinschaft (DFG, German Research Foundation) under Germany's Excellence Strategy – The Berlin Mathematics Research Center MATH+ (EXC-2046/1, project ID 390685689, sub-project AA1-9).

\bibliographystyle{plain}

\end{document}